\newcommand{\one}{\mathbb{1}}
\newcommand{\bra}[1]{\left\langle #1\right\rvert}
\newcommand{\ket}[1]{\left\lvert#1\right\rangle}
\newcommand{\braket}[2]{\left\langle #1\middle|#2\right\rangle}
\newcommand{\ketbra}[2]{\left\lvert #1\middle\rangle\middle\langle#2\right\rvert}
\newcommand{\proj}[1]{\ketbra{#1}{#1}}
\newcommand{\braopket}[3]{\left\langle #1\middle|#2\middle|#3\right\rangle}
\newcommand{\dif}{\,\mathrm{d}}
\newcommand{\abs}[1]{\left|#1\right|}
\newcommand{\norm}[1]{\left\|#1\right\|}
\newcommand{\set}[2]{\left\lbrace #1 \middle| #2 \right\rbrace}
\let\Re\relax
\DeclareMathOperator{\Re}{Re}
\DeclareMathOperator{\Tr}{Tr}
\DeclareMathOperator{\Span}{Span}
\DeclareMathOperator{\Trunc}{Trunc}
\DeclareMathOperator{\ls}{ls} 
\DeclareMathOperator{\cont}{cont} 
\theoremstyle{plain}
\newtheorem{thm}{Theorem} 
\newtheorem{lem}[thm]{Lemma}
\newtheorem{clm}[thm]{Claim} 
\newtheorem{applem}{Lemma}[section]
\theoremstyle{definition}
\newtheorem{defn}[thm]{Definition}
\declaretheoremstyle[notefont=\bfseries,notebraces={}{},headpunct={},postheadspace=1em]{mystyle}
\declaretheorem[style=mystyle,numbered=no,name=Claim]{clmhand} 
\newenvironment{proc}[1]{
\vspace{-.5cm}
\begin{center}
\rule{.95\textwidth}{1pt}
\begin{minipage}{.9\textwidth}
~\\
\textsc{#1}:%
}{
\vspace{.2cm}
\end{minipage}
\rule{.95\textwidth}{1pt}
\end{center}
\vspace{-.3cm}
}
\title{Computing the Degenerate Ground Space of Gapped Spin Chains in Polynomial Time}
\author{Christopher T.\ Chubb~~and~~Steven T.\ Flammia\\
	{\footnotesize\itshape 
		Centre for Engineered Quantum Systems, School of Physics,}\\
	{\footnotesize\itshape
		The University of Sydney, Sydney, NSW 2006, Australia.}}
\begin{document}
\maketitle
\begin{abstract}
Given a gapped Hamiltonian of a spin chain, we give a polynomial-time algorithm for finding the degenerate ground space projector. The output is an orthonormal set of matrix product states that approximate the true ground space projector up to an inverse polynomial error in any Schatten norm, with a runtime exponential in the degeneracy. Our algorithm is an extension of the recent algorithm of Landau, Vazirani, and Vidick~\cite{LandauVaziraniVidick2013} for the nondegenerate case, and it includes the recent improvements due to Huang~\cite{Huang2014'}. The main new idea is to incorporate the local distinguishability of ground states on the half-chain to ensure that the algorithm returns a complete set of global ground states.
\end{abstract}


Strongly correlated quantum systems are at the heart of such diverse physical phenomena as frustrated magnets, high-temperature superconductors, and topological quantum phases, to name but a few, and understanding their low-temperature behaviour is a grand challenge at the interface of physics and computer science. The computational tools to study these systems numerically are however inevitably stymied by the exponential growth of the Hilbert space of an $n$-particle system. 

Many numerical tools have been developed to surmount the computational difficulties surrounding the exponential growth of the state space. Tools like quantum Monte Carlo~\cite{DosSantos2003,FoulkesMitasNeedsRajagopal2001}, coupled cluster methods~\cite{BartlettMusial2007}, and the density-matrix renormalisation group (DMRG)~\cite{Schollwock2011} all exploit some underlying physical structure to perform heuristically efficient simulations of interacting quantum systems. However, like all heuristics they often lack basic theoretical guarantees except in certain special cases, and some of them have failure modes that are difficult to diagnose or characterise. This makes it hard to precisely determine  which physical features enable efficient simulation and which features might be responsible in the cases where the heuristic fails. 

Very recently, a success story has emerged in the rigorous understanding of gapped 1D spin chains. Hastings' proof of the 1D area law~\cite{Hastings2007} established that approximating the ground state of a gapped spin chain is in \NP. The proof technique crucially used the idea of matrix product states (MPS)~\cite{VerstraeteMurgCirac2008}, an ansatz for quantum systems that exploits the area law structure of entanglement~\cite{VerstraeteCirac2004}. The first rigorous nontrivial algorithm for finding ground states, due independently to Schuch and Cirac~\cite{SchuchCirac2010} and Aharonov, Arad, and Irani~\cite{AharonovAradIrani2010}, used dynamic programming to find MPS ground states. The runtime of the algorithm was exponential in the bond dimension, and since Hastings' result used a polynomial bond dimension to achieve a sufficiently accurate approximation, the runtime could only be guaranteed to be exponential for general gapped 1D systems. The first subexponential-time algorithm was found by Arad, Kitaev, Landau, and Vazirani~\cite{AradKitaevLandauVazirani2013}. They used the same algorithm as Refs.~\cite{SchuchCirac2010, AharonovAradIrani2010}, but showed that the ground state could be approximated to sufficient accuracy with an MPS of bond dimension $n^{o(1)}$, where $n$ is the length of the chain. This line of research culminated in the breakthrough algorithm of Landau, Vazirani, and Vidick (LVV)~\cite{LandauVaziraniVidick2013}, who gave the first polynomial-time algorithm for approximating the ground state in the case of a non-degenerate ground space. This was subsequently improved by Huang~\cite{Huang2014'} who gave a deterministic algorithm with an exponentially improved scaling with respect to the gap that also applied to highly frustrated systems. 

In this paper, we extend the LVV algorithm and its subsequent improvements by Huang to the case where the Hamiltonian has a degenerate ground space. To give a precise statement of our results, we first introduce some notation.

Consider a 1D chain of $d$-level quantum systems (\emph{qudits}) with $d = \mathcal{O}(1)$ and a Hamiltonian $H$ that is a sum of local terms on contiguous sites, and let $n$ be the number of sites in the chain. After sufficient coarse-graining, shifting the energy, and dividing by the largest operator norm from the terms in the Hamiltonian, we can take the Hamiltonian to be of the form $H=\sum_{i=1}^{n-1}H_i$, where $0\leq H_i\leq \one$ and $H_i$ acts non-trivially only on the $i$ and $i+1$ qudits. We additionally require that this transformed Hamiltonian has a constant lower bound $\epsilon_1-\epsilon_0 \ge \epsilon \ge \Omega(1)$ on the spectral gap, i.e.\ the difference between the first excited energy ($\epsilon_1$) and the ground state energy ($\epsilon_0$) of $H$ is bounded from below by $\epsilon$ independent of $n$. Our Hamiltonians will have a degenerate spectrum in the ground space with a degeneracy $g = \mathcal{O}(1)$. We say that such a Hamiltonian is in \emph{standard form}, though this is not a unique normal form because of the ambiguity in the coarse-graining step. 

Our main result is an approximation of the ground space in the form of an orthonormal basis of matrix product states that approximate the true ground space projector. More precisely, we have the following.
\begin{thm}[Main result]
	\label{thm:main}
	Given an $n$-qudit Hamiltonian $H$ in standard form and constant bounds $\epsilon$ and $g$ on the gap and degeneracy respectively, then for every sufficiently small approximation error $\eta$ satisfying $1/\eta = n^{\mathcal{O}(1)}$, Algorithm~\ref{alg:overall} constructs the orthonormal approximate ground states $\lbrace\ket{\gamma_j}\rbrace_j$ in matrix product form for $1\leq j\leq g$ in runtime $n^{\mathcal{O}(1)}$. The approximation error gives a bound in the Frobenius norm of
	\[ \bigl\lVert G-\Upsilon \bigr\rVert_F \leq \eta\,, \]
	where $G$ is the true ground space projector, and $\Upsilon :=\sum_{j=1}^{g}\ketbra{\gamma_j}{\gamma_j}$ is the approximate ground space projector.
\end{thm}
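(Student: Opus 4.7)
The plan is to follow the LVV/Huang bootstrap architecture but to maintain a $g$-dimensional \emph{viable subspace} at each intermediate step rather than a single viable vector. Concretely, for each prefix $[1,i]$ of the chain I would keep an orthonormal set of $g$ matrix product states whose span closely approximates the restriction of the true ground space projector $G$ to the first $i$ sites. As in LVV, extending from $i$ to $i+1$ is driven by an AGSP that contracts the error to the true ground space while having small Schmidt rank across the active cut; after each AGSP application the bond dimension across that cut blows up, so a truncation step is required to keep complexity polynomial in $n$.

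The crux, and the main new ingredient relative to LVV, is that truncation must be performed so that it preserves a full $g$-dimensional basis. For this I would exploit the local distinguishability of the ground space on the half-chain: the restrictions of the $g$ ground states to either half of any bulk cut remain linearly independent in an appropriately quantitative sense, so the singular spectrum of the AGSP-evolved viable subspace across the cut must carry at least $g$ significant singular values. A carefully chosen truncation---keeping the top $n^{\mathcal{O}(1)}$ Schmidt components and re-orthogonalising the resulting $g$ candidates by a Gram--Schmidt step---should therefore preserve all $g$ directions up to a small Frobenius error. This is the precise place where a naive application of the nondegenerate procedure can fail, because truncation on its own could otherwise collapse the $g$-dimensional viable subspace onto one of lower dimension.

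The error analysis would then proceed by composition: if at step $i$ the span of the $g$ candidate MPS approximates the projected ground space to Frobenius distance $\eta_i$, then a single AGSP-plus-truncation step produces a new span at distance at most $c\,\eta_i + \delta$, where $c<1$ is the AGSP contraction factor and $\delta$ is the truncation residual controlled by the Schmidt decay for degenerate ground spaces, itself guaranteed by a degenerate version of the 1D area law. Iterating over all $n-1$ cuts and choosing internal parameters so that $\delta = \eta/\mathrm{poly}(n)$ yields the final bound $\bigl\lVert G - \Upsilon \bigr\rVert_F \leq \eta$. The $g$ output states $\ket{\gamma_j}$ are obtained by a final Gram--Schmidt over the full-chain viable subspace, and the runtime is $n^{\mathcal{O}(1)}$ because each bootstrap step costs $\mathrm{poly}(n, g, 1/\eta)$ (with the exponential dependence on $g$ hidden in the constants of the viable-subspace size).

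The main obstacle I anticipate is a quantitative form of half-chain distinguishability: one must show that the $g$-th largest Schmidt coefficient of the viable subspace, measured across each bulk cut, is bounded below by an inverse polynomial throughout the entire bootstrap. If two ground states became nearly indistinguishable across some cut then the Gram--Schmidt step would be badly conditioned and the $g$-dimensional structure could be lost to numerical collapse. Establishing this quantitative distinguishability from the spectral gap alone, and then proving that it is stable under AGSP application followed by aggressive truncation, is where the technical core of the argument must lie.
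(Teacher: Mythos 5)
There is a genuine gap at the heart of your plan: the quantitative half-chain distinguishability you identify as ``the main obstacle'' is not merely hard to establish from the spectral gap --- it is false in general. Globally orthogonal ground states of a gapped chain can have identical (or nearly identical) reduced density matrices on every half-chain; the standard example is a degenerate ground space of cat-like states, where the symmetric and antisymmetric combinations $\ket{0\cdots0}\pm\ket{1\cdots1}$ are orthogonal yet locally indistinguishable across any bulk cut. In that situation the restriction of the ground space projector to $[1,i]$ has rank strictly less than $g$, the $g$-th singular value of your viable subspace is exactly zero, and no truncation-plus-Gram--Schmidt step can preserve a $g$-dimensional structure that simply does not exist locally. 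Since your error recursion $\eta_{i+1}\le c\,\eta_i+\delta$ is predicated on this lower bound holding at every cut, the composition argument collapses precisely in the regime the degenerate problem is designed to handle.

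The resolution in the paper is not to prove distinguishability but to make the algorithm agnostic to it. The ground states are found one at a time in $g$ sequential sweeps, and at each cut the analysis splits on the distinguishability $D=1-\braopket{w}{\Pi}{w}$ of the sought witness relative to the left Schmidt vectors of the states already found: when $D$ is small the existing Schmidt vectors are simply recycled (the new witness is locally already captured, \cref{lem:lowD}), and when $D$ is large a convex program minimises $\Tr(\Pi\sigma)$ subject to energy and boundary-contraction constraints to produce genuinely new local vectors (\cref{lem:highdist,lem:trim overlap}); \cref{lem:combined} shows the two bounds cover all $D\in[0,1]$. Global orthogonality between the outputs is then imposed only at the final full-chain optimisation (\cref{prog:gsa}), where it is a legitimate constraint, rather than being tracked locally along the chain where it cannot be enforced. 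Your single-pass $g$-dimensional-subspace architecture is closer in spirit to Huang's independent treatment, but even there the local-indistinguishability issue must be confronted via boundary contraction rather than assumed away.
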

The proof of this statement is given in \cref{subsec:main proof}, and a more precise dependence of the runtime on the gap $\epsilon$ and the ground space degeneracy $g$ is made explicit in \cref{subsec:runtime}. Note that since both $G$ and $\Upsilon$ have rank $g$, we can get an approximation error in the trace norm at the cost of a factor of $\sqrt{2g} = \mathcal{O}(1)$ increase in error. 

Our result relies heavily on the ideas in LVV~\cite{LandauVaziraniVidick2013} and the improvements of Huang~\cite{Huang2014'}, so let us recall briefly how this algorithm proceeds. The LVV algorithm proceeds by constructing a \emph{viable set}, which roughly speaking is a small set of matrix product states on a half-chain that is guaranteed to contain the local Schmidt vectors of an approximate ground state on the full chain. In each step of the algorithm, the half-chain of the current viable set is extended by one site to create a new viable set. This initial extension is too large, but trimming away some states with too high an energy yields another suitably small set that is still viable. 

The main new idea is an extension of the size trimming step to degenerate systems, based on the concept of \emph{local distinguishability} of ground states. The algorithm works by iteratively sweeping across the chain, and finds the degenerate ground states one-by-one in a manner similar to that of LVV on each iteration. As we run these iterations, we must ensure that the viable set of potential ground states on the half-chain contains a witness that not only has low energy, but also does not simply reduce to something in the span of the previously found ground states. To ensure this, we break the analysis into two cases of \emph{high-} and \emph{low-}distinguishability. 

In the low-distinguishability case, the Schmidt vectors on the left half-chain of the previously found ground states and the desired witness have large overlap. By recycling these Schmidt vectors, we therefore get a viable set for this new witness for free.

In the high-distinguishability case however, this recycling would not provide a viable set. In this case we need to trim the viable set in a way that not only ensures the existence of a low-energy witness, but a witness which has low overlap with the existing states. If the relevant Schmidt vectors of the desired witness and the existing states have low overlap, then we can optimise for a state of low energy \emph{and} overlap efficiently. 

\section{Preliminaries}
\label{sec:prelim}

Let us begin by giving a more precise statement of the LVV algorithm, with the definitions suitably extended to our treatment of the degenerate case. 

Given the Hamiltonian $H = \sum_{j=1}^{n-1} H_j$, we will often consider a partition of the system into the first $i$ spins and the rest, we will call this the $i|i+1$ cut. The Hamiltonian on a contiguous subset of spins $\{j,\ldots,k\}$ is denoted $H_{[j,k]} = \sum_{i=j}^{k-1} H_i$. In this notation the local terms can also be expressed as $H_i = H_{[i,i+1]}$, and the full Hamiltonian as $H=H_{[1,n]}$. In the special case that an index $i$ is clear from context, we define $H_L = H_{[1,i]}$ and $H_R = H_{[i+1,n]}$. Note that this implies that $H = H_L + H_i + H_R$. The Hilbert space on which $H_{[j,k]}$ acts is denoted by $\mathcal{H}_{[j,k]}$. Similarly, we have the Hilbert space $\mathcal{H} = \mathcal{H}_{[1,n]}$, and similarly for the other possible subscripts. It will sometimes be convenient to subtract a constant energy shift from a Hamiltonian, in which case we define $H_L' = H_L - \epsilon_L$, where $\epsilon_L$ is the least eigenvalue of $H_L$, and similarly for $H_R'$ and $\epsilon_R$. When a particular state $\rho$ is clear from context, we refer to $\Tr(\rho H_L)$ as the left energy of $\rho$, and similarly for the right energy. 

A \emph{viable set} is simply a set of vectors that can be used to construct the Schmidt vectors of a new approximate ground state on a half-chain, assuming we already have approximate ground states $\ket{\gamma_1}$,\ldots,$\ket{\gamma_{h-1}}$. The aim is to inductively grow the region of viability to the entire system starting from one end of the chain. The span of such a set gives the desired small domain on which candidate ground states can be found efficiently. 

\begin{defn}(Viable Set)
	\label{defn:viable} A set of states $S$ is ($h$,$i$,$s$,$b$,$\delta$ or $\Delta$)-viable if:
	\begin{itemize}
		\item The states in $S$ are supported on the first $i\leq n$ qudits of the system, i.e.\ $S\subset \mathcal{H}_{[1,i]}$.
		\item The cardinality is bounded $\abs{S}\leq s$.
		\item Each state in $S$ is represented by an MPS with bond dimension at most $b$.
		\item There exists a witness state $\ket{v}$ which has, depending on context, either
		\begin{itemize}
			\item ($\delta$ case) overlap with the ground space projector $G$ satisfying $\norm{G\ket{v}} \ge 1-\delta$, or
			\item ($\Delta$ case) the expected energy satisfying $\braopket{v}{H}{v} \le \epsilon_0+\Delta\epsilon$.
		\end{itemize}
		\item The reduced density operator of this witness must be supported on $\Span(S)$.
		\item The witnesses state $\ket{v}$ is orthogonal to all previously constructed approximate ground states $\ket{\gamma_1}$,\ldots,$\ket{\gamma_{h-1}}$.
	\end{itemize}
\end{defn}

When it is clear from the context (and it almost always will be clear), we will drop the explicit $h$ dependence in the definition of viable set and refer to a ($i,s,b,\delta$)-viable set, and similarly for $\Delta$.

The two error parameters specified, $\delta$ or $\Delta$, will be used contextually and we will sometimes refer to them as the overlap error and the energy error, respectively. The two error parameters however can be used interchangeably, though at the cost of some overhead which we would like to avoid. The proof is simple and can be found in~\cite{LandauVaziraniVidick2013,Huang2014'}.

\begin{lem}[Interchangeability of $\delta$ and $\Delta$\cite{LandauVaziraniVidick2013,Huang2014'}]
	\label{lem:interchange}
	For the ground state projector $G$ of a Hamiltonian $H$, and any $\ket{v}\in \mathcal{H}$, we have
	\begin{align*}
	\braopket{v}{H}{v}&\leq\epsilon_0+\Delta\epsilon & &\implies \qquad\qquad \|G \ket{v}\|\geq1-\Delta\\
	\|G \ket{v}\|&\geq1-\delta & &\implies \qquad\qquad \braopket{v}{H}{v}\leq \epsilon_0+2\delta n
	\intertext{which allows us to convert between the two measures of viability}
	(\cdot,\cdot,\cdot,\cdot,\Delta)&\text{-viable} & &\implies \qquad\qquad \,(\cdot,\,\cdot,\,\cdot,\,\cdot,\,\delta\,=\,\Delta\,)\text{-viable}\\
	(\cdot,\cdot,\cdot,\cdot,\,\delta)&\text{-viable} & &\implies \qquad\qquad (\cdot,\cdot,\cdot,\cdot,\Delta=\frac{2n}{\epsilon}\delta)\text{-viable}\,.
	\end{align*}
\end{lem}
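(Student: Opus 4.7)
The plan is to decompose the witness $\ket{v}$ into its ground-space component and its orthogonal complement, $\ket{v}=G\ket{v}+(\one-G)\ket{v}$, and exploit two simple structural facts about $H$: the spectral gap gives a lower bound $H\succeq \epsilon_0\one + \epsilon(\one-G)$, while the normalisation $0\leq H_i\leq \one$ on each of the $n-1$ local terms gives the upper bound $H\leq (n-1)\one\leq n\one$. Once both bounds are in hand, each direction of the stated implication is a one-line consequence, and the viability statements drop out by applying the bounds to the witness whose existence is guaranteed by the definition.

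For the first implication I would take the expectation of $H\succeq \epsilon_0\one+\epsilon(\one-G)$ in $\ket{v}$, obtaining $\braopket{v}{H}{v}\geq \epsilon_0 + \epsilon\bigl(1-\norm{G\ket{v}}^2\bigr)$. Combining this with the hypothesis $\braopket{v}{H}{v}\leq \epsilon_0+\Delta\epsilon$ gives $\norm{G\ket{v}}^2\geq 1-\Delta$, and then the elementary estimate $\sqrt{1-x}\geq 1-x$ for $x\in[0,1]$ yields $\norm{G\ket{v}}\geq 1-\Delta$ as required. For the reverse implication I would use that $G$ commutes with $H$ and that $HG=\epsilon_0 G$, so that the cross terms in the decomposition vanish and
\[
\braopket{v}{H}{v} \;=\; \epsilon_0\norm{G\ket{v}}^2 \;+\; \braopket{v}{(\one-G)\,H\,(\one-G)}{v}.
\]
The second term is at most $\norm{H}\cdot\norm{(\one-G)\ket{v}}^2$, and from $\norm{G\ket{v}}\geq 1-\delta$ one deduces $\norm{(\one-G)\ket{v}}^2\leq 1-(1-\delta)^2\leq 2\delta$; combined with $\norm{H}\leq n$ and $\norm{G\ket{v}}^2\leq 1$ this gives $\braopket{v}{H}{v}\leq \epsilon_0+2n\delta$.

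The conversion between viability parameters is then an immediate corollary: the only property of the viable set used in \cref{defn:viable} that interacts with $\delta$ or $\Delta$ is the quality guarantee on the witness, and both arrows above transport that guarantee in exactly the claimed direction without altering $h$, $i$, $s$, $b$, or the support and orthogonality conditions. I do not anticipate any genuine obstacle — the whole lemma is essentially the variational principle packaged with the gap hypothesis — so the main care is just to handle the cross terms cleanly and to remember the $\sqrt{1-x}\geq 1-x$ step when passing from $\norm{G\ket{v}}^2$ to $\norm{G\ket{v}}$.
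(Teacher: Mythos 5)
Your proof is correct, and since the paper itself omits the proof of \cref{lem:interchange} (deferring to Refs.~\cite{LandauVaziraniVidick2013,Huang2014'}), your argument --- the operator inequality $H\succeq\epsilon_0\one+\epsilon(\one-G)$ for one direction and the block decomposition with $\norm{H}\leq n$ for the other --- is exactly the standard variational argument those references use. No gaps; the cross terms vanish as you say because $G$ commutes with $H$, and the $\sqrt{1-x}\geq 1-x$ step is handled correctly.
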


For later use, we define the following absolute constants $c$ and $c'$ that are independent of all other parameters, and a simple function $f$ of the ground state degeneracy,
\begin{align*}
c:=&6.25\times 10^{-54} & c':=&10^{-11} & f:=&g(2g+1) \,,
\end{align*}
where we note that $200\sqrt[4]{c}= c'$. These values can be used to make all of the constants hidden by the big-O notation in our derivations and procedures completely explicit. However, our final result still relies on certain implicit estimates (in particular, the estimates in Refs.~\cite{Osborne2006,Huang2014}) so we are not able to give a completely explicit scaling.


\section{The Algorithm}

\begin{algorithm}[ht!]
	\caption{Ground space approximation}\label{alg:overall}
	\begin{algorithmic}[1]
		\Procedure {DegenerateGSA}{$H$, $g$, $\epsilon$, $\eta$}
		\State $\ket{\gamma_1}\leftarrow$\textsc{ NondegenerateGSA}($H$, $\epsilon$, $\eta$)\Comment{Run non-degenerate algorithm}
		\State $E\leftarrow\braopket{\gamma_1}{H}{\gamma_1}$\Comment{Approximate the ground energy}
		\ForAll{$h\in [2, g]$}
		\State $S_{h,0}\leftarrow \lbrace 1\rbrace$ \Comment{Initial viable set}
		\ForAll{$i\in [1, n-1]$}
		\State $E_i\leftarrow \braopket{\gamma_1}{H_{[1,i]}}{\gamma_1}$ \Comment{Estimate energy of half-chain}
		\State $L_{h,i}\leftarrow$\textsc{ SchmidtVecs}($i$, $\ket{\gamma_{1}}$,$\ldots$,$\ket{\gamma_{h-1}}$) \Comment{Extract left Schmidt vectors}
		\State $S_{h,i}\leftarrow$\textsc{ Step}($S_{h,i-1}$, $L_{h,i}$, $H$, $\epsilon$, $E$, $E_i$)\Comment{Step viable set one qudit along}
		\EndFor
		\State $S_{h,n}\leftarrow$\textsc{ FinalStep}($S_{h,n-1}$, $H$, $\epsilon$, $E$, $\eta$)\Comment{Low-error final step}
		\State $\ket{\gamma_h}\leftarrow$\textsc{ ApproxGroundState}($H$, $S_{h,n}$, $\ket{\gamma_1}$, $\ldots\,$, $\ket{\gamma_{h-1}}$)\Comment{Extract witness}
		\EndFor
		\State \Return $\ket{\gamma_1},\,\ldots\,,\,\ket{\gamma_g}$
		\EndProcedure
	\end{algorithmic}
\end{algorithm}

Algorithm \ref{alg:overall} works by first running the slight modifications on the existing nondegenerate ground state approximation algorithms of Ref.~\cite{LandauVaziraniVidick2013,Huang2014'}, which is discussed in more detail in \cref{subsubsec:nondegen}.
As discussed earlier, the extended algorithm works by iteratively finding ground states one-by-one, with each guaranteed to be orthogonal to those previously found until all $g$ ground states have been found.
It is not necessary to know \emph{a priori} a specific value for $g$, but unless $g = \mathcal{O}(1)$ holds, then there is no guarantee from the area law alone that all the ground states can be approximated well by the algorithm.
Moreover, if the algorithm is iterated more times than the true degeneracy of the system, then it will necessarily produce high energy states which will herald this over-estimation, as we will show in \cref{app:overlap}. As such we will henceforth assume that the degeneracy $g$ is known.

The \textsc{Step}/\textsc{FinalStep} procedures will be discussed below in \cref{subsec:step}, and its four sub-procedures will follow in Sections~\ref{subsec:extension}-\ref{subsec:reduction}. Finally the \textsc{ApproxGroundState} procedure will be discussed in \cref{subsec:gsa}.

\subsection{Step}
\label{subsec:step}

\begin{algorithm}[th!]
	\caption{Stepping functions}\label{alg:step}
	\begin{algorithmic}[1]
		\Procedure {Step}{$S_{h,i-1}$, $L_{h,i}$, $H$, $\epsilon$, $E$, $E_i$}
		\State $S_{h,i}^{(1)}\leftarrow$\textsc{ Extend}($S_{h,i-1}$) \Comment{Increment $i$}
		\State $S_{h,i}^{(2)}\leftarrow$\textsc{ Trim}($S_{h,i}^{(1)}$, $L_{h,i}$, $H$, $\epsilon$, $E_i$)\Comment{Reduce cardinality of set}
		\State $S_{h,i}^{(3)}\leftarrow$\textsc{ Truncate}($S_{h,i}^{(2)}$, $\epsilon$)\Comment{Truncate state bond dimensions}
		\State $S_{h,i}^{(4)}\leftarrow$\textsc{ Reduce}($S_{h,i}^{(3)}$, $H$, $\epsilon$, $E$)\Comment{Push back down the error level}
		\State \Return $S_{h,i}^{(4)}$
		\EndProcedure
		\Statex
		\Procedure {FinalStep}{$S_{h,n-1}$, $H$, $\epsilon$, $E$, $\eta$}
		\State $S_{h,n}^{(1)}\leftarrow$\textsc{ Extend}($S_{h,n-1}$) \Comment{Extend to final qudit}
		\State $S_{h,n}^{(2)}\leftarrow$\textsc{ Trim}($S_{h,n}^{(1)}$, $L_{h,n}$, $H$, $\epsilon$, $E_n$)
		\State $S_{h,n}^{(3)}\leftarrow$\textsc{ Truncate}($S_{h,n}^{(2)}$, $\epsilon$)
		\State $S_{h,n}^{(4)}\leftarrow$\textsc{ FinalReduce}($S_{h,n}^{(3)}$, $H$, $\epsilon$, $E$, $\eta$)\Comment{Reduce error to desired level}
		\State \Return $S_{h,n}^{(4)}$
		\EndProcedure
	\end{algorithmic}
\end{algorithm}

\begin{table}[tbh!]
	\centering
	\begin{tabular}{cccccc} \toprule
		Set & $i$ & $s$ & $B$ & $\delta$ & $\Delta$ \\ \midrule
		$S_{0}$ & $0$ & $1$ & $1$ & $0$ & $0$\\
		$S_{1}$ & $1$ & $s$ & $b$ & {\color{gray}$c\epsilon^6/f^4$} & {$c\epsilon^6/f^4$}\\
		\vdots & \vdots & \vdots & \vdots & \vdots & \vdots\\
		$S_{i-1}$ & $i-1$ & $s$ & $b$ & $\color{gray}c\epsilon^6/f^4$ & $c\epsilon^6/f^4$\\
		{} & {} & {} & {} & {} & {}\\\hdashline
		{} & {} & {} & {} & {} & {}\\
		$S_{i}\left\lbrace
		\begin{matrix}
		S_{i}^{(1)}\\
		S_{i}^{(2)}\\
		S_{i}^{(3)}\\
		S_{i}^{(4)}			\end{matrix}\right.$ &
		\renewcommand\arraystretch{1.25} $\begin{matrix}
		{\color{red}i}\\
		i\\
		i\\
		i
		\end{matrix}$ &
		\renewcommand\arraystretch{1.25} 
		$\begin{matrix}
		ds\\
		{\color{red}p_1}\\
		p_1+q\\
		pp_1+pq+q
		\end{matrix}$ &
		\renewcommand\arraystretch{1.25} 
		$\begin{matrix}
		b\\
		dsb+q^2\\{
			\color{red}p_2}\\
		pp_2
		\end{matrix}$ & 
		\renewcommand\arraystretch{1.25} 
		$\begin{matrix}
		{\color{gray}c\epsilon^6/f^4}\\
		1/100\\
		1/5\\
		{\color{gray}c\epsilon^6/f^4}
		\end{matrix}$ & 
		\renewcommand\arraystretch{1.2} 
		$\begin{matrix}
		c\epsilon^6/f^4\\
		{\color{gray} n/50\epsilon}\\
		{\color{gray}2n/5\epsilon}\\
		{\color{red}c\epsilon^6/f^4}
		\end{matrix}$  \\
		{} & {} & {} & {} & {} & {}\\\hdashline
		{} & {} & {} & {} & {} & {}\\
		$S_{i+1}$ & $i+1$ & $s$ & $b$ & $\color{gray}c\epsilon^6/f^4$ & $c\epsilon^6/f^4$\\
		\vdots & \vdots & \vdots & \vdots & \vdots & \vdots\\
		$S_{n-1}$ & $n-1$ & $s$ & $b$ & $\color{gray}c\epsilon^6/f^4$ & $c\epsilon^6/f^4$\\
		$S_{n}$ & $n$ & $p_1+2q$ & $p_0 p_2+q$ & $\color{gray}\eta^2/4f$ & $\eta^2/4f$\\\bottomrule
	\end{tabular}
	\caption{The viability parameters of the sets produced during Lines 8--14 of \cref{alg:overall}. Red indicates the desired parameter change in each sub-step, grey the weaker of the two error parameters as given by \cref{lem:interchange}. Here $p$ is some absolute polynomial $p=n^{\mathcal{O}(1)}$ and the remaining polynomials scale as $p_0,p_1,p_2,q=n^{a}$ where $a=\mathcal{O}\bigl(1+\epsilon^{-1}\sqrt{\log \eta^{-1}/\log n}\bigr)$. The polynomials $s:=pp_1+pq+q$ and $b:=pp_2$ denote the fixed points of the cardinality and bond dimension under the \textsc{Step} procedure.}
	\label{tab:params}
\end{table}

The \textsc{Step} procedure, shown above in Algorithm~\ref{alg:step}, takes a viable set defined on $i-1$ qudits, and constructs a new set which is viable on $i$ qudits. As well as physically extending the set, this procedure can be efficiently performed such that neither the cardinality nor the error grow. Starting from the trivially ($0$,$1$,$1$,$0$)-viable set $\lbrace 1\rbrace$, the \textsc{Step} procedure allows you to inductively construct ($i$,$s$,$b$,$\Delta=c\epsilon^6/f^4$)-viable sets $S_{h,i}$, for all $i$. Here the parameters $s$ and $b$ are all polynomial functions of $n$ and are given in \cref{tab:params}. On the final qudit the \textsc{Step} procedure is tweaked in the form of \textsc{FinalStep}, which performs a more powerful error reduction procedure to bring the error down to the desired final level. 

By constructing a viable set on the entire system, a polynomial-sized subspace has been constructed on which the Hamiltonian may be efficiently optimised, allowing the previously inaccessible witness state to be harvested, yielding the final desired ground state approximation. This final optimisation is performed in \textsc{ApproxGroundState}, which follows the \textsc{Step}-based viable set construction.

\subsection{Extension}
\label{subsec:extension}
By induction, suppose we start with a ($i-1$,$s$,$b$,$\Delta=c\epsilon^6/f^4$)-viable set $S_{i-1}$. The first step is to take a viable set on $i-1$ qudits and extend it trivially to a viable set on $i$ qudits. To do this we take the element-wise tensor product of the existing viable set with a basis on the $i$th qudit. It is trivial to see that the set resulting from \textsc{Extension}, $S_{h,i}^{(1)}$, is ($i$,$ds$,$b$,$\Delta=c\epsilon^6/f^4$)-viable; see Ref.~\cite{LandauVaziraniVidick2013} for the explicit argument. We simply note that the net result of this step is to increase the cardinality of the viable set from $s$ to $ds$, while keeping all other parameters fixed. 

\subsection{Size trimming}
\label{subsec:trimming}
For the size-trimming we are going to use two nets to compensate for our ignorance of two properties of the ground states: the local energy differences between them, and their entanglement structure across the $i|i+1$ cut. The former is represented by the difference in the expectation value of $H_L$ between our desired witness and (say) the first ground state found. The latter is represented by the boundary contraction (defined below) of the witness state in question. In both cases we are going to construct a net on the space of possible values of these unknowns, and perform the necessary optimisations in the neighbourhoods of these elements, thus allowing for error guarantees in spite of this ignorance. 

\subsubsection{Boundary contraction}
\label{subsubsec:boundary contraction}

Viable sets, like reduced density matrices, capture local information about a given state. As defined, viable sets are solely determined by the left Schmidt vectors of a witness state; the reduced density operator however also contains the Schmidt coefficients across the cut. As such, merely finding vectors of low energy on the half-chain $[1,i]$ in no way guarantees consistency with a low-energy global state, as they might not be consistent with the entanglement structure across the $i|i+1$ cut. As well as this the system may be frustrated, with a globally low-energy state appearing locally excited. Both of these problems are circumvented by performing the desired optimisations alongside \emph{boundary contraction} constraints.

Given a state of Schmidt rank $B$ and Schmidt decomposition across the $i|i+1$ cut given by $\ket{v}=\sum_{j=1}^{B}\lambda_j\ket{a_j}\ket{b_j}$, let $U_v:\mathbb{C}^B\to \mathcal{H}_R$ be the partial isometry specified by $U_v\ket{j}=\ket{b_j}$. Then we have the following.
\begin{defn}[Boundary Contraction~\cite{LandauVaziraniVidick2013}]
	\label{defn:bc}
	The \emph{left state} of $\ket{v}$ with respect to the $i|i+1$ cut is defined as 
	\[
	\ket{\ls(v)}:=U_v^\dagger\ket{v}=\sum_j\lambda_j\ket{a_j}\ket{j}\in\mathcal{H}_L \otimes\mathbb{C}^B\,.
	\]
	The \emph{boundary contraction} of $\ket{v}$ is defined as
	\[ \cont(v):=\Tr_{[1,i-1]}\left(\ketbra{
		\ls(v)}{\ls(v)}\right)=U_v^\dagger\Tr_{[1,i-1]}\left(\ketbra{v}{v}\right)U_v \,.
	\]
\end{defn}
Note that the density matrix $\cont(v)$ is supported on $\mathcal{H}_i\otimes\mathbb{C}^B$.

The interpretation of a boundary contraction is the following. The difference in right energies between a candidate state and a valid witness state for the full chain is bounded by how close the boundary contractions of the states are. By extending a state by the as-yet-unknown witness state, we can argue the existence of a low energy solution within our extended viable set.

\begin{lem}[\cite{LandauVaziraniVidick2013,Huang2014'}]
	\label{lem:boundary contraction energy}
	Let $\sigma$ be a density matrix on $\mathcal{H}_L\otimes \mathbb{C}^B$ and $\ket{v}=\sum_{j=1}^{B}\lambda_j\ket{a_j}\ket{b_j}\in\mathcal{H}$ be a Schmidt-decomposed state. The density matrix $\sigma':=U_v\sigma U_v^\dagger$ on $\mathcal{H}$ satisfies
	\begin{align*}
	\Tr\left(\sigma'H\right)&\leq \Tr(\sigma H_L)+\braopket{v}{\left(H_i+H_R\right)}{v}
	\\&\quad
	+\norm{\Tr_{\left[1,i-1\right]}\left(\sigma\right)-\cont(v)}_1\left(1+\norm{H_R\Pi_v}-\braopket{v}{H_R}{v}\right)\,.
	\end{align*}
	where $\Pi_v = U_vU_v^\dagger$ is the rank $B$ projector from the isometry $U_v$. 
\end{lem}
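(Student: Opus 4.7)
My plan is to decompose $H = H_L + H_i + H_R$ and handle each trace contribution separately. For the $H_L$ term, since $\tilde{U}_v := I_L \otimes U_v$ acts only on the $\mathbb{C}^B \to \mathcal{H}_R$ factor while $H_L$ acts only on $\mathcal{H}_L$, cyclicity of the trace combined with $U_v^\dagger U_v = I_{\mathbb{C}^B}$ gives $\Tr(\sigma' H_L) = \Tr(\sigma H_L)$ immediately, which is the first term on the right-hand side.

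For the remaining $H_i + H_R$ piece I would pass to the reduced states on $\mathcal{H}_i \otimes \mathcal{H}_R$. Setting $\tau := \Tr_{[1,i-1]}(\sigma')$ and $\rho_v := \Tr_{[1,i-1]}(\ketbra{v}{v})$ and using that $H_i + H_R$ is supported on $\mathcal{H}_{[i,n]}$,
\[
\Tr(\sigma'(H_i+H_R)) = \braopket{v}{H_i+H_R}{v} + \Tr\!\bigl((\tau-\rho_v)(H_i+H_R)\bigr).
\]
The key algebraic identity is that the partial trace commutes with $\tilde{U}_v$, giving $\tau - \rho_v = (I_i \otimes U_v)\,X\,(I_i \otimes U_v^\dagger)$ with $X := \Tr_{[1,i-1]}(\sigma) - \cont(v)$; since the isometry preserves trace norm, $\|\tau-\rho_v\|_1 = \|X\|_1$, and $\tau - \rho_v$ is a traceless Hermitian operator supported on $\mathcal{H}_i \otimes \mathrm{Range}(\Pi_v)$.

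The remaining task is to bound $\Tr((\tau-\rho_v)(H_i+H_R))$ by a H\"older estimate against a carefully shifted operator. The support constraint on $\tau - \rho_v$ lets me replace $H_i + H_R$ by its compression $(I_i\otimes\Pi_v)(H_i + H_R)(I_i\otimes\Pi_v)$, and tracelessness of $\tau-\rho_v$ lets me freely subtract $\braopket{v}{H_R}{v}(I_i\otimes\Pi_v)$ without changing the trace. After this shift, the $H_i$ contribution is a positive semidefinite operator of norm at most $\|H_i\|\leq 1$, while the $H_R$ contribution $I_i \otimes (\Pi_v H_R \Pi_v - \braopket{v}{H_R}{v}\Pi_v)$ has spectrum in $[-\braopket{v}{H_R}{v},\,\|H_R\Pi_v\|-\braopket{v}{H_R}{v}]$ (using $\|\Pi_v H_R \Pi_v\| \leq \|H_R\Pi_v\|$). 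Splitting $\tau-\rho_v = D_+ - D_-$ into positive and negative parts, and invoking $H_R \geq 0$ to sign-match each piece against the appropriate spectral edge, should yield the stated upper bound $\|X\|_1\bigl(1 + \|H_R\Pi_v\| - \braopket{v}{H_R}{v}\bigr)$.

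The main obstacle I expect is the sign bookkeeping in this last step: the naive operator-norm bound on the shifted operator is $\max(\braopket{v}{H_R}{v},\, 1+\|H_R\Pi_v\|-\braopket{v}{H_R}{v})$, so it is essential to genuinely exploit the positivity of $H_R$ (together with the fact that $\ket{v}\in\mathrm{Range}(\Pi_v)$ forces $\braopket{v}{H_R}{v}$ to sit between the extremal eigenvalues) in order to discard the spurious $-\braopket{v}{H_R}{v}$ lower edge rather than relying only on tracelessness of $\tau - \rho_v$.
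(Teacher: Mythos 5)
Your first three steps are correct and follow the same route as the paper's (very brief) sketch: $\Tr(\sigma'H_L)=\Tr(\sigma H_L)$, the insertion of $\pm\proj{v}$, the identity $\tau-\rho_v=(I_i\otimes U_v)\,X\,(I_i\otimes U_v^\dagger)$ with $X=\Tr_{[1,i-1]}(\sigma)-\cont(v)$, and the preservation of the trace norm under conjugation by the isometry are all fine. The gap is precisely at the step you yourself flagged as the main obstacle, and the fix you propose does not close it. Write $D:=\tau-\rho_v=D_+-D_-$; tracelessness gives $\Tr(D_+)=\Tr(D_-)=\tfrac12\norm{X}_1$, and matching $D_+$ against the upper spectral edge of the shifted compressed operator and $D_-$ against its lower edge yields
\[
\Tr\bigl(D(H_i+H_R)\bigr)\;\le\;\tfrac12\norm{X}_1\bigl(1+\norm{H_R\Pi_v}-\braopket{v}{H_R}{v}\bigr)+\tfrac12\norm{X}_1\braopket{v}{H_R}{v}\;=\;\tfrac12\norm{X}_1\bigl(1+\norm{H_R\Pi_v}\bigr),
\]
in which the subtracted $\braopket{v}{H_R}{v}$ cancels out. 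This is \emph{not} dominated by $\norm{X}_1\bigl(1+\norm{H_R\Pi_v}-\braopket{v}{H_R}{v}\bigr)$ once $\braopket{v}{H_R}{v}>\tfrac12\bigl(1+\norm{H_R\Pi_v}\bigr)$, and positivity of $H_R$ cannot rescue the stated constant: the displayed inequality, read literally for arbitrary $\sigma$, fails in that regime. As a concrete check, take $H_L=H_i=0$, $\ket{v}=\epsilon\ket{a_1}\ket{b_1}+\sqrt{1-\epsilon^2}\ket{a_2}\ket{b_2}$ with $H_R\ket{b_1}=0$ and $H_R\ket{b_2}=m\ket{b_2}$, and $\sigma=\proj{a_2}\otimes\proj{2}$ (choosing $\ket{a_1},\ket{a_2}$ with orthogonal supports on $[1,i-1]$ so the cross terms of $\cont(v)$ vanish). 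Then the left-hand side equals $m$, while $\norm{X}_1=2\epsilon^2$ and the right-hand side equals $m-\epsilon^2(m-2)+2\epsilon^4 m$, which is strictly smaller than $m$ for $m>2$ and $\epsilon$ small.

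What your Jordan-decomposition argument \emph{does} legitimately prove is the ``spread'' bound $\Tr\bigl(D(H_i+H_R)\bigr)\le\tfrac12\norm{X}_1\bigl(2+\lambda_{\max}-\lambda_{\min}\bigr)$, where $\lambda_{\max}$ and $\lambda_{\min}$ are the extreme eigenvalues of $\Pi_vH_R\Pi_v$ restricted to the range of $\Pi_v$; this needs only tracelessness and support of $D$, not positivity. That weaker form is all the paper actually uses: in the proof of Lemma~\ref{lem:ortho energy} the right Schmidt vectors of the witness lie in the truncated space on which $\epsilon_R\le H_R\le\epsilon_R+t$, so the spread is at most $t$, which is exactly the quantity that $\norm{H_R\Pi_v}-\braopket{v}{H_R}{v}$ is subsequently bounded by there. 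So either prove and carry the spread version (adjusting the constant in the application), or identify the extra hypothesis on $\sigma$ and $\ket{v}$ under which the stated constant is achievable; as written, your final step does not go through, and no amount of sign bookkeeping will produce the claimed coefficient.
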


\begin{proof}
	We give a sketch of the proof and refer the reader to Ref.~\cite{LandauVaziraniVidick2013} for more detail. Begin with $\Tr(\sigma' H) = \Tr\bigl[(\sigma'+\proj{v}-\proj{v})H\bigr]$ for some ground state $\ket v$. Expanding $H = H_L + H_i + H_R$ and using the matrix H\"older inequality together with the bound $\lVert H_i \rVert \le 1$ yields the proof. 
\end{proof}

\subsubsection{Nets}
\label{subsubsec:net}
Recall that an \emph{$\epsilon$-net} on a metric space is a set of points such that any point in the space is within distance $\epsilon$ of a point in the net. We will use two nets in our argument: one on the interval $[-1,1+\eta]$ in the standard absolute value metric and one on the set of matrices with bounded max-norm on $\mathbb{C}^d\otimes\mathbb{C}^B$ in the trace norm.

We denote by $\mathcal{M}_{\eta}$ the $\eta/2$-net on $[-1,1+\eta]$ whose points are 
\[ \mathcal{M}_{\eta}=\left\lbrace-1,-1+\eta,\cdots,1-\eta,1,1+\eta\right\rbrace\,. \]
The cardinality of $\mathcal{M}_{\eta}$ is at most $\left\lceil2/\eta\right\rceil+1$, and for any real value $x\in\left[-1-\eta,1+\eta\right]$ there exists a $y\in\mathcal{M}_{\eta}$ such that the following \emph{one-sided} error bound holds, $0\leq y-x\leq\eta$. We refer to this net as the \emph{energy net}. The choice of limits $[-1,1+\eta]$ is justified in \cref{app:frust}.

Let $\mathcal{N}_\eta$ be the matrices on $\mathbb{C}^d\otimes\mathbb{C}^B$ with entries whose real and imaginary parts are in the set 
\[ \left\lbrace-1,-1+\eta/B^2d^2,\cdots,1-\eta/B^2d^2,1\right\rbrace\, \]
The cardinality of $\mathcal{N}_\eta$ is upper bounded by $\left(2\left\lceil Bd/\eta\right\rceil+1\right)^{4Bd}$. For any positive semidefinite matrix $\rho$ on $\mathbb{C}^d\otimes\mathbb{C}^B$ with trace at most $1$, there exists a point $X\in\mathcal{N}_\eta$ such that $\norm{\rho-X}_1\leq \eta$. We refer to this net as the \emph{boundary contraction net}.

\subsubsection{Procedure}

Let $L_{h,i}$ be the set of left Schmidt vectors with respect to the $i\vert i+1$ cut of the states 
$\ket{\gamma_1},\dots,\ket{\gamma_h}$. Furthermore define $\Pi_{h,i}$ to be the projector 
onto $\Span(L_{h,i})$. Now define parameters $\delta = 8\sqrt{c}\epsilon^3/f^2$, $\eta = 4c'\epsilon/f$, and $\xi$ to be chosen later in Eq.~\eqref{eqn:xi}. 

\begin{proc}{Trim}
	For each boundary contraction $X\in\mathcal{N}_{\xi/2}$ with 
	bond dimension $B_\delta$ (to be defined below in Eq.~\eqref{eqn:Bdelta}) and left-energy difference $Y\in\mathcal{M}_{\eta}$, take $\sigma$ to be a density matrix supported on 
	$\Span\Bigl(S_{h,i}^{(1)}\cup L_{h,i}\Bigr)\otimes \mathbb{C}^B$. 
	We will specifically choose $\sigma$ to be a solution to the \emph{degenerate size-trimming convex program}:
	\begin{align}
	\label{prog:trim}
	\text{min}~~~& \Tr(\Pi_{h,i}\sigma)\\
	\text{where}~~~& \Tr\bigl[H_L(\sigma-\ketbra{\gamma_1}{\gamma_1})\bigr] \leq Y,\notag\\
	&\norm{\Tr_{[1,i-1]}(\sigma)-X}_1\leq\xi/2,\notag\\
	&\Tr(\sigma)=1,~~\sigma\geq 0\notag.
	\end{align}
	Now consider the set of vectors $\lbrace\ket{v_k}\rbrace_k$ consisting of all the eigenvectors of $\sigma$ with eigenvalues at least $10^{-9}/g$. Now define
	\begin{align}
	S_{h,i}^{(2)} = \bigl\lbrace\ket{v_{k,j}}\bigr\rbrace_{j,k} \cup L_{h,i}\,,
	\end{align}
	where $\ket{v_{k,j}}$ is the $j$th left Schmidt vector of $\ket{v_k}$ across the $i|B$ cut.
\end{proc}

The next Claim, whose proof is postponed until the next section, establishes that the viability parameters after each \textsc{Trim} step remain bounded as advertised. 

\begin{clm}[{\hyperref[clm:step2proof]{Proof}} in \cref{subsec:trimming analysis}]
	\label{clm:step2}
	The set resulting from \textsc{Trim}, $S_{h,i}^{(2)}$, is $(i,p_1,dsb+q^2,\delta=1/100)$-viable, where 
	\[ p_1(n),\,q(n)=n^{a}\text{ where }a=\mathcal{O}\bigl(1+\epsilon^{-1}\sqrt{\log \eta^{-1}/\log n}\bigr)\,. \]
\end{clm}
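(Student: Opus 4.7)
The plan is to verify, in turn, the four properties required for $S_{h,i}^{(2)}$ to be $(i, p_1, dsb+q^2, \delta=1/100)$-viable: support on the first $i$ qudits, the cardinality bound, the bond-dimension bound, and the existence of a witness orthogonal to $\ket{\gamma_1},\ldots,\ket{\gamma_{h-1}}$ satisfying $\|G\ket v\| \ge 99/100$. The first three are bookkeeping. Support is immediate: every element of $S_{h,i}^{(1)}$ lives on $[1,i]$ by \textsc{Extend}, and every element of $L_{h,i}$ is a left Schmidt vector at the $i|i+1$ cut. For bond dimension, each eigenvector $\ket{v_k}$ of $\sigma$ lies in $\Span(S_{h,i}^{(1)}\cup L_{h,i})\otimes\mathbb C^{B_\delta}$, so its left Schmidt vectors $\ket{v_{k,j}}$ are linear combinations of at most $|S_{h,i}^{(1)}|+|L_{h,i}| = ds + O(g)$ MPS basis vectors each of bond dimension at most $b$, augmented by the factor from the auxiliary $\mathbb C^{B_\delta}$, giving at most $dsb + q^2$. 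For cardinality, the \textsc{Trim} procedure loops over all $(X,Y)\in\mathcal N_{\xi/2}\times\mathcal M_\eta$ and retains at most $10^9 g$ eigenvectors per $\sigma$, each contributing at most $B_\delta$ Schmidt components, for a total bounded by $|\mathcal N_{\xi/2}|\cdot|\mathcal M_\eta|\cdot O(g B_\delta) + |L_{h,i}| = p_1$, with the stated scaling obtained from the explicit net sizes once $\xi,\eta$ and $B_\delta$ are substituted.

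The substantive content is witness existence, which I would establish by a case split on distinguishability. By induction $S_{h,i-1}$ admits a low-energy witness orthogonal to $\ket{\gamma_1},\ldots,\ket{\gamma_{h-1}}$; applying \textsc{Extend} and then embedding this state into the full chain using its true right Schmidt vectors gives a low-energy global target state $\ket w$ that is orthogonal to the previously found ground states. In the \emph{low-distinguishability} regime, where the left reduced density $\rho_L(w)$ has high overlap with $\Pi_{h,i}$, the inclusion $L_{h,i}\subset S_{h,i}^{(2)}$ already suffices: a small trace-norm perturbation of $\rho_L(w)$ onto $\Span(L_{h,i})$ yields the required witness with at most a constant-fraction loss in energy, and the orthogonality to $\ket{\gamma_1},\ldots,\ket{\gamma_{h-1}}$ is preserved because those states' reduced densities lie in $\Span(L_{h,i})$.

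In the \emph{high-distinguishability} regime, I would turn to the convex program. Pick $X\in\mathcal N_{\xi/2}$ within $\xi/2$ of $\cont(w)$ (after Schmidt truncating $\ket w$ to bond dimension $B_\delta$) and $Y\in\mathcal M_\eta$ within $\eta$ of $\braopket{w}{H_L}{w}-\braopket{\gamma_1}{H_L}{\gamma_1}$; then $\proj{\ls(w)}$ is approximately feasible at $(X,Y)$, with objective value $\Tr(\Pi_{h,i}\proj{\ls(w)})$ that is small by the high-distinguishability hypothesis. The optimum $\sigma$ inherits this small $\Pi_{h,i}$-overlap, so enough of its spectrum survives the $10^{-9}/g$ trimming threshold to preserve almost all of its mass and to keep the majority of it outside $\Pi_{h,i}$. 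Applying \cref{lem:boundary contraction energy} to the lifted density $\sigma' = U_w\sigma U_w^\dagger$ bounds the global energy by $\Tr(\sigma H_L) + \braopket{w}{H_i+H_R}{w} + O(\xi n) \le \epsilon_0 + \eta + O(\xi n)$, and by the chosen scalings of $\xi$ and $\eta$ this is at most $\epsilon_0 + \epsilon/100$. A surviving eigenvector $\ket{v_k}$ with dominant weight outside $\Pi_{h,i}$, extended back via $U_w$ and then orthogonalized against $\ket{\gamma_1},\ldots,\ket{\gamma_{h-1}}$ (with only a $\sqrt{\Tr(\Pi_{h,i}\sigma)}$ cost controlled by the high-distinguishability hypothesis), yields a pure state with reduced density in $\Span(S_{h,i}^{(2)})$, the required orthogonality, and energy still below $\epsilon_0+\epsilon/100$; \cref{lem:interchange} then converts this to $\|G\ket v\|\ge 99/100$.

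The hardest part is the error accounting in the high-distinguishability case. The prefactor $(1+\|H_R\Pi_v\|-\braopket{v}{H_R}{v})$ in \cref{lem:boundary contraction energy} can be as large as $\Theta(n)$, which forces $\xi$ to scale like $\epsilon/n$; this in turn forces $B_\delta$ to be large enough that the Schmidt truncation error of the target state is negligible at this scale, and the ensuing double-logarithmic blowup in $\log|\mathcal N_{\xi/2}|$ is precisely what produces the exponent $a = \mathcal O\bigl(1+\epsilon^{-1}\sqrt{\log\eta^{-1}/\log n}\bigr)$ stated in the claim. Additionally, the orthogonalization-against-previous-ground-states step must be bounded using the objective value of the convex program rather than via \emph{ad hoc} estimates, since it is precisely the quantity being optimized; this is the key new ingredient relative to the nondegenerate LVV/Huang analysis.
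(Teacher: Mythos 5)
Your overall architecture --- bookkeeping for support, cardinality, and bond dimension, followed by a witness-existence argument split into low- and high-distinguishability regimes with the convex program's optimality controlling the orthogonalisation cost --- matches the paper's. But your final paragraph contains a genuine error that undoes the cardinality bound you claim. You write that the prefactor $\bigl(1+\norm{H_R\Pi_v}-\braopket{v}{H_R}{v}\bigr)$ in \cref{lem:boundary contraction energy} can be $\Theta(n)$ and that this ``forces $\xi$ to scale like $\epsilon/n$''. If $\xi=\Theta(\epsilon/n)$ then the boundary contraction net has cardinality $\bigl(2\lceil B_\delta d/\xi\rceil+1\bigr)^{4B_\delta d}=n^{\Theta(B_\delta d)}$, an $n$-dependent contribution whose exponent is $2^{\Theta(1/\epsilon)}$; this is incompatible with the claimed $p_1=n^{a}$ with $a=\mathcal{O}\bigl(1+\epsilon^{-1}\sqrt{\log \eta^{-1}/\log n}\bigr)$. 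In the paper the net contributes only a constant $2^{g^{\tilde{\mathcal{O}}(1/\epsilon)}}$ independent of $n$, and the exponent $a$ comes entirely from the bond dimension $q(n)$ of the recycled Schmidt vectors of $\ket{\gamma_1},\dots,\ket{\gamma_{h-1}}$ --- you have the attribution backwards. The missing idea is the truncated Hilbert space: before any of this, the target witness must be projected into $\mathcal{H}_{[1,i]}\otimes\mathcal{H}_{[i+1,n]}^{\leq t}$, the span of eigenvectors of $H_R'$ with eigenvalue at most $t=\mathcal{O}(\log(f/\epsilon))$. The truncation lemma (\cref{lem:trunc}) guarantees this costs only $99\cdot 2^{-t/99}$ in overlap, and \cref{lem:lowenergy} then produces a witness $\ket{w}$ whose right Schmidt vectors all lie in that subspace, so that $\norm{U_w^\dagger H_R' U_w}\leq t$ and the prefactor is $1+t=\mathcal{O}(\log(g/\epsilon))$ rather than $\Theta(n)$. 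This is precisely what permits the $n$-independent choice $\xi=c'\epsilon/f(1+t)$ of \cref{eqn:xi}.

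A second, smaller omission: you never combine the two regimes quantitatively. The low-$D$ error $\delta_{\rm Low}$ grows with $D$, while the high-$D$ error $\delta_{\rm High}$ blows up as $D\to 0$ through the orthogonalisation factor $(1+\beta)/(1-\beta)$ with $\beta^2\leq\bigl[\tfrac{1}{\Lambda}+10^{-9}\bigr](1-D)$. Since $D$ is unknown to the algorithm, the proof must verify that $\min(\delta_{\rm Low},\delta_{\rm High})\leq 1/100$ uniformly over $D\in[0,1]$ --- this is \cref{lem:combined}, and it is where the numerical constants $c$ and $c'$ are actually pinned down. Asserting that the high-$D$ witness has energy at most $\epsilon_0+\epsilon/100$ ``by the chosen scalings'' without exhibiting the crossover between the two bounds leaves the claimed error parameter $\delta=1/100$ unverified.
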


\subsection{Bond truncation}
\label{subsec:truncation}

First, we recall some results on low-rank approximate ground states~\cite{VerstraeteCirac2006,AradKitaevLandauVazirani2013,Huang2014}. Let us define a function
\begin{align}
\label{eqn:Bdelta}
B_\delta:= 2^{\mathcal{O}(\epsilon^{-1}+\epsilon^{-1/4}\log^{3/4}\delta^{-1})}\,.
\end{align}
Specifically we will take $B_{\delta}$ to be defined by the following lemma, a corollary to the area law~\cite{Hastings2007,AradKitaevLandauVazirani2013,Huang2014}.
\begin{lem}[\!\cite{AradKitaevLandauVazirani2013,Huang2014}]
	\label{lem:area law}
	For any ground state $\ket{\Gamma}$ and bipartite cut of the system in question, there exists a state $\ket{\gamma}$ with Schmidt rank across that cut bounded by $B_\delta$ such that $\abs{\braket{\Gamma}{\gamma}}\geq 1-\delta$. Moreover, there exists an MPS approximation $\ket{\gamma'}$ with bond dimension across \emph{every} cut of at most $B_{\delta/n}$ and for which $\abs{\braket{\Gamma}{\gamma'}}\geq 1-\delta$.
\end{lem}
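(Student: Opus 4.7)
The plan is to prove this as a corollary of the AKLV/Huang area law together with a standard Eckart--Young style truncation argument, then bootstrap to a global MPS by performing successive truncations across each of the $n-1$ cuts.

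For the first statement, I would start from the area-law bound on the Rényi entropy of the half-chain reduced state $\rho_L = \Tr_{R}\proj{\Gamma}$. In particular, the refined area law of Arad--Kitaev--Landau--Vazirani together with Huang's tightening shows that there is some $\alpha < 1$ with $S_\alpha(\rho_L) = \mathcal{O}(\epsilon^{-1})$. Writing the Schmidt decomposition $\ket{\Gamma}=\sum_j \lambda_j \ket{a_j}\ket{b_j}$ with $\lambda_j$ in nonincreasing order, this Rényi bound forces a rapid decay of the tail $T(B):=\sum_{j>B}\lambda_j^2$. A direct (and standard) calculation converts an $S_\alpha$ bound into $T(B)\le \delta$ provided $B\geq 2^{\mathcal{O}(\epsilon^{-1}+\epsilon^{-1/4}\log^{3/4}\delta^{-1})}=B_\delta$. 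I would then take $\ket{\gamma}$ to be the renormalised truncation keeping only the top $B_\delta$ Schmidt components, for which the inner product satisfies $|\braket{\Gamma}{\gamma}|=\sqrt{1-T(B_\delta)}\ge 1-\delta$. This establishes the single-cut statement with the advertised Schmidt rank.

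For the global MPS statement, I would iterate the single-cut truncation across each of the $n-1$ cuts, using the tighter parameter $\delta/n$ at each step. Concretely, define $\ket{\gamma^{(0)}}=\ket{\Gamma}$ and let $\ket{\gamma^{(i)}}$ be the state obtained from $\ket{\gamma^{(i-1)}}$ by truncating across the $i|i+1$ cut down to Schmidt rank $B_{\delta/n}$. Each truncation decreases fidelity by at most $\delta/n$, and since fidelity is subadditive along such a sequence of truncations (the usual triangle-inequality argument for $\sqrt{1-\cdot^2}$ errors), after all $n-1$ steps we obtain $|\braket{\Gamma}{\gamma'}|\ge 1-\delta$ with $\ket{\gamma'}:=\ket{\gamma^{(n-1)}}$. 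The key observation is that truncating at cut $i$ can only decrease the Schmidt rank at any other cut, so the rank remains at most $B_{\delta/n}$ across \emph{every} cut simultaneously, which by the standard correspondence between bounded-Schmidt-rank states and MPS gives an MPS representation with bond dimension $B_{\delta/n}$.

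The main obstacle is ensuring that the area-law bound is invoked with the correct quantitative form: one needs the Rényi-entropy version with a small enough $\alpha<1$ so that the conversion from $S_\alpha$ to a tail bound $T(B)\le\delta$ produces precisely the scaling $B_\delta=2^{\mathcal{O}(\epsilon^{-1}+\epsilon^{-1/4}\log^{3/4}\delta^{-1})}$ quoted in Eq.~\eqref{eqn:Bdelta}. A secondary subtlety, which I would handle via the usual hybrid argument, is that successive truncations across distinct cuts do not preserve exactly the Schmidt decomposition at earlier cuts, so one must check that the fidelity loss still composes linearly; this is precisely where using $\delta/n$ per cut, rather than $\delta$, buys the uniform MPS bound without blowing up the constants.
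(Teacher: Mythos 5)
The paper does not actually prove this lemma: it is imported wholesale from Refs.~\cite{AradKitaevLandauVazirani2013,Huang2014} (with $B_\delta$ \emph{defined} so as to make the statement true), so your attempt is really a reconstruction of the cited references' argument. Your overall architecture is the right one and matches theirs: a single-cut tail bound on the Schmidt coefficients of $\ket{\Gamma}$ gives $\ket{\gamma}$, and iterating truncations with budget $\delta/n$ per cut gives $\ket{\gamma'}$. The second half is essentially correct as written: truncation at cut $i$ is implemented by a projector acting entirely on one side of any other cut, so it cannot increase the Schmidt rank elsewhere, and the additive composition of overlap losses is exactly the content of \cref{lem:schmidt overlap} (the paper itself runs this composition in \cref{subsec:truncation analysis}).

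The genuine gap is in the first half, in the step you yourself flag as ``the main obstacle.'' A R\'enyi bound $S_\alpha(\rho_L)=\mathcal{O}(\epsilon^{-1})$ for a \emph{fixed} $\alpha<1$ converts, by the standard calculation, into a tail bound of the form $\sum_{j>B}\lambda_j^2\leq 2^{(1-\alpha)S_\alpha/\alpha}\,B^{-(1-\alpha)/\alpha}$, i.e.\ $T(B)\leq\delta$ requires $\log B\geq S_\alpha+\tfrac{\alpha}{1-\alpha}\log\delta^{-1}$. This is \emph{linear} in $\log\delta^{-1}$ and therefore gives $B=2^{\mathcal{O}(\epsilon^{-1})}\cdot\delta^{-\mathcal{O}(1)}$, which is polynomially large in $\delta^{-1}$ and strictly weaker than the sub-polynomial scaling $B_\delta=2^{\mathcal{O}(\epsilon^{-1}+\epsilon^{-1/4}\log^{3/4}\delta^{-1})}$ of Eq.~\eqref{eqn:Bdelta} that the lemma asserts; taking $\alpha$ ``small enough'' only makes the prefactor worse. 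The $\log^{3/4}\delta^{-1}$ exponent does not come from a fixed-$\alpha$ entropy bound at all: it comes from the direct approximate-Schmidt-rank (tail-decay) theorem produced by the AGSP bootstrapping in Ref.~\cite{AradKitaevLandauVazirani2013} and its degenerate-ground-space extension in Ref.~\cite{Huang2014}, where the tail obeys roughly $T(B)\leq\exp\bigl(-\Omega(\epsilon^{1/3}\log^{4/3}B)\bigr)$; inverting this is what yields $\log B=\mathcal{O}(\epsilon^{-1/4}\log^{3/4}\delta^{-1})$. (Equivalently, one needs a R\'enyi bound with its $(1-\alpha)^{-\mathcal{O}(1)}$ dependence made explicit and then optimised over $\alpha$ as a function of $\delta$ --- the entropy bound in those references is a \emph{consequence} of the tail bound, not the other way around.) This is not merely cosmetic here: the algorithm's bookkeeping uses, e.g., $P=800nB_{1/800n}=n^{1+o(1)}$, which relies on $B_{1/\mathrm{poly}(n)}=n^{o(1)}$; the fixed-$\alpha$ route would only give $n^{\mathcal{O}(1)}$ there. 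So either invoke the approximate-Schmidt-rank theorem directly, or carry the $\alpha$-dependence through your R\'enyi argument.
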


If we know the Schmidt decomposition of a given state across a bipartite cut, then the Eckart-Young theorem~\cite{EckartYoung1936} guarantees that the \emph{truncation} of that state to Schmidt-rank $D$ is the best approximation with said Schmidt rank. 

\begin{defn}[Truncation]
	\label{defn:truncation}
	Consider a state $\ket{v}$ with Schmidt decomposition $\ket{v}=\sum_j\lambda_j\ket{a_j}\ket{b_j}$ across the $i | i+1$ cut, where any degeneracy in the Schmidt coefficients is resolved arbitrarily, and with non-increasing $\lambda_j$. Given an integer $D$, define the \emph{truncation} of $\ket{v}$ by
	\begin{align*}
	\Trunc_D^i\ket{v}:=\sum_{j=1}^D\lambda_j\ket{a_j}\ket{b_j}\,.
	\end{align*}
\end{defn}
This approximation is optimal in the following sense~\cite{EckartYoung1936}. For any state $\ket{v}$, and any integer $D$, the truncated vector $\ket{v'}:=\Trunc_D\ket{v}/\norm{\Trunc_D\ket{v}}$ satisfies $\braket{v}{v'}\geq\abs{\braket{v}{w}}$ for any $\ket{w}$ of Schmidt rank at most $D$. 

Combining these results, we get a precise statement that low-energy states can also be approximated faithfully with low-rank approximations. See Ref.~\cite{Huang2014'} for a proof.

\begin{lem}[\!\cite{Huang2014'}]
	\label{lem:truncation}
	For a state $\ket{v}$ with $\braopket{v}{H}{v}\leq\epsilon_0+\Delta\epsilon$ where $\Delta(1+\epsilon)\leq1/4$, the truncated state $\ket{w}:=\Trunc^i_{B_{\Delta\epsilon}}\ket{v}/\norm{{\Trunc^i_{B_{\Delta\epsilon}}\ket{v}}}$ has energy $\braopket{w}{H}{w}\leq\epsilon_0+48\sqrt{\Delta}$.
\end{lem}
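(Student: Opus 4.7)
The plan is to combine Lemma~\ref{lem:interchange}, the area law (Lemma~\ref{lem:area law}), and Eckart-Young optimality to show that $\ket{w}$ has near-unit overlap with $\ket{v}$, and then upgrade this overlap bound to the desired energy bound.

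First I would establish the overlap bound $\abs{\braket{v}{w}}^2\geq 1-O(\Delta)$. Applying Lemma~\ref{lem:interchange} to the hypothesis $\braopket{v}{H}{v}\leq \epsilon_0+\Delta\epsilon$ gives $\norm{G\ket{v}}\geq 1-\Delta$; letting $\ket{\Gamma}:=G\ket{v}/\norm{G\ket{v}}$ be the nearest ground state then yields $\norm{\ket{v}-\ket{\Gamma}}\leq \sqrt{2\Delta}$ after fixing phases. Applying the area law to $\ket{\Gamma}$ with error parameter $\Delta\epsilon$ produces a state $\ket{\tilde\gamma}$ of Schmidt rank at most $B_{\Delta\epsilon}$ across the $i|i+1$ cut, with $\norm{\ket{\Gamma}-\ket{\tilde\gamma}}\leq \sqrt{2\Delta\epsilon}$. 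The triangle inequality then gives $\norm{\ket{v}-\ket{\tilde\gamma}}\leq \sqrt{2\Delta}(1+\sqrt{\epsilon})$. By the Eckart-Young theorem, the truncation $\Trunc^i_{B_{\Delta\epsilon}}\ket{v}$ is the closest rank-$B_{\Delta\epsilon}$ vector to $\ket{v}$, and a short computation accounting for the normalisation of $\ket{w}$ yields $\abs{\braket{v}{w}}^2\geq 1-O(\Delta)$; the smallness hypothesis $\Delta(1+\epsilon)\leq 1/4$ ensures this quantity is bounded well away from $0$.

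The second stage, and the main obstacle, is to convert this overlap bound into the energy bound $\braopket{w}{H}{w}\leq \epsilon_0+48\sqrt{\Delta}$. A direct invocation of the second implication of Lemma~\ref{lem:interchange} applied to $\ket{w}$ would produce only an $O(\Delta n)$ bound, which is far too weak. Instead, I would apply the triangle inequality in the seminorm $\norm{\tilde H^{1/2}\cdot}$ associated to the positive operator $\tilde H := H-\epsilon_0$: writing $\ket{v} = \braket{w}{v}\ket{w} + \sqrt{1-\abs{\braket{w}{v}}^2}\,\ket{w^\perp}$ and expanding $\tilde H^{1/2}\ket{v}$ in this basis produces
\[
\abs{\braket{w}{v}}\,\sqrt{\braopket{w}{\tilde H}{w}}\leq \sqrt{\braopket{v}{\tilde H}{v}} + \sqrt{1-\abs{\braket{w}{v}}^2}\,\sqrt{\braopket{w^\perp}{\tilde H}{w^\perp}}.
\]
The first summand is bounded by $\sqrt{\Delta\epsilon}$ by hypothesis. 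The hard part, and the technical crux of the proof inherited from Ref.~\cite{Huang2014'}, is controlling $\braopket{w^\perp}{\tilde H}{w^\perp}$ without incurring an $O(n)$ factor from the naive spectral bound $\norm{\tilde H}\leq n-1$. This requires exploiting both the low Schmidt rank of $\ket{w}$ and the fact that both $\ket{v}$ and $\ket{w}$ are concentrated near the ground space, so that the excited-state content of $\ket{w^\perp}$ is itself controlled by the gap; the explicit constant $48$ then falls out of careful bookkeeping in these estimates.
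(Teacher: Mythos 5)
First, a caveat: the paper does not actually prove \cref{lem:truncation} itself --- it defers entirely to Ref.~\cite{Huang2014'} --- so your attempt can only be measured against the argument in that reference. Your first stage is essentially that argument and is fine: \cref{lem:interchange} gives a nearby ground state, \cref{lem:area law} gives a rank-$B_{\Delta\epsilon}$ approximation to it, and Eckart--Young optimality transfers the overlap to the truncation, yielding $\abs{\braket{v}{w}}\geq 1-2\Delta(1+\epsilon)\geq 1/2$ (the first bullet of \cref{lem:omnibus} gets you there a little more cleanly than chaining vector-norm triangle inequalities, but either works).

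The second stage has a genuine gap, and the mechanism you gesture at for closing it is not the right one. Your seminorm triangle inequality reduces everything to bounding $\abs{c}\sqrt{\braopket{w^\perp}{\tilde H}{w^\perp}}$ with $\abs{c}^2=1-\abs{\braket{w}{v}}^2$, and you assert that the excited-state content of $\ket{w^\perp}$ is ``controlled by the gap''. It is not: $\ket{w^\perp}$ is the renormalised Schmidt tail $c\ket{w^\perp}=\sum_{j>B_{\Delta\epsilon}}\lambda_j\ket{a_j}\ket{b_j}$, it is (up to the $g$-fold degeneracy) nearly orthogonal to the ground space, and its energy above $\epsilon_0$ can genuinely be $\Theta(n)$ --- for instance when a single discarded Schmidt vector carries left energy $\Theta(n)$ with weight $\Theta(\Delta\epsilon/n)$. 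The best distribution-free bound available along your route is $\abs{c}^2\braopket{w^\perp}{(H_L'+H_R')}{w^\perp}\leq\braopket{v}{(H_L'+H_R')}{v}\leq 1+\Delta\epsilon$, where the constant $1$ is the frustration bound $\epsilon_0\leq\epsilon_L+\epsilon_R+1$; since that constant does not shrink with $\Delta$, the triangle-inequality framing caps out at an $O(1)$ additive energy error, not $O(\sqrt{\Delta})$.

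The missing idea is the cut-locality of the truncation, i.e.\ splitting $H=H_L+H_i+H_R$ rather than treating $H$ monolithically. Because $\Trunc^i$ leaves the Schmidt vectors untouched and only deletes small Schmidt coefficients, setting $N:=\lVert\Trunc^i_{B_{\Delta\epsilon}}\ket{v}\rVert^2$ one has the exact decomposition $\rho^v_L=N\rho^w_L+(1-N)\rho^{\mathrm{tail}}_L$ of reduced density matrices, whence $\braopket{w}{H_L'}{w}\leq\braopket{v}{H_L'}{v}/N$ and likewise on the right: a multiplicative $1/N=1+O(\Delta(1+\epsilon))$ blow-up of quantities already bounded by $1+\Delta\epsilon$, hence only an $O(\Delta(1+\epsilon))$ additive error. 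The single term $H_i$ that straddles the cut has $\norm{H_i}\leq1$, so the overlap bound together with the second bullet of \cref{lem:omnibus} contributes $2\sqrt{2\cdot2\Delta(1+\epsilon)}=O(\sqrt{\Delta})$; this is the dominant term and the origin of both the $\sqrt{\Delta}$ scaling and the constant $48$. Without this decomposition your argument cannot reach the stated bound.
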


Building on these lemmas, we can define the bond truncation procedure \textsc{Truncate} for our algorithm as follows. 
\begin{proc}{Truncate}
	Take each vector within $S_{h,i}^{(2)}$, and truncate all bonds on $1,\dots,i-1$ to $P:=800nB_{1/800n}=n^{1+o(1)}$. Take $S_{h,i}^{(3)}$ to be 
	\begin{align*}
	S_{h,i}^{(3)}:=\set{\Trunc_{P}^1\Trunc_{P}^2\cdots\Trunc_{P}^{i-1}\ket{s}}{\ket{s}\in S_{h,i}^{(2)}}\cup L_{h,i}\,.
	\end{align*}
	\vspace{-.5cm}
\end{proc}

As before, the following Claim establishes that the \textsc{Truncate} step preserves the viability with the advertised parameters. 

\begin{clm}[{\hyperref[clm:step3proof]{Proof}} in \cref{subsec:truncation analysis}]
	\label{clm:step3}
	The set resulting from \textsc{Truncate}, $S_{h,i}^{(3)}$, is $(i,p_1+q,p_2,\delta=1/5)$-viable, where 
	\[ p_2(n)=n^a \quad \text{where } a ={\mathcal{O}\bigl(1+\epsilon^{-1}\sqrt{\log \eta^{-1}/\log n}\bigr)}\,. \]
\end{clm}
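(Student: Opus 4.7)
The plan is to construct a new witness $\ket{v'}$ for $S_{h,i}^{(3)}$ from the existing witness $\ket v$ of $S_{h,i}^{(2)}$. Since the viability of $S_{h,i}^{(2)}$ places the left reduced density matrix of $\ket v$ on $\Span(S_{h,i}^{(2)})$, enumerating $S_{h,i}^{(2)}=\{\ket{s_k}\}_k$ lets us write $\ket v=\sum_k\ket{s_k}\otimes\ket{\psi_k}$ for appropriate (unnormalized) right vectors $\ket{\psi_k}$. I would then define
\[ \ket{v'}:=\frac{1}{\mathcal N}\sum_k \ket{s_k'}\otimes\ket{\psi_k}, \qquad \ket{s_k'} := \Trunc_P^1\cdots\Trunc_P^{i-1}\ket{s_k}\in S_{h,i}^{(3)}, \]
with $\mathcal N$ a normalization, and restore orthogonality to $\ket{\gamma_1},\ldots,\ket{\gamma_{h-1}}$ by a small correction inside $\Span(L_{h,i})\subseteq\Span(S_{h,i}^{(3)})$. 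By construction the left reduced density matrix of $\ket{v'}$ lies in $\Span(S_{h,i}^{(3)})$; the cardinality bound $|S_{h,i}^{(3)}|\le p_1+q$ and per-cut bond dimension $P=800nB_{1/800n}=n^{1+o(1)}=p_2$ are then immediate from the construction.

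The substantive content of the claim is the overlap bound $\|G\ket{v'}\|\ge 4/5$. A direct Cauchy--Schwarz estimate on $\|\ket v - \ket{v'}\|$ is unusable, because the viability hypothesis alone gives no a-priori control on the Schmidt tails of the individual basis vectors $\ket{s_k}$. Instead I would detour through the ground space: set $\ket\Gamma := G\ket v/\|G\ket v\|$, a true ground state satisfying $|\braket v\Gamma|\ge 99/100$ and $\braopket\Gamma H\Gamma = \epsilon_0$. Apply \cref{lem:area law} at tolerance $1/(800n)$ to obtain an MPS $\ket{\gamma'}$ with bond dimension at most $B_{1/(800n)}$ across every cut and overlap $|\braket\Gamma{\gamma'}|\ge 1-1/(800n)$. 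Using $\ket{\gamma'}$ as a reference low-rank approximation and iteratively applying \cref{lem:truncation} at each of the $i-1\le n$ internal cuts on the left, the loss per cut is bounded by $1/(800n)$, summing via a union bound to a further $1/800$. Combined with the initial $1/100$ from $\ket v$ and $1/800$ from the area-law step, a triangle-inequality combination of overlap errors leaves the total overlap error safely under $1/5$.

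The main obstacle I foresee is showing that this iteratively-truncated estimate actually applies to the linearly-constructed $\ket{v'}$ above, rather than to the ``literal'' truncation $\Trunc_P^1\cdots\Trunc_P^{i-1}\ket v$ of the witness (which need not have support on $\Span(S_{h,i}^{(3)})$ because truncation is nonlinear in the summation over $k$). The resolution is to choose the basis $\{\ket{s_k}\}$ of $\Span(S_{h,i}^{(2)})$ adapted to the $i\vert i+1$ Schmidt decomposition of $\ket v$: picking representatives so that the right vectors $\ket{\psi_k}$ become orthogonal, the Cauchy--Schwarz expansion of $\|\ket v-\ket{v'}\|^2$ factorises across Schmidt components of $\ket v$, and each component can then be controlled individually by Eckart--Young against the corresponding Schmidt component of $\ket{\gamma'}$. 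Threading this adapted basis choice through the iterated truncation, and verifying that it plays nicely with the orthogonalisation correction built from $L_{h,i}$, is the delicate part of the argument.
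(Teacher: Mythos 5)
Your error budget and its sources (area-law reference state at tolerance $1/(800n)$, per-cut overlap loss of $1/(800n)$ summing to $1/800$, triangle-inequality combination through the ground state, final orthogonalisation against $\ket{\gamma_1},\ldots,\ket{\gamma_{h-1}}$) reproduce the paper's accounting almost exactly, except that the per-cut loss is supplied by \cref{lem:schmidt overlap} (overlap with a rank-$r$ state survives truncation to rank $r/\delta$ up to an additive $\delta$), not by \cref{lem:truncation}, which is an energy statement. Where you genuinely diverge from the paper is in the witness itself: the paper simply takes $\ket{w}\propto\Trunc_P^1\cdots\Trunc_P^{n-1}\ket{u}$, the literal truncation of the old witness, runs the ground-space detour on that state to reach $\abs{\braket{\Gamma}{w}}\geq 1-1/20$, and then orthogonalises via the bound on the component of $\ket{w}$ orthogonal to $\ket{u}$ to land at $1-3/20\geq 1-1/5$. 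You instead insist on the linearly-constructed $\ket{v'}=\mathcal{N}^{-1}\sum_k\ket{s_k'}\otimes\ket{\psi_k}$ so that support on $\Span(S_{h,i}^{(3)})$ is manifest.

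The gap is that you never actually bound the overlap error of $\ket{v'}$, and the resolution you sketch is circular. If you rotate to a basis of $\Span(S_{h,i}^{(2)})$ adapted to the Schmidt decomposition of $\ket{v}$ (so that the $\ket{\psi_k}$ become orthogonal), the new left vectors are linear combinations of the elements of $S_{h,i}^{(2)}$, and truncation does not commute with linear combination: $\Trunc_P(\sum_k c_k\ket{s_k})\neq\sum_k c_k\Trunc_P\ket{s_k}$ in general, so the truncations of the adapted basis vectors are \emph{not} in $\Span(S_{h,i}^{(3)})$, and you lose exactly the support property that motivated the construction. Conversely, if you keep the actual elements $\ket{s_k}\in S_{h,i}^{(2)}$ as your basis, the $\ket{\psi_k}$ are not orthogonal, your Cauchy--Schwarz factorisation is unavailable, and there is no a priori control on $\norm{\ket{s_k}-\ket{s_k'}}$ for individual set elements (viability says nothing about the Schmidt tails of the members of $S$, as you note yourself). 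So neither branch of your plan closes. The paper sidesteps all of this by applying \cref{lem:schmidt overlap} directly to the truncation of $\ket{u}$ against the bond-dimension-$B_{1/800n}$ area-law state; if your underlying worry is whether that truncated witness is supported on $\Span(S_{h,i}^{(3)})$, that is a question to raise about the paper's own argument, but your proposed repair does not supply the missing step either. (Two smaller points: the orthogonalisation correction must be carried through \cref{lem:ortho} with its multiplicative $(1+\beta)/(1-\beta)$ penalty, which your final tally omits, and the bond dimension $p_2$ picks up an additive $q$ from that orthogonalisation rather than being just $P=n^{1+o(1)}$.)
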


\subsection{Error reduction}
\label{subsec:reduction}
Next step in the algorithm is to push the error level back down. To do this we need to construct operators called approximate ground state projectors (AGSPs). One convenient choice of AGSP would be of the form
\[ A:=\exp\biggl[-\frac{x(H-\epsilon_0)^2}{2\epsilon^2}\biggr] \,,\]
where $x$ is a parameter defined later. We cannot however construct such an operator. Firstly it would require us to exactly know the ground state energy. Secondly it would require exponential time to construct and specify.

We already have a ground state approximation $\ket{\gamma_1}$ which we got from the non-degenerate algorithm with energy at most $\epsilon_0+\eta^2\epsilon/4f$. As such, if we define $\epsilon_0':=\braopket{\gamma_1}{H}{\gamma_1}$ then for sufficiently small error $\eta$ we have
\[ \abs{\epsilon_0-\epsilon_0'}\leq \eta^2\epsilon/4f \le \epsilon/2\,. \]

In order to construct an approximation of $A$ by a more efficient operator $K$, we define
\begin{align}
\label{eq:K_AGSP}
K:=\frac{2\epsilon\tau}{\sqrt{2\pi x}}\sum_{j=0}^{\lceil T/\tau\rceil}\exp\left[i\epsilon_0'\tau j-\frac{\epsilon^2\tau^2j^2}{2x}\right]\,U_B(\tau j) \,,
\end{align}
where $U_B(t)$ is an approximation of $\exp(-iHt)$ with Schmidt rank $B$ across every cut, the construction and analysis of which is given in Ref.~\cite{Osborne2006}.

The approximation between $A$ and $K$ is broken into three distinct approximations. The first is that the integral on $(-\infty,\infty)$ is truncated to an integral on $[-T,T]$, with an exponentially small error in $T$. The second is that this integral is approximated by a Riemann sum, specifically the rectangle rule\footnote{Higher-order approximations such as other Newton-Cotes formulae could be used, though this won't change the overall scaling of the various parameters.} with a discrete step-size of $\tau$. The third is that $\exp(-iHt)$ is approximated by a unitary $U_B(t)$ that has low Schmidt rank. All three errors are considered in \cref{subsec:reduction analysis}. If we take parameters in Eq.~\ref{eq:K_AGSP} to have the scaling
\begin{align*}
x:=&\mathcal{O}(\log\zeta^{-1})\\
T:=&\mathcal{O}(\epsilon^{-1}\log(1/\zeta)\sqrt{\log (n/\zeta)})\\
\tau^{-1}:=&\mathcal{O}(n^2\zeta^{-1}\sqrt{\log n/\zeta})\\
B:=&(\zeta^{-1})^{\mathcal{O}(1/\epsilon)}\cdot\poly(n/\zeta)\,,
\end{align*}
then this approximate AGSP is capable of lowering the energy error down to $\Delta=\zeta$. As such we will henceforth refer to this as a $\zeta$-approximate AGSP. With the definition of $K$ in hand, we can give a precise specification of \textsc{Reduce}:

\begin{proc}{Reduce}
	Decompose a $\zeta$-approximate AGSP with $\zeta = c\epsilon^6/f^4$ as $K=\sum_jA_j\otimes B_j$. Then return the set given by applying this approximate AGSP, combined with the recycled Schmidt vectors of the previous approximate ground states.
	\[S_{h,i}^{(4)}:=\set{A_j\ket{s}}{\forall j,\,\ket{s}\in S_{h,i}^{(3)}}\cup L_{h,i}\,.\]
	\vspace{-.5cm}
\end{proc}

As we will prove in \cref{subsec:reduction analysis}, the \textsc{Reduce} step can lower the error of a viable set from $\delta=1/5$ down to $\Delta=\zeta$, increasing both the bond dimension and cardinality by a factor of 
\[(1/\zeta)^{\mathcal{O}(1/\epsilon)}\cdot(n/\zeta)^{\mathcal{O}(1)}\,.\]
Therefore the above construction with $\zeta=c\epsilon^6/f^4$ can send a
($i$,$p_1+q$,$p_2$,$\delta=1/5$)-viable set to a ($i$,$pp_1+pq+q$,$pp_2$,$c\epsilon^6/f^4$)-viable set where 
\[p(n)=n^{\mathcal{O}(1)}\,.\]
This forms our next Claim.

\begin{clm}[{\hyperref[clm:step4proof]{Proof}} in \cref{subsec:reduction analysis}]
	\label{clm:step4}
	The set resulting from \textsc{Reduce}, $S_{h,i}^{(4)}$, is $(i,pp_1+pq+q,pp_2,\Delta=c\epsilon^6/f^4)$-viable.
\end{clm}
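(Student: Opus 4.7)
My plan is to construct a witness for $S_{h,i}^{(4)}$ by applying $K$ to the witness $\ket{v}$ of $S_{h,i}^{(3)}$ and projecting away from the subspace $\Pi_\gamma := \sum_{j<h}\ketbra{\gamma_j}{\gamma_j}$. Concretely, I take the (unnormalised) candidate $\ket{w} := (\one - \Pi_\gamma)K\ket{v}$. Since the reduced density matrix of $\ket{v}$ is supported on $\Span(S_{h,i}^{(3)})$, that of $K\ket{v} = \sum_k (A_k\otimes B_k)\ket{v}$ is supported on $\Span\{A_k\ket{s}:\ket{s}\in S_{h,i}^{(3)}\}$, while the subtracted piece $\Pi_\gamma K\ket{v}$ is a superposition of the $\ket{\gamma_j}$ and so has its left Schmidt support in $\Span(L_{h,i})$. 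Hence the reduced state of $\ket{w}$ lies in $\Span(S_{h,i}^{(4)})$, and orthogonality to each $\ket{\gamma_j}$ holds by construction.

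First I would verify that the idealised AGSP $A := \exp[-x(H-\epsilon_0)^2/(2\epsilon^2)]$ would, if we could use it, achieve the desired reduction: because $AG = G$ and $\|AG^\perp\| \le e^{-x/2}$ from the gap $\epsilon$, decomposing $\ket{v}$ across the ground space and its complement with $\|G\ket{v}\| \ge 4/5$ yields $\|G(A\ket{v}/\|A\ket{v}\|)\|^2 \ge 1 - \tfrac{9}{16}e^{-x}$, so $x = \mathcal{O}(\log\zeta^{-1})$ pushes this overlap past $1 - \zeta/2$. Then I would control the three distinct approximations separating $K$ from $A$: the Gaussian tail truncation at $\pm T$ (exponentially small in $T$), the rectangle-rule discretisation at step $\tau$ (polynomial in $\tau$), and Osborne's Schmidt-rank-$B$ replacement for $e^{-iHt}$ (polynomial in $t/B$). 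The stated scalings of $T$, $\tau^{-1}$ and $B$ are calibrated so that each error contributes $\mathcal{O}(\zeta)$, and the shift $|\epsilon_0 - \epsilon_0'| \le \epsilon/2$ can be absorbed into the AGSP polynomial, yielding $\|(K-A)\ket{v}\| = \mathcal{O}(\zeta)$.

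The step I expect to be the main obstacle is showing that orthogonalising against $\Pi_\gamma$ does not undo the AGSP's work. Although $\ket{v} \perp \ket{\gamma_j}$, the image $K\ket{v}$ generically acquires overlap with $\ket{\gamma_j}$. I would control this by writing $\bra{\gamma_j}A = \bra{\gamma_j}G + \bra{\gamma_j}G^\perp A$ and using $\braket{\gamma_j}{v}=0$ to get
\[
|\braopket{\gamma_j}{A}{v}| \le \|G^\perp\ket{\gamma_j}\|\,\|G^\perp\ket{v}\| + \|G^\perp\ket{\gamma_j}\|\,e^{-x/2}\|v\|.
\]
The crucial input is that each previously found $\ket{\gamma_j}$ carries overlap error $\eta^2/4f$ inherited from \textsc{FinalStep}, so $\|G^\perp\ket{\gamma_j}\| = \mathcal{O}(\eta/\sqrt{f})$. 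Summing squared inner products over $j < h \le g$ and adding the $A\!\to\!K$ slack gives $\|\Pi_\gamma K\ket{v}\|^2 = \mathcal{O}(g\eta^2/f + \zeta^2)$; the constants $c, c', f$ (with $f = g(2g+1)$, $\eta = 4c'\epsilon/f$, $\zeta = c\epsilon^6/f^4$, and $200\sqrt[4]{c}=c'$) are precisely calibrated to bound this by $\mathcal{O}(\zeta)$. Combined with $\|K\ket{v}\| \ge 4/5 - \mathcal{O}(\sqrt{\zeta})$ from the previous paragraph, the normalised $\ket{w}$ retains ground-space overlap $\ge 1 - \zeta$, giving the $\delta = \zeta$ half of viability, and the matching $\Delta$ bound follows immediately from \cref{lem:interchange}.

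Counting parameters is then routine: writing $K = \sum_{k=1}^M A_k\otimes B_k$, the stated scalings give $M = n^{\mathcal{O}(1)}$ and per-term operator bond dimension $n^{\mathcal{O}(1)}$ because $\zeta$ is an absolute constant. The set $\{A_k\ket{s}\}$ therefore has cardinality $M(p_1+q)$ and per-state bond dimension $\mathcal{O}(B p_2)$; folding in the $q$ vectors of $L_{h,i}$ and absorbing all $n$-polynomial prefactors into a single $p(n) = n^{\mathcal{O}(1)}$ delivers the claimed $(i, pp_1+pq+q, pp_2, \Delta=c\epsilon^6/f^4)$-viability.
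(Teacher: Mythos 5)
Your overall architecture matches the paper's (apply the approximate AGSP $K$ to the witness of $S_{h,i}^{(3)}$, orthogonalise against the previous $\ket{\gamma_j}$, count Schmidt operators for the cardinality and bond dimension), but there is a genuine gap at the decisive step. You establish only the \emph{overlap} error $\delta=\zeta$ for the new witness and then assert that ``the matching $\Delta$ bound follows immediately from \cref{lem:interchange}.'' It does not: \cref{lem:interchange} converts $\delta$-viability into $\Delta=\frac{2n}{\epsilon}\delta$-viability, so from $\delta=\zeta$ you obtain only $\Delta=2n\zeta/\epsilon$, which is larger than the required constant $\Delta=c\epsilon^6/f^4$ by a factor polynomial in $n$. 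This loss is not an artifact of sloppiness in the lemma --- a state with ground-space overlap $1-\zeta$ can have its residual weight sitting on eigenstates of energy $\Theta(n)$, giving energy $\epsilon_0+\Theta(\zeta n)$. The entire point of the \textsc{Reduce} step is that the Gaussian AGSP suppresses high-energy components \emph{faster} than their energy grows, and the paper's proof captures this by expanding $\ket{u}=\sum_j\sqrt{p_j}\ket{E_j}$ in the energy eigenbasis and bounding $\braopket{w}{H-\epsilon_0}{w}$ directly via $\max_{E\ge\epsilon/2}E\,e^{-xE^2/\epsilon^2}=\mathcal{O}(\epsilon\, e^{-x/4})$, with no factor of $n$. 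Your idealised-AGSP paragraph tracks only $\lVert G\ket{w}\rVert$ and therefore discards exactly the information needed; the downstream analysis (e.g.\ the witness of the next iteration's \textsc{Trim} step and the condition $\Delta(1+\epsilon)\le 1/4$ in \cref{lem:truncation}) genuinely requires the energy parameter, as \cref{tab:params} indicates by marking $\Delta$, not $\delta$, as the target of this sub-step.

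Two smaller points. First, your orthogonalisation route differs from the paper's: you bound $\lvert\braopket{\gamma_j}{A}{v}\rvert$ through $\lVert G^{\perp}\ket{\gamma_j}\rVert$, whereas the paper bounds the overlap of the processed witness with $\Span\{\ket{\gamma_1},\dots,\ket{\gamma_{h-1}}\}$ by its component perpendicular to $\ket{u}$ (getting the crude $\beta\le\frac{399}{401}$) and then pays the factor $(1+\beta)/(1-\beta)\le 400$ via \cref{lem:ortho}, which is pre-compensated by making the pre-orthogonalisation energy error $\zeta\epsilon/400$. Your route is potentially tighter but, as written, conflates the final approximation error $\eta$ of \cref{thm:main} with the net spacing $\eta=4c'\epsilon/f$ of \cref{subsubsec:net}; the quantity controlling $\lVert G^{\perp}\ket{\gamma_j}\rVert$ is the former. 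Second, once the energy computation is done correctly the orthogonalisation must also be carried out at the level of energies (as in \cref{lem:ortho}) rather than overlaps, for the same reason as above.
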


\subsubsection{Final error reduction}

As for the final step, we now want to reduce the error further than was previously necessary. We do this by simply constructing a stronger version of the previously used approximate AGSP.

\begin{proc}{FinalReduce}
	Construct a $\eta^2/4f$-approximate AGSP $K$. Return the set constructed by applying this AGSP, combined with the previous approximate ground states,
	\[S_{h,n}^{(4)}:=\set{
		K\ket{s}}{ \ket{s}\in S_{h,n}^{(3)}}\cup\lbrace\ket{\gamma_1},\cdots, \ket{\gamma_{h-1}}\rbrace\,.\]\vspace{-.5cm}
\end{proc}

\begin{clm}[{\hyperref[clm:step4proof]{Proof}} in \cref{subsec:reduction analysis}]
	\label{clm:step4final}
	The set resulting from \textsc{FinalReduce}, $S_{h,n}^{(4)}$, is $(n,p_1+q+g,p_0p_2,\Delta=\eta^2/4f)$-viable, where 
	\[ p_0(n)=n^a , \quad \text{where } a ={\mathcal{O}\bigl(1+\epsilon^{-1}\sqrt{\log \eta^{-1}/\log n}\bigr)}\,. \]
\end{clm}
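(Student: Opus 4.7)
The plan is to mirror the argument used for the \textsc{Reduce} step in Claim~\ref{clm:step4}, but with the AGSP tightened to the target accuracy $\zeta=\eta^2/4f$ in place of $c\epsilon^6/f^4$, and accounting for the two structural differences in \textsc{FinalReduce}: the operator $K$ is applied as a whole (not decomposed term-by-term as in \textsc{Reduce}), and the previously constructed ground states themselves, rather than their half-chain Schmidt vectors, are adjoined to the set. I will verify the three viability parameters separately, starting from the inductive assumption that $S_{h,n}^{(3)}$ is $(n,p_1+q,p_2,\delta=1/5)$-viable (Claim~\ref{clm:step3}) with some witness $\ket{v}$ satisfying $\norm{G\ket v}\ge 4/5$ and $\braket{v}{\gamma_j}=0$ for $j<h$.

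The bookkeeping for cardinality and bond dimension is direct. Each input state $\ket s\in S_{h,n}^{(3)}$ produces a single output $K\ket s$, so the cardinality grows only by the $h-1\le g$ adjoined ground states, giving $\abs{S_{h,n}^{(4)}}\le p_1+q+g$ as required. The operator $K$ in Eq.~\eqref{eq:K_AGSP} is a sum of $\lceil T/\tau\rceil+1$ terms $c_j U_B(\tau j)$, each of which has Schmidt rank $B$ across every cut. Substituting the scalings for $x,T,\tau,B$ with $\zeta=\eta^2/4f$ produces a multiplicative bond-dimension overhead of exactly $p_0(n)=n^{\mathcal{O}(1+\epsilon^{-1}\sqrt{\log\eta^{-1}/\log n})}$. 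Since every $\ket{\gamma_j}$ already has MPS bond dimension at most $p_0p_2$ by the inductive hypothesis on $h$, the overall bound $p_0p_2$ holds uniformly across $S_{h,n}^{(4)}$.

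The heart of the claim is the energy bound. Applying the exact AGSP $A=\exp[-x(H-\epsilon_0)^2/(2\epsilon^2)]$ to $\ket v$ and renormalising would yield a state of energy at most $\epsilon_0+\mathcal{O}(e^{-x})\epsilon$ by the standard Markov-type argument, since the initial overlap with $G$ is at least $4/5$. Replacing $A$ by $K$ introduces three controlled errors---truncation of the Gaussian tails beyond $\pm T$, Riemann discretisation at step $\tau$, and replacement of $e^{-iHt}$ by the bounded-Schmidt-rank unitary $U_B(t)$ of Ref.~\cite{Osborne2006}---each of which can be bounded in operator norm via the parameter scalings listed in \cref{subsec:reduction analysis} to contribute at most $\eta^2/4f$ to the final energy error. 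Combined with the shift $\abs{\epsilon_0-\epsilon_0'}\le\eta^2\epsilon/4f$, the state $\ket{w_0}=K\ket v/\norm{K\ket v}$ satisfies $\braopket{w_0}{H}{w_0}\le\epsilon_0+(\eta^2/4f)\epsilon$.

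The main subtlety, and the step I expect to require the most care, is restoring exact orthogonality to the previous approximate ground states. Although $\ket v\perp\ket{\gamma_j}$, the operator $K$ is not an exact projector and mixes in a small amount of ground-space content: specifically $\abs{\braket{w_0}{\gamma_j}}=\mathcal{O}(\eta^2/f)$, controlled by the AGSP accuracy together with the energy error that each $\ket{\gamma_j}$ inherits from the previous iteration of the outer loop. Since $\ket{\gamma_1},\ldots,\ket{\gamma_{h-1}}$ were explicitly adjoined to $S_{h,n}^{(4)}$, we may take the new witness to be $\ket{w}=P^\perp\ket{w_0}/\norm{P^\perp\ket{w_0}}$, where $P^\perp$ projects onto $\Span(\ket{\gamma_1},\ldots,\ket{\gamma_{h-1}})^\perp$. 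This $\ket w$ lies in $\Span(S_{h,n}^{(4)})$, is exactly orthogonal to each $\ket{\gamma_j}$, and its energy is raised by only $\mathcal{O}(\eta^2/f)$, which can be absorbed into the allowed slack by initially choosing the AGSP parameters with a constant factor of headroom. Verifying that this headroom is genuinely sufficient once $f=g(2g+1)$ and the explicit constants $c,c'$ are tracked is the only bookkeeping hurdle.
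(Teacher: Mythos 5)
Your proposal follows essentially the same route as the paper's proof: the exact-AGSP energy bound via the energy-eigenbasis decomposition, the three-part approximation error for $K$, orthogonalisation against the adjoined $\ket{\gamma_1},\ldots,\ket{\gamma_{h-1}}$ via \cref{lem:ortho}, and the same cardinality and bond-dimension bookkeeping. The only slip is quantitative and harmless: the overlap $\abs{\braket{w_0}{\gamma_j}}$ is $\mathcal{O}(\eta/\sqrt{g})$ (set by the \emph{amplitude} of the excited component of $\ket{\gamma_j}$, i.e.\ the square root of its ground-space overlap error), not $\mathcal{O}(\eta^2/f)$ --- and indeed the paper only ever bounds this overlap by the constant $399/401$, absorbing the resulting $400$-fold excitation growth into exactly the headroom you also invoke.
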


\subsection{Ground state approximation}
\label{subsec:gsa}

Once we have $S_{h,n}$, a viable set on the entire system, we then need to extract the witness state. By construction, a viable set provides a domain on which known optimisation techniques can be applied to efficiently find approximate ground states. We summarise this with the last procedure. 

\begin{proc}{ApproxGroundState}
	Take $\sigma$ to be a density matrix supported on $\Span(S_{h,n})$, found as the solution to the convex program
	\begin{align}
	\text{min}\quad&\Tr(H\sigma_h)\label{prog:gsa}\\
	\text{where}\quad&
	\braopket{\gamma_j}{\sigma_h}{\gamma_j}=0\text{ for }1\leq j<h\,,\notag\\
	&\Tr(\sigma_h)=1\,,\,\,\sigma_h\geq 0\,.\notag
	\end{align}
	Return $\ket{\gamma_h}$, a leading eigenvector of $\sigma_h$.
\end{proc}

The objective functions and constraints can both be evaluated efficiently, firstly because our witness states are MPS with small bond dimension, and the Hamiltonian is a matrix product operator (MPO) with small bond dimension, and also because the domain of the optimisation is polynomially large. Since the program takes the form of a convex optimisation, these two properties are sufficient to allow the program to be solved efficiently. 
By optimality of the above program, the energy of $\sigma_h$ must be at most that of the witness, $\epsilon_0+\eta^2\epsilon/4f$. If we take $\eta \leq 1/3$ and using \cref{lem:demixing} we get that the leading eigenvector, $\ket{\gamma_h}$, has an energy at most $\epsilon_0+\eta^2\epsilon/4g$, and ground space overlap $\lVert G\ket{\gamma_h}\rVert \ge 1-\eta^2/4g$.


\subsection[Proof of Theorem 1]{Proof of \cref{thm:main}}
\label{subsec:main proof}
We can now turn to our proof of the main result \cref{thm:main}. The proof is conditional on Claims~\ref{clm:step2}, \ref{clm:step3}, \ref{clm:step4}, and \ref{clm:step4final}, whose proof is given below in \cref{sec:analysis}.
\begin{proof}
	The Claims~\ref{clm:step2}, \ref{clm:step3}, \ref{clm:step4}, and \ref{clm:step4final} establish that in each iteration of Algorithm~\ref{alg:overall} a viable set with energy error at most $\eta^2/4f$ is produced, and \cref{lem:demixing} establishes that this yields an orthonormal set of vectors such that $\bra{\gamma_j}G\ket{\gamma_j} \ge 1-\eta^2/2g$ for all $j=1,\ldots,g$. Define $\Upsilon :=\sum_{j=1}^{g}\proj{\gamma_j}$, and compute
	\begin{align*}
	\lVert G-\Upsilon\rVert_F^2 &= \Tr(G-\Upsilon)^2 = 2g - 2\Tr(G\Upsilon) \le 2g - 2 g (1-\eta^2/2g) = \eta^2 \,.
	\end{align*}
	This completes the proof of the main result conditional on the Claims.
\end{proof}

\subsection{Runtime}
\label{subsec:runtime}

As all the parameters given in \cref{tab:params} are polynomials in $n$, and all operations required to execute \cref{alg:overall} (convex optimisation, inner product of MPS etc.) can be performed at polynomial overhead, the overall runtime is also polynomial in $n$. The leading order term in $n$ gives a runtime of $T=n^{a}$ where $a=\mathcal{O}\bigl(1+\sqrt{\log \eta^{-1}/\log n}\bigr)$, which reduces to
\[ T=\begin{dcases}
n^{\mathcal{O}(1/\epsilon)} & \text{where }\eta=n^{-\mathcal{O}(1)} \,,\\
n^{\mathcal{O}(1)} & \text{where }\eta=n^{-o(1)} \,.
\end{dcases} \]

In Ref.~\cite{Huang2014'''}, an improved analysis of the \textsc{FinalReduce} step (which we will discuss in \cref{subsubsec:AAGSP}) is presented. The scaling  the polynomials $p_0$, $p_1$, and $p_2$ are all improved to $q=n^{\mathcal{O}(1)}$ in the $\eta^{-1}=n^{\mathcal{O}(1)}$ case. This means that the corresponding run-time reduces to $T=n^{\mathcal{O}(1)}$.

One thing to note is that this leading order scaling is independent of the degeneracy. If we now focus on the degeneracy scaling, fixing $n$ and allowing $g$ to vary, we find that the dominant term now comes from the cardinality of the boundary contraction net. As such the leading-order scaling becomes $2^b$ where $b=g^{\tilde{\mathcal{O}}(1/\epsilon)}$; this is also the leading order scaling in the inverse-gap.


\section{Analysis}
\label{sec:analysis}

\subsection{Preliminaries}

To prove the claims in the previous section, we first state some simple results that are easy to prove. These results were also used in Refs.~\cite{LandauVaziraniVidick2013,Huang2014'}, and we state them here without proof. 

\begin{lem}[Overlap results~\cite{LandauVaziraniVidick2013,Huang2014'}]
	\label{lem:omnibus}
	~
	\begin{itemize}
		\item For any states for which $\abs{\braket{v}{w}}\geq 1-\delta$ and $\abs{\braket{v'}{w}}\geq 1-\delta'$, then $\abs{\braket{v}{v'}}\geq 1-2(\delta+\delta')$.
		\item For any two states such that $\abs{\braket{v}{w}}\geq 1-\delta$, $\abs{\braopket{v}{O}{v}-\braopket{w}{O}{w}}\leq 2\sqrt{2\delta}$ for any operator $O$ with $\norm{O}\leq 1$.
		\item If $\abs{\braket{u}{\Gamma}}\leq \omega$ and $\abs{\braket{v}{u}}\geq 1-\delta$ then $\abs{\braket{v}{\Gamma}}\leq \omega+\sqrt{2\delta}$.
		\item For a pair of vectors with bounded overlap $\abs{\braket{u_1}{u_2}}\leq \omega$, and a second pair $\ket{v_1}$, $\ket{v_2}$ such that $\abs{\braket{u_j}{v_j}}\geq1-\delta$ for $j=1,2$, then we have $\abs{\braket{v_1}{v_2}}\leq \omega+\sqrt{10\delta}$.
	\end{itemize}
\end{lem}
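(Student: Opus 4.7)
The plan is to reduce all four bullets to a single Euclidean estimate. Whenever $\abs{\braket{v}{w}}\geq 1-\delta$, I would first rotate the global phase of $\ket{w}$ so that $\braket{v}{w}\geq 0$ is real and nonnegative; then
\[ \norm{\ket{v}-\ket{w}}^2 = 2-2\braket{v}{w}\leq 2\delta, \]
so the two unit vectors are at Euclidean distance at most $\sqrt{2\delta}$, and the overlap's complement satisfies $1-\abs{\braket{w}{v}}^2\leq 2\delta-\delta^2$. Phases only re-enter at the very end when passing back to moduli of inner products.

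For the first bullet I would independently phase-align $\ket{w}$ to $\ket{v}$ and to $\ket{v'}$, apply the triangle inequality $\norm{\ket{v}-\ket{v'}}\leq \sqrt{2\delta}+\sqrt{2\delta'}$, and convert back to obtain
\[ \Re\braket{v}{v'}\geq 1-\delta-\delta'-2\sqrt{\delta\delta'}\geq 1-2(\delta+\delta'), \]
using AM--GM in the last step. For the second bullet I would use the standard rank-two identity $\norm{\proj{v}-\proj{w}}_1=2\sqrt{1-\abs{\braket{v}{w}}^2}\leq 2\sqrt{2\delta-\delta^2}\leq 2\sqrt{2\delta}$ together with matrix H\"older and $\norm{O}\leq 1$, yielding
\[ \abs{\braopket{v}{O}{v}-\braopket{w}{O}{w}} = \abs{\Tr\bigl[O(\proj{v}-\proj{w})\bigr]}\leq 2\sqrt{2\delta}. \]

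For the third bullet I would decompose $\ket{v}=\alpha\ket{u}+\beta\ket{u^\perp}$ with $\alpha=\braket{u}{v}$, so that $\abs{\beta}\leq\sqrt{2\delta}$. Then $\braket{v}{\Gamma}=\alpha^\ast\braket{u}{\Gamma}+\beta^\ast\braket{u^\perp}{\Gamma}$, and the triangle inequality combined with $\abs{\alpha}\leq 1$ and $\abs{\braket{u^\perp}{\Gamma}}\leq 1$ gives $\abs{\braket{v}{\Gamma}}\leq \omega+\sqrt{2\delta}$. The fourth bullet I would obtain by iterating the third: use $\abs{\braket{u_1}{u_2}}\leq\omega$ and $\abs{\braket{u_1}{v_1}}\geq 1-\delta$ (with $\ket{u_2}$ playing the role of $\ket{\Gamma}$) to conclude $\abs{\braket{v_1}{u_2}}\leq \omega+\sqrt{2\delta}$, and then apply it once more with $\ket{v_1}$ fixed and $\ket{u_2}\to\ket{v_2}$ to obtain $\abs{\braket{v_1}{v_2}}\leq \omega+2\sqrt{2\delta}\leq \omega+\sqrt{10\delta}$, the last inequality being slack by design.

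There is no real obstacle here; the only care needed is keeping phase conventions consistent across the independent pairs in bullets one and four, and being mildly wasteful in those estimates so that the final constants come out as clean as in the statement.
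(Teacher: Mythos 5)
Your proposal is correct. Note that the paper itself states \cref{lem:omnibus} without proof, deferring to Refs.~\cite{LandauVaziraniVidick2013,Huang2014'}, so there is no in-paper argument to compare against; your write-up supplies the standard elementary proofs, and all four constants check out ($2\sqrt{\delta\delta'}\le\delta+\delta'$ for the first bullet, $\lVert\proj{v}-\proj{w}\rVert_1=2\sqrt{1-\abs{\braket{v}{w}}^2}$ plus H\"older for the second, the orthogonal decomposition with $\abs{\beta}\le\sqrt{2\delta}$ for the third, and $2\sqrt{2\delta}=\sqrt{8\delta}\le\sqrt{10\delta}$ for the fourth). The one point to tighten in the writing is the phrase ``independently phase-align $\ket{w}$ to $\ket{v}$ and to $\ket{v'}$'': two different phase rotations of $\ket{w}$ give two different vectors, and the triangle inequality must be chained through a single intermediate point. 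The clean fix, which your closing remark about ``keeping phase conventions consistent'' already gestures at, is to put the phases on $\ket{v}$ and $\ket{v'}$ instead (this leaves $\abs{\braket{v}{v'}}$ unchanged) so that both distances are measured to the same $\ket{w}$.
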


\begin{lem}[Orthogonalising lemma]
	\label{lem:ortho}
	Suppose you have two vectors $\ket{u}$ and $\ket{v}$, such that $\braopket{u}{H}{u},\braopket{v}{H}{v}\leq \epsilon_0+\Delta\epsilon$, with bounded overlap $\abs{\braket{u}{v}}\leq \beta$. Then there exist orthogonal vectors $\ket{u'},\ket{v'}\in \Span\lbrace\ket{u},\ket{v}\rbrace$, such that $\braopket{u'}{H}{u'},\braopket{v'}{H}{v'}\leq \epsilon_0+\Delta \epsilon (1+\beta)/(1-\beta)$.
\end{lem}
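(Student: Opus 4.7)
The plan is to apply Gram--Schmidt to the pair $\{\ket{u},\ket{v}\}$ and track how the energy of each resulting orthonormal basis vector behaves. First, by rephasing $\ket{v}$ (which leaves the energy expectation unchanged), we may assume $\alpha := \braket{u}{v} \in [0,\beta]$ is real and nonnegative. Set $\ket{u'} := \ket{u}$ and let $\ket{v'}$ be the normalised component of $\ket{v}$ orthogonal to $\ket{u}$, so that
\[ \ket{v} = \alpha\ket{u'} + \sqrt{1-\alpha^2}\,\ket{v'}\,. \]
By construction $\braopket{u'}{H}{u'}\le\epsilon_0+\Delta\epsilon$ holds automatically, so the entire task reduces to bounding $\braopket{v'}{H'}{v'}$ where $H' := H - \epsilon_0 \ge 0$.

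The key identity is obtained by expanding $\braopket{v}{H'}{v}$ in the decomposition above:
\[ \braopket{v}{H'}{v} = \alpha^2\braopket{u}{H'}{u} + 2\alpha\sqrt{1-\alpha^2}\,\Re\braopket{u}{H'}{v'} + (1-\alpha^2)\braopket{v'}{H'}{v'}\,. \]
The cross term is controlled via the Cauchy--Schwarz inequality for the positive semidefinite operator $H'$, which gives the lower bound $\Re\braopket{u}{H'}{v'} \ge -\sqrt{\braopket{u}{H'}{u}\braopket{v'}{H'}{v'}}$. Substituting this bound and invoking the hypotheses $\braopket{u}{H'}{u},\braopket{v}{H'}{v} \le \Delta\epsilon$ converts the identity into a quadratic inequality in $x := \sqrt{\braopket{v'}{H'}{v'}}$. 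Solving this quadratic (its discriminant simplifies pleasantly to $4(1-\alpha^2)\Delta\epsilon$) gives $x \le (1+\alpha)\sqrt{\Delta\epsilon}/\sqrt{1-\alpha^2}$, and hence
\[ \braopket{v'}{H'}{v'} \le \frac{(1+\alpha)\Delta\epsilon}{1-\alpha} \le \frac{(1+\beta)\Delta\epsilon}{1-\beta}\,, \]
where the last step uses monotonicity of $(1+t)/(1-t)$ on $[0,1)$ together with $\alpha \le \beta$.

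The only delicate point is the direction of the Cauchy--Schwarz estimate: because we want an upper bound on $\braopket{v'}{H'}{v'}$ given an upper bound on $\braopket{v}{H'}{v}$, we need a \emph{lower} bound on the real part of the cross term, and positivity of $H-\epsilon_0$ (a consequence of $\epsilon_0$ being the ground-state energy) is precisely what supplies this. Beyond this sign check the argument is elementary, so no substantive obstacle is expected.
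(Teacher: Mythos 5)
Your proof is correct and follows essentially the same route as the paper: the identical Gram--Schmidt construction $\ket{u'}=\ket{u}$, $\ket{v'}\propto(\one-\proj{u})\ket{v}$, with the cross term controlled by the Cauchy--Schwarz inequality for the positive semidefinite operator $H-\epsilon_0$. The only difference is bookkeeping --- the paper expands $\braopket{v'}{H'}{v'}$ directly as a ratio and bounds numerator and denominator separately, whereas you expand $\braopket{v}{H'}{v}$ and solve the resulting quadratic inequality for $\sqrt{\braopket{v'}{H'}{v'}}$ --- and both yield the same $(1+\beta)/(1-\beta)$ factor.
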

\begin{proof}
	The proof is constructive, take the two vectors to be
	\[ \ket{u'}=\ket{u}\,,\qquad\ket{v'}=\frac{\ket{v}-\ket{u}\braket{u}{v}}{\norm{\ket{v}-\ket{u}\braket{u}{v}}}\,. \]
	As $\ket{v'}\propto(I-\ketbra{u}{u})\ket{v}$ this vector is clearly orthogonal to $\ket{u'}$. As for the energy bound, define $H':=H-\epsilon_0$, then
	\begin{align*}
	\braopket{v'}{H'}{v'}
	&=\frac{\braopket{v}{H'}{v}-2\Re\braopket{v}{H'}{u}\braket{u}{v}+\braket{v}{u}\braopket{u}{H'}{u}\braket{u}{v}}{\norm{\ket{v}-\ket{u}\braket{u}{v}}^2}\\
	&\leq \frac{1+\abs{\braket{u}{v}}}{1-\abs{\braket{u}{v}}}\Delta\epsilon\leq \frac{1+\beta}{1-\beta}\Delta\epsilon
	\end{align*}
	After reintroducing the $\epsilon_0$ term in $H$ we get the final energy bound. 
\end{proof}

	If we apply the same argument to the projector orthogonal to the ground space, the same $(1+\beta)/(1-\beta)$ growth can also be shown to occur in the overlap error. The Schmidt rank of the resulting orthogonalised vectors will in general be the sum of the two original vectors' Schmidt ranks. If we can bound the overlap between a vector and \emph{any} member of a vector space, then that vector can be similarly orthogonalised with respect to that whole space at the same cost.
	
	As expectation values are bilinear in pure states, the optimisations in the \textsc{Trim} and \textsc{ApproxGroundState} are both performed over mixed states such that their objective functions are linear. To aid in the analysis of these optimisations, we will next show that the leading eigenvector of a low energy mixed state must be a low energy pure state.
	
\begin{lem}[Demixing]
	\label{lem:demixing}
	If a mixed state $\sigma$ has an energy at most $\epsilon_0+\Delta\epsilon/(2g+1)$ where $\Delta\leq1/3g$, then the leading eigenvector has energy at most $\epsilon_0+\Delta\epsilon$.
\end{lem}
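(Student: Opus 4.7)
The plan is to diagonalize $\sigma = \sum_j p_j \proj{\psi_j}$ with $p_1 \ge p_2 \ge \cdots$, set $E_j := \braopket{\psi_j}{H}{\psi_j} \ge \epsilon_0$, and show that the leading eigenvector $\ket{\psi_1}$ satisfies $E_1 \le \epsilon_0 + \Delta\epsilon$. The key observation is the purely linear-algebraic inequality
\[
\Tr(H\sigma) = \sum_j p_j E_j \ge p_1 E_1 + (1-p_1)\epsilon_0,
\]
which rearranges to $E_1 - \epsilon_0 \le (\Tr(H\sigma) - \epsilon_0)/p_1 \le \Delta\epsilon / (p_1(2g+1))$. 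Thus everything reduces to a lower bound on $p_1$: if we can show $p_1 \gtrsim 1/g$, we will win.

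To get this lower bound, I would use the gap. Since $H - \epsilon_0 \ge \epsilon (I-G)$ where $G$ is the ground space projector, taking traces against $\sigma$ gives
\[
\Tr(H\sigma) - \epsilon_0 \ge \epsilon \bigl(1 - \Tr(G\sigma)\bigr),
\]
so by hypothesis $\Tr(G\sigma) \ge 1 - \Delta/(2g+1)$. Now $G$ has rank $g$, so writing $G = \sum_{k=1}^g \proj{e_k}$ in any orthonormal basis of the ground space and applying pigeonhole yields some $k$ with $\braopket{e_k}{\sigma}{e_k} \ge (1 - \Delta/(2g+1))/g$. The variational characterization $p_1 = \max_{\|\phi\|=1} \braopket{\phi}{\sigma}{\phi}$ then gives
\[
p_1 \ge \frac{1}{g}\left(1 - \frac{\Delta}{2g+1}\right).
\]

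Plugging this bound back into the first step, the target becomes $g/((2g+1) - \Delta) \le 1$, i.e.\ $\Delta \le g+1$, which is comfortably satisfied by the hypothesis $\Delta \le 1/(3g)$; indeed one gets the stronger conclusion $E_1 \le \epsilon_0 + \Delta\epsilon/2$. There is no real obstacle: the proof is essentially a two-line calculation once one recognizes that the energy assumption controls the ground-space weight of $\sigma$ and that the rank of $G$ forces one of $\sigma$'s eigenvalues to be at least $1/g$. The slight slack between $1/(2g+1)$ and $1/g$ is exactly what accommodates the small correction $\Delta/(2g+1)$ to the ground-space overlap.
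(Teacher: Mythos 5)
Your proof is correct, and it takes a genuinely different route from the paper's. The paper splits the eigenvectors of $\sigma$ into a low-energy set $K=\set{k}{\braopket{v_k}{H}{v_k}\leq\epsilon_0+\Delta\epsilon}$ and its complement, uses a Markov-type bound to show $\sum_{k\in K}\lambda_k\geq 2g/(2g+1)$, and then invokes the fullness lemma (\cref{applem:fullness}) --- this is where the hypothesis $\Delta\leq 1/3g$ enters, forcing each member of $K$ to have ground-space overlap at least $1-1/3g$ so that $\abs{K}\leq g$ --- to conclude that $\max_{k\in K}\lambda_k\geq 2/(2g+1)$ strictly exceeds $\max_{k\notin K}\lambda_k\leq 1/(2g+1)$. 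You instead bound the top eigenvalue directly: the operator inequality $H-\epsilon_0\geq\epsilon(\one-G)$ converts the energy hypothesis into $\Tr(G\sigma)\geq 1-\Delta/(2g+1)$, the rank-$g$ pigeonhole plus the variational characterisation gives $p_1\geq(1-\Delta/(2g+1))/g$, and the elementary inequality $\Tr(H\sigma)\geq p_1E_1+(1-p_1)\epsilon_0$ closes the argument. Your version is more self-contained (no appeal to the appendix), yields the slightly stronger conclusion $E_1\leq\epsilon_0+\Delta\epsilon/2$, and barely uses the hypothesis $\Delta\leq 1/3g$ (only $\Delta\leq g+1$ is needed). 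What the paper's decomposition buys, and yours does not, is the auxiliary information that at most $g$ eigenvectors carry almost all of the weight of $\sigma$ --- a fact that is reused in \cref{lem:ortho energy}, where an ``argument similar to'' the demixing lemma is needed for precisely that stronger conclusion. Both proofs share the same mild imprecision about non-unique leading eigenvectors when the top eigenvalue is degenerate.
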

\begin{proof}
	First decompose the mixed state as $\sigma=\sum_{k> 0}\lambda_k\ketbra{v_k}{v_k}$, where $\lambda_k\geq 0$. By the normalisation of $\sigma_k$ we have $\sum\lambda_k=1$. Next define the set of indices $K$ as
	\[ K=\set{k\,}{\,\braopket{v_k}{H}{v_k}\leq\epsilon_0+\Delta\epsilon}\,. \]
	By comparing the energy bound of the mixed state as a whole as well as those on indices in $K$, we get
	\begin{align*}
	\epsilon_0+\Delta\epsilon/(2g+1)
	&\geq\Tr(\sigma H)\\
	&=\sum_{k\in K}\lambda_k\braopket{v_k}{H}{v_k}+\sum_{k\notin K}\lambda_k\braopket{v_k}{H}{v_k}\\
	&\geq\epsilon_0+\Delta\epsilon\sum_{k\notin K}\lambda_k
	\end{align*}
	which in turn gives $\sum_{k\in K}\lambda_k\geq 2g/(2g+1)$, analogous to a Markov inequality. As we required $\Delta\le1/3g$, each vector corresponding to a index in $K$ has an energy at most $\epsilon_0+\epsilon/3g$, and thus a ground space overlap at least $1-1/3g$. By \cref{applem:fullness} there can only be $g$ such orthonormal vectors, leading us to conclude that $\abs{K}\leq g$. Using this we can bound the largest eigenvalue corresponding to an index in $K$
	\begin{align*}
	\max_{k\in K}\lambda_k
	&\geq\underset{k\in K}{\mathrm{mean }}\lambda_k=\frac{\sum_{k\in K} \lambda_k}{\abs{K}}\geq \frac{2}{2g+1}
	\intertext{as well as that \emph{not} in $K$}
	\max_{k\notin K}\lambda_k
	&\leq \sum_{k\notin K}\lambda_k\leq \frac{1}{2g+1}\,.
	\end{align*}
	As such we can conclude that the largest eigenvalue must belong within $K$, meaning that the leading eigenvector has energy at most $\epsilon_0+\Delta\epsilon$.
\end{proof}

\subsubsection{Non-degenerate ground state approximation}
\label{subsubsec:nondegen}

Here we will briefly address \textsc{NondegenerateGSA}, which we have claimed is a small modification on the nondegenerate ground state approximation algorithms of Ref.~\cite{LandauVaziraniVidick2013,Huang2014'}. The analysis of these algorithms largely holds in the degenerate case, appropriately replacing the notions of ground \emph{state} overlap with ground \emph{space} overlap. The only key difference is the use of a special case of \cref{lem:demixing}. As \cref{lem:demixing} will be invoked during the size-trimming and ground state approximation steps (Sections~\ref{subsec:gsa},\,\ref{subsec:trimming analysis}), the energy levels in these steps grow by a factor of $g$ overlooked in the non-degenerate analysis. By applying a slightly more powerful error reduction, dropping error levels  beforehand by $g$ to compensate, the same overall scaling can be proved, at the mere introduction of an extra constant factor. Due to this rather superficial change we will use these algorithms without proof of their validity. Furthermore we will take 
\begin{align}
q(n):=n^{a}\quad\text{where}\quad a=\mathcal{O}(1+\epsilon^{-1}\sqrt{\log \eta^{-1}/\log n})\label{eqn:q}
\end{align}
to denote the bond dimension of this first approximate ground state $\ket{\gamma_1}$ given by this algorithm. We will show that the bond dimension of all subsequent ground state approximations follows the same scaling, and thus use $q(n)$ to generally denote the bond dimension of any combination thereof. 

\subsection{Size trimming}
\label{subsec:trimming analysis}

Next we will consider the size trimming sub-step, and prove the associated \cref{clm:step2}. In \cref{lem:lowenergy} we will show that there exists a low-energy state orthogonal to the existing ground states, which we hope to find. After this we will introduce and define local-distinguishability, and give some intuition about how our size-trimming works in the low- and high-distinguishability regimes. We will then consider how the error levels behave in these two regimes, showing that the size-trimming procedure yields a good viable set for any level of distinguishability. We will conclude this section in with a full proof of \cref{clm:step2}.

\subsubsection{Low-energy state}

The optimisation \cref{prog:trim} in \textsc{Trim} is performed with respect to $H_L$ and not the whole Hamiltonian. There is no reason to assume \emph{a priori} that states which have low energy with respect to the global Hamiltonian $H$, have low energy with respect to half-chain Hamiltonians $H_L$ or $H_R$. Using the locality of the Hamiltonian however we will see that a global ground state of $H$ can be excited by an energy at most $1$ with respect to $H_L+H_R$.

\begin{lem}
	For any ground state $\ket{\Gamma}$ of $H$,
	\[ \bra{\Gamma}\left(H_L'+H_R'\right)\ket{\Gamma}\leq1 \]
\end{lem}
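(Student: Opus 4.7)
The plan is to sandwich the desired quantity using two simple energy bounds, one from above using a product-state witness and one from below using the positivity of the bond term $H_i$. Recall that $H=H_L+H_i+H_R$ with $0\le H_i\le \one$, $H_L'=H_L-\epsilon_L$, and $H_R'=H_R-\epsilon_R$.

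First, I would upper-bound the ground energy $\epsilon_0$ by a convenient variational ansatz. Let $\ket{\phi_L}\in\mathcal{H}_L$ and $\ket{\phi_R}\in\mathcal{H}_R$ achieve $\epsilon_L$ and $\epsilon_R$ on $H_L$ and $H_R$ respectively, and test $H$ against $\ket{\phi_L}\otimes\ket{\phi_R}$. Since $\ket{\phi_L}\otimes\ket{\phi_R}$ is a product state across the $i|i{+}1$ cut and $H_L$, $H_R$ act on disjoint sides, the $H_L$ and $H_R$ contributions yield $\epsilon_L+\epsilon_R$, while $H_i\le\one$ contributes at most $1$. This gives
\[
\epsilon_0\;\le\;\epsilon_L+\epsilon_R+1.
\]

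Second, I would lower-bound $\bra{\Gamma}H\ket{\Gamma}$ using the nonnegativity of $H_i$. Since $H_i\ge 0$,
\[
\epsilon_0\;=\;\bra{\Gamma}H\ket{\Gamma}\;\ge\;\bra{\Gamma}H_L\ket{\Gamma}+\bra{\Gamma}H_R\ket{\Gamma}\;=\;\bra{\Gamma}(H_L'+H_R')\ket{\Gamma}+\epsilon_L+\epsilon_R.
\]
Combining these two inequalities and cancelling $\epsilon_L+\epsilon_R$ from both sides yields $\bra{\Gamma}(H_L'+H_R')\ket{\Gamma}\le 1$, as required.

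There is no real obstacle here; the entire content of the lemma is the fact that the single cross-cut term $H_i$ has operator norm at most one, together with the fact that the left and right halves are uncoupled in a product state. The proof is a two-line variational argument, so the only thing to be careful about is keeping track of the constant shifts $\epsilon_L,\epsilon_R$ when translating between $H_L,H_R$ and their primed versions.
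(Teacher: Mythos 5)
Your proof is correct and follows essentially the same route as the paper: both use the product of the half-chain ground states as a variational witness to get $\epsilon_0\le\epsilon_L+\epsilon_R+1$, and then use $H_i\ge 0$ to pass to the bound on $\bra{\Gamma}(H_L'+H_R')\ket{\Gamma}$. If anything, your write-up makes the final step (which the paper leaves implicit) more explicit.
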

\begin{proof}
	Consider the state $\ket{v}:=\ket{l}\otimes\ket{r}$, where $\ket{l}$ and $\ket{r}$ are in the $0$-eigenspace of $H_L'$ and $H_R'$ respectively, which are non-empty by construction. The energy of $\ket{v}$ is 
	\[ \braopket{v}{H}{v}=\epsilon_L+\epsilon_R+\braopket{v}{H_i}{v}\,, \]
	so for $\ket{v}$ to not have an energy lower than that of a ground state, we require $\epsilon_0\leq\epsilon_L+\epsilon_R+\braopket{v}{H_i}{v}$. By bounding this further by the normalisation $\norm{H_i}\leq 1$, this means the ground expectation of $H_L'+H_R'$ is at most $1$.
\end{proof}

As global ground states have low energy with respect to $H_L+H_R$, there must exist a good approximant of the ground state in the low-energy subspace of $H_L+H_R$. As such we will restrict our analysis to this subspace, which we will refer to as the truncated Hilbert space.

\begin{defn}[Truncated Hilbert Spaces]
	Let $P_t$ be the projection onto $\mathcal{H}_{[1,i]}\otimes\mathcal{H}_{[i+1,n]}^{\leq t}$, where $\mathcal{H}_{[i+1,n]}^{\leq t}$ is the subspace of $\mathcal{H}_{[i+1,n]}$ spanned by eigenstates of $H_R'$ with eigenvalues at most $t$, that is,
	\[ 
	\mathrm{im}(P_t) = \mathrm{span}\set{ \ket{v}}{H'_R \ket{v} = \lambda \ket{v} \, ; \ \lambda \le t} \,.
	\]
	Similarly let $Q_t$ be the projection onto $\left(\mathcal{H}_{[1,i]}\otimes \mathcal{H}_{[i+1,n]}\right)^{\leq t}$ which corresponds to the span of the eigenvectors of $H_L'+H_R'$ with eigenvalues at most $t$, that is, 
	\[
	\mathrm{im}(Q_t) = \mathrm{span}\set{ \ket{v}}{(H'_R + H'_L)\ket{v} = \lambda \ket{v} \, ; \ \lambda \le t} \,.
	\]
As the left and right Hamiltonians commute and are positive, we have $P_t\geq Q_t$.
\end{defn}

The following choices for $t$ and $\xi$ ensure that the error of the boundary contraction step remains small. 
Define $t:=99\left(\log99+4\log f-\log c-6\log\epsilon\right)$ such that $99\cdot 2^{-t/99}=c\epsilon^6/f^4$. Furthermore without loss of generality we can take $\epsilon\leq10^9\leq\sqrt[6]{70f^4/c}$ such that \[t\geq 1+4\sqrt{c}\epsilon^3/f^2\geq \braopket{\Gamma}{(H_L'+H_R')}{\Gamma}+4\sqrt{c}\epsilon^3/f^2\] for any ground state $\ket{\Gamma}$. Finally we define the spacing of the boundary contraction net $\xi$
\begin{align}
\xi:=c'\epsilon/f(1+t)=\Omega((\epsilon/g)/\log(g/\epsilon))\,.\label{eqn:xi}
\end{align}

In general whilst a ground state can be heavily frustrated, containing contributions from states with arbitrarily high local energy, these contributions will be small. Specifically we expect that a given ground state will have a large overlap with $P_t$ and $Q_t$ --- this result is known as the \emph{truncation lemma}, and was proven in Refs.~\cite{AradKitaevLandauVazirani2013,Huang2014'}.

\begin{lem}[Truncation lemma~\cite{AradKitaevLandauVazirani2013,Huang2014'}]
	\label{lem:trunc}
	For any ground state $\ket{\Gamma}$ of $H$,
	\[\norm{(1-P_t)\ket{\Gamma}}\leq \norm{(1-Q_t)\ket{\Gamma}}\leq 99\cdot2^{-t/99}\,. \]
\end{lem}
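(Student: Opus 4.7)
The first inequality is immediate from the operator inequality $P_t \geq Q_t$ noted just above: it implies $(I-P_t)^2 \leq (I-Q_t)^2$, and taking the expectation in $\ket{\Gamma}$ followed by a square root yields $\norm{(I-P_t)\ket{\Gamma}} \leq \norm{(I-Q_t)\ket{\Gamma}}$.

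For the nontrivial second inequality, my plan is to follow the approximate-ground-state-projector (AGSP) strategy of Arad, Kitaev, Landau, and Vazirani. The preceding lemma supplies $\braopket{\Gamma}{H'_L + H'_R}{\Gamma} \leq 1$, which by Markov already gives the polynomial estimate $\norm{(I-Q_t)\ket{\Gamma}}^2 \leq 1/t$; the task is to boost this to exponential decay by exploiting the spectral gap $\epsilon$ of $H$. The tool is a rescaled Chebyshev polynomial $K_\ell$ of degree $\ell$ in $H$ --- obtained by mapping $T_\ell$ onto the interval $[\epsilon_0 + \epsilon, \norm{H}]$ --- chosen so that $K_\ell\ket{\Gamma} = \ket{\Gamma}$ and $\norm{K_\ell(I - G)} \leq 2^{-\Omega(\ell\sqrt{\epsilon})}$, where $G$ is the ground-space projector. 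The second key ingredient is that $B := H'_L + H'_R$ commutes with its own spectral projectors $Q_s$, so $[H, Q_s] = [H_i, Q_s]$ has operator norm at most $2$; this bounded-commutator structure forces $K_\ell$ to be ``spectrally bandwidth-limited'' with respect to $B$, in the sense that the off-diagonal block $(I - Q_t) K_\ell Q_s$ has operator norm that decays like $2^{-\Omega(\ell\sqrt{\epsilon})}$ whenever $\ell$ is chosen large compared with the gap $t-s$.

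Combining these pieces, I would write $\ket{\Gamma} = K_\ell\ket{\Gamma}$ and split via the resolution $I = Q_s + (I - Q_s)$, then project the result onto $(I - Q_t)$ and apply the triangle inequality. The first summand is bounded by $\norm{(I-Q_t) K_\ell Q_s}$, which is exponentially small once $\ell \gtrsim t-s$, while the second is bounded by $\norm{(I - Q_s)\ket{\Gamma}}$, controlled by Markov applied at scale $s$. Choosing $s = t/2$ and $\ell = \Theta(t)$ and optimising the trade-off yields the claimed bound $\norm{(I - Q_t)\ket{\Gamma}} \leq 99\cdot 2^{-t/99}$; the specific constant $99$ in both the prefactor and the exponent emerges from the Chebyshev estimate balanced against the commutator-bandwidth bound.

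The main obstacle I expect is the quantitative bandwidth estimate --- converting the soft commutator bound $\norm{[H, Q_s]} \leq 2$ into the sharp exponential decay of the off-diagonal block $(I - Q_t) K_\ell Q_s$ --- which is the technical heart of the AKLV argument and requires careful tracking of how each factor of $H$ in the expansion of $K_\ell$ can shuffle weight between distant eigenspaces of $B$. Since the lemma is already established in the cited references~\cite{AradKitaevLandauVazirani2013,Huang2014'}, a clean writeup would invoke their explicit polynomial construction and parameter optimisation to obtain the sharp constant $99$, rather than reproving the bound from scratch.
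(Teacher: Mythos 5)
The paper imports this lemma without proof directly from Refs.~\cite{AradKitaevLandauVazirani2013,Huang2014'}, which is what your writeup ultimately proposes to do as well, and your elementary derivation of the first inequality from $P_t\geq Q_t$ is correct. One direction is flipped in your sketch of the second inequality: the off-diagonal block $\norm{(1-Q_t)K_\ell Q_s}$ is controlled when the degree $\ell$ is \emph{small} relative to $t-s$ (each factor of $H$ in the expansion of $K_\ell$ can shift the spectrum of $H_L'+H_R'$ by only $\mathcal{O}(1)$, since $H-H_L'-H_R'$ differs from $H_i$ by a constant and $\norm{H_i}\le 1$, so a degree-$\ell$ polynomial reaches only $\mathcal{O}(\ell)$ beyond $s$ up to exponentially small leakage), whereas the shrinking factor $\norm{K_\ell(1-G)}$ improves with \emph{large} $\ell$ --- the optimisation balances these two competing requirements, and since you defer to the cited construction for the quantitative step this slip does not invalidate the proposal.
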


Given these results, we can now move on to proving the existence of a low-energy witness state which \cref{prog:trim} will approximate.

\begin{lem}
	\label{lem:lowenergy}
	There exists a state $\ket{w}$ in $\Span\lbrace S_i^{(1)}\rbrace\otimes\mathcal{H}_{[i+1,n]}^{\leq t}$ orthogonal to the previously constructed approximate ground states $\ket{\gamma_1},\,\dots,\,\ket{\gamma_{h-1}}$, of Schmidt rank $B:=B_{\delta}$ where $\delta=8\sqrt{c}\epsilon^3/f^2$, with an energy at most $\epsilon_0+c'\epsilon/f$.
\end{lem}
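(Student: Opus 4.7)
The plan is to construct the desired witness by refining the inductive witness provided by the viability of $S_{h,i-1}$ so that it additionally lies in the low-right-energy subspace $\mathrm{im}(P_t)$ and has low Schmidt rank at the $i|i+1$ cut, while preserving low energy and orthogonality to the previously found approximate ground states.

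First I would unpack the inductive viability. The extension step gives the identity $\Span(S_{h,i}^{(1)}) \otimes \mathcal{H}_{[i+1,n]} = \Span(S_{h,i-1}) \otimes \mathcal{H}_{[i,n]}$, so the $(i-1,s,b,\Delta=c\epsilon^6/f^4)$-viability of $S_{h,i-1}$ immediately yields a state $\ket{u}$ in this support subspace, orthogonal to $\ket{\gamma_1},\ldots,\ket{\gamma_{h-1}}$ and satisfying $\braopket{u}{H}{u} \le \epsilon_0 + c\epsilon^7/f^4$. By \cref{lem:interchange} this in turn gives $\|G\ket{u}\| \ge 1 - c\epsilon^6/f^4$. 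I would then apply the area-law truncation in the spirit of \cref{lem:truncation} to replace $\ket{u}$ by its top Schmidt truncation $\ket{u'}$ across the $i|i+1$ cut. Because the left Schmidt vectors lie in $\Span(S_{h,i-1}) \otimes \mathcal{H}_i = \Span(S_{h,i}^{(1)})$, the truncation keeps $\ket{u'}$ inside the support subspace, and the parameter $\delta = 8\sqrt{c}\epsilon^3/f^2$ is calibrated so that the resulting Schmidt rank is at most $B_\delta$ and $\braopket{u'}{H}{u'} \le \epsilon_0 + O(\sqrt{c}\epsilon^3/f^2)$.

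Next I would project onto $\mathrm{im}(P_t)$ by setting $\ket{w_0} := P_t\ket{u'}/\|P_t\ket{u'}\|$. Since $P_t$ acts as the identity on $[1,i]$, both membership in $\Span(S_{h,i}^{(1)}) \otimes \mathcal{H}_{[i+1,n]}$ and the $i|i+1$ Schmidt rank bound survive, placing $\ket{w_0}$ in $\Span(S_{h,i}^{(1)}) \otimes \mathcal{H}_{[i+1,n]}^{\le t}$ with Schmidt rank at most $B_\delta$. The main technical obstacle here is bounding the energy of $\ket{w_0}$ without incurring a factor of $n$: the naive application of \cref{lem:omnibus} with $O = H/n$ would yield an $n$-dependent blow-up. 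Instead I would use \cref{lem:trunc} to bound $\|(1-P_t)\ket{u'}\|$ by $O(c\epsilon^6/f^4)$ (up to a small term controlled by the ground-space proximity of $\ket{u'}$), and then exploit the exact commutations $[P_t,H_L] = [P_t,H_R] = 0$ to express the energy shift under the $P_t$ projection entirely through $H_i$, whose operator norm is bounded by $1$ independently of $n$. This is precisely why $t$ was chosen so that $99\cdot 2^{-t/99} = c\epsilon^6/f^4$ matches the energy scale of the inductive witness.

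Finally I would use the orthogonalising lemma (\cref{lem:ortho}) to restore exact orthogonality of $\ket{w_0}$ against each $\ket{\gamma_j}$, since the truncation and $P_t$-projection may have slightly spoiled the exact orthogonality inherited from $\ket{u}$. The inner products $|\braket{\gamma_j}{w_0}|$ will be tiny (of order $\sqrt{c}\epsilon^3/f^2$, inherited from the ground-space overlap bounds), so Lemma~\ref{lem:ortho} produces an orthogonalised witness $\ket{w}$ whose energy remains bounded by $\epsilon_0 + c'\epsilon/f$ thanks to the relation $c' = 200\sqrt[4]{c}$ that makes $c'\epsilon/f$ comfortably absorb all the accumulated error contributions.
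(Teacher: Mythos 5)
Your outline follows the paper's skeleton (inductive witness $\to$ restriction to $\mathcal{H}_{[i+1,n]}^{\le t}$ $\to$ Schmidt-rank truncation $\to$ orthogonalisation), but the central step --- controlling the energy under the $P_t$ projection --- has a genuine gap. First, the bound $\norm{(1-P_t)\ket{u'}} = O(c\epsilon^6/f^4)$ is not correct: \cref{lem:trunc} only controls the ground-space component $G\ket{u'}$, and the ``small term controlled by the ground-space proximity'' that you set aside is in fact the dominant one. After truncation the overlap of $\ket{u'}$ with the ground space is $1-O(\sqrt{c}\,\epsilon^3/f^2)$ at best, so $\norm{(1-G)\ket{u'}} = O(c^{1/4}\epsilon^{3/2}/f)$, which exceeds $c\epsilon^6/f^4$ by many orders of magnitude. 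Second, and more importantly, commuting $P_t$ past $H_L$ and $H_R$ does not confine the energy shift to $H_i$: after renormalising by $\norm{P_t\ket{u'}}^{-1}$ the left/right energy picks up a term of order $\braopket{u'}{(H_L+H_R)}{u'}\cdot\norm{(1-P_t)\ket{u'}}^2$, and $\braopket{u'}{(H_L+H_R)}{u'}$ is extensive ($\approx \epsilon_L+\epsilon_R$, which can be $O(n)$), so the $n$-dependence you were trying to avoid reappears. To kill it you must work with the shifted Hamiltonians $H_L'+H_R'$, whose witness expectation is $O(1)$ by the frustration bound, and the paper goes further: it uses the Markov-type inequality $\braopket{v'}{(H_L'+H_R')}{v'} \ge \braopket{v}{(H_L'+H_R')}{v}\norm{P_t\ket{v'}}^2 + t\norm{(1-P_t)\ket{v'}}^2$ together with the choice $t \ge \braopket{\Gamma}{(H_L'+H_R')}{\Gamma} + 4\sqrt{c}\epsilon^3/f^2$ to conclude that the projection incurs \emph{no} penalty on the left/right energy at all, with the $H_i$ term handled separately by comparing to $\ket{\Gamma}$ via \cref{lem:trunc} and \cref{lem:omnibus}. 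Neither the energy shift nor the Markov step appears in your argument, and without one of them the estimate fails.

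A secondary issue is the order of operations: you truncate the Schmidt rank before projecting with $P_t$, whereas the paper projects first and truncates last. The paper's ordering is what calibrates the rank to exactly $B_\delta$ with $\delta = 8\sqrt{c}\epsilon^3/f^2$ (\cref{lem:truncation} is applied to the state $\ket{v}$ whose excitation above $\epsilon_0$ is $8\sqrt{c}\epsilon^3/f^2$), and it keeps the overlap with $\ket{\Gamma}$ at the $1-O(c\epsilon^6/f^4)$ level throughout the $P_t$ analysis, deferring the unavoidable square-root degradation from \cref{lem:truncation} to the very last step before orthogonalisation. Your ordering could probably be repaired by tracking overlaps through the Eckart--Young optimality rather than through energies, but as written the quantitative claims (e.g.\ that the residual overlaps with the $\ket{\gamma_j}$ are of order $\sqrt{c}\epsilon^3/f^2$ rather than $c^{1/4}\epsilon^{3/2}/f$) are off by square roots in several places.
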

\begin{proof}
	Let $\ket{v'}$ be a witness of $S_i^{(1)}$, such that it is orthogonal to the previous ground state approximations and has energy at most $\epsilon_0+c\epsilon^6/f^4$. By \cref{lem:interchange} there exists a ground state $\ket{\Gamma}$ such that $\abs{\langle v'|\Gamma\rangle}\geq 1-c\epsilon^6/f^4$. Using \cref{lem:omnibus} we get that 
	\[\braopket{v'}{H_i}{v'}\geq\braopket{\Gamma}{H_i}{\Gamma}-2\sqrt{2c}\epsilon^3/f^2\,.\]
	Using this together with the energy of $\ket{v'}$, and again assuming that $c \epsilon^6 / f^4$ is sufficiently small gives
	\begin{align*}
	\bra{v'}\left(H_L'+H_R'\right)\ket{v'}
	&\leq\bra{\Gamma}\left(H_L'+H_R'\right)\ket{\Gamma}+2\sqrt{2c}\epsilon^3/f^2+c\epsilon^6/f^4\\
	&\leq\braopket{\Gamma}{(H_L'+H_R')}{\Gamma}+4\sqrt{c}\epsilon^3/f^2\,. 
	\end{align*}
	
	Decomposing the above using the projector $P_t$ and its orthogonal complement -- both of which commute with $H_L'$ and $H_R'$ -- then this gives
	\begin{align*}
	\braopket{\Gamma}{(H_L'+H_R')}{\Gamma}+4\sqrt{c}\epsilon^3/f^2
	&\geq\braopket{v'}{(H_L'+H_R')}{v'}\\
	&\geq\braopket{v'}{P_t(H_L'+H_R')P_t}{v'}\\
	&\quad+\braopket{v'}{(1-P_t)(H_L'+H_R')(1-P_t)}{v'}\\
	&\geq \braopket{v}{(H_L'+H_R')}{v}\norm{P_t\ket{v'}}^2+t\norm{(1-P_t)\ket{v'}}^2
	\end{align*}
	where we have defined $\ket{v}:=P_t\ket{v'}/\norm{P_t\ket{v'}}$, such that $\ket{v}$ now lies in the desired domain $\Span\lbrace S_i^{(1)}\rbrace\otimes\mathcal{H}_{[i+1,n]}^{\leq t}$ by construction. Using $\norm{P_t|v'\rangle}^2+\norm{(1-P_t)|v'\rangle}^2=1$ as well as the inequality $t\geq \braopket{\Gamma}{(H_L'+H_R')}{\Gamma}+4\sqrt{c}\epsilon^3/f^2$, 
	we can reduce the above simply to 
	\begin{align}
	\braopket{v}{(H_L'+H_R')}{v}\leq\braopket{\Gamma}{(H_L'+H_R')}{\Gamma}+4\sqrt{c}\epsilon^3/f^2
	\,.
	\label{eqn:expectedge}
	\end{align}
	The overlap between this truncated witness and the ground state is bounded
	\begin{align*}
	\abs{\braket{v}{\Gamma}}
	&\geq \abs{\braopket{v'}{P_t}{\Gamma}}\\
	&\geq \abs{\braket{v'}{\Gamma}}-\abs{\braopket{v'}{(1-P_t)}{\Gamma}}\\
	&\geq 1-c\epsilon^6/f^4-99\cdot 2^{-t/99}\\
	&= 1-2c\epsilon^6/f^4\,,
	\end{align*}
	and so once again applying \cref{lem:omnibus} gives 
	\begin{align}
	\braopket{v}{H_i}{v}\leq\braopket{\Gamma}{H_i}{\Gamma}+4\sqrt{c}\epsilon^3/f^2 
	\,.\label{eqn:expectmiddle}
	\end{align}
	Combining Equations \eqref{eqn:expectedge} and \eqref{eqn:expectmiddle} we can bound the total energy of the witness projected into the truncated Hilbert Space
	\[\braopket{v}{H}{v}\leq \epsilon_0+8\sqrt{c}\epsilon^3/f^2\,.\]
	
	Next we need to considering trimming the bond dimension. By applying \cref{lem:truncation} and recalling the definition of $B_\delta$ from \cref{eqn:Bdelta}, we get that the truncated state 
\[\ket{w}:=\Trunc^i_{B}\ket{v}/\norm{\Trunc^i_{B}\ket{v}}\]
has $\braopket{w}{H}{w}\leq\epsilon_0+96\sqrt[4]{4c}\epsilon/f\leq \epsilon_0+0.75c'\epsilon/f\,.$
	
	Next we consider the overlap induced with the existing ground states. To bound this we consider the overlap between $\ket{w}$ and $\ket{v'}$. We had that $\abs{\langle v'|\Gamma\rangle}\geq 1-c\epsilon^6/f^4$ and $\abs{\braket{v}{\Gamma}}\geq 1-2c\epsilon^6/f^4$, \cref{lem:omnibus} therefore gives \[\abs{\langle v'|v\rangle}\geq 1-6c\epsilon^6/f^4\,.\] Next take $\ket{\gamma}:=\Trunc^i_B\ket{\Gamma}/\norm{\Trunc^i_{B}\ket{\Gamma}}$, which has $\abs{\braket{\gamma}{\Gamma}}\geq 1-8\sqrt{c}\epsilon^3/f^2$ by \cref{lem:area law}. \cref{lem:omnibus} once more gives that $\abs{\braket{v}{\gamma}}\geq1-20\sqrt{c}\epsilon^3/f^2$. As $\ket{w}$ is a truncation of $\ket{v}$, by the Eckart-Young theorem it must have a larger overlap than $\ket{\gamma}$
	\[ \abs{\braket{v}{w}}\geq\abs{\braket{v}{\gamma}}-20\sqrt{c}\epsilon^3/f^2 \,. \]
	Applying \cref{lem:omnibus} a third time we get finally that 
	\[ \abs{\braket{v'}{w}}\geq 1-52\sqrt{c}\epsilon^3/f^2\,. \]
	As $\ket{\gamma_1},\dots,\ket{\gamma_{h-1}}$ are all perpendicular to $\ket{v'}$, the overlap between $\ket{w}$ and any element of $\Span\lbrace\ket{\gamma_1},\dots,\ket{\gamma_{h-1}}\rbrace$ is upper bounded by the component of $\ket{w}$ orthogonal to $\ket{v'}$
	\[ \sqrt{1-\abs{\braket{v'}{w}}^2}\leq \sqrt{1-\left(1-52\sqrt{c}\epsilon^3/f^2\right)^2}\leq \sqrt{104}\sqrt[4]{c\epsilon^6}/f\,. \]
	
	Taking $\epsilon\leq 10^{7}$ without loss of generality this overlap is less than $1/7$, so we can orthogonalise $\ket{w}$ to the existing ground states at a multiplicative energy cost of $4/3$, resulting in a final state with $\braopket{w}{H}{w}\leq \epsilon_0+c'\epsilon/f$, and all other desired properties by construction.
	
\end{proof}

\subsubsection{Distinguishability}

Viable sets capture local information about a witness state, specifically the Schmidt vectors. If two states are globally orthogonal, this can mean that their Schmidt vectors span orthogonal spaces, but they can also span identical regions. Indeed global orthogonality of witness states cannot be locally imposed for this reason. 

When considering viable sets, a natural notion of distinguishability is the overlap of the desired witness state, and the Schmidt vectors of the already constructed states. If we take $\ket{w}$ to be a witness state which we wish to capture, $L_{h,i}$ to be the set of Schmidt vectors on $[1,i]$ of the $h$ already constructed ground states, and $\Pi_{h,i}$ the projector onto $\Span(L_{h,i})$, then the distinguishability is defined as
\begin{align*}
D_{h,i}&:=1-\braopket{w}{\Pi_{h,i}}{w}.
\end{align*}
From here we will omit the subscripts on $L$, $\Pi$, and $D$, since they will be clear from context. If we think of $\Pi$ as being the projector onto the vectors on $[1,i]$ which we already have, then $D$ can be considered a measure of how much of the state is unaccounted for.  

Clearly if $D$ is low, then a viable set can be constructed entirely by recycling existing Schmidt vectors. The precise bounds for the quality of viable set constructed in this manner, as a function of $D$, are given in \cref{lem:lowD}.

In the case of a high $D$ new vectors need to be found. In this case we know there is little overlap between the Schmidt vectors of our new witness and $\Span(L)$. Our procedure works by restricting to a low-energy subspace, and projecting away from the existing vectors, by minimising the expectation value of $\Pi$. In \cref{lem:ortho energy} we will show that this procedure must produce a low energy state. Given that our witness is highly distinguishable, \cref{lem:highdist} will show that this new witness after size trimming must also be highly distinguishable. This will culminate in \cref{lem:trim overlap} where we show that this therefore implies that the new witness has small overlap with the previous ground states, producing a good viable set.

We will conclude this subsection by showing in \cref{lem:combined} that by performing both the low- and high-distinguishability procedures and combining the results, we can upper bound the error of our viable set in a distinguishability-independent manner.

\subsubsection[Low D]{Low $D$}

By definition, in the low $D$ case the left Schmidt vectors of $\ket{w}$ have large overlap with $L$. As a result we can see that $L$ will form a viable set, whose witness is the projection of $\ket{w}$ onto $L$, and that the error of this viable set is low for low $D$.

\begin{lem}[Low $D$ Solution]
	\label{lem:lowD}
	There exists a state $\ket{v}$ orthogonal to $\ket{\gamma_1},\dots,\ket{\gamma_{h-1}}$ with $\Tr_{[i+1,n]}\ketbra{v}{v}$ supported on the span of $L_{h,i}$, such that there exists a ground state $\ket{\Gamma}$ with $\abs{\braket{\Gamma}{v}}\geq 1-\delta_{\rm Low}$ where
	\begin{align}
	\delta_{\rm Low}= 1-\frac{1-c'/f-\sqrt{D}}{\sqrt{1-D}}\,.\label{eqn:lowd}
	\end{align}
\end{lem}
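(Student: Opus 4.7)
The plan is to take the low-energy witness $\ket{w}$ already produced by the previous lemma and simply project its left half onto $\Span(L_{h,i})$. Concretely, I would define
\[ \ket{v}:=\frac{(\Pi_{h,i}\otimes I_R)\ket{w}}{\sqrt{1-D}}, \]
which is well-defined and normalized because $D=1-\braopket{w}{\Pi_{h,i}\otimes I_R}{w}$. Expanding $\ket{w}=\sum_j\lambda_j\ket{a_j}\ket{b_j}$ in its Schmidt form across the $i|i+1$ cut makes it immediate that $\Tr_{[i+1,n]}\ketbra{v}{v}\propto\sum_j\lambda_j^2\,\Pi_{h,i}\ketbra{a_j}{a_j}\Pi_{h,i}$ is supported on $\Span(L_{h,i})$, which handles the reduced-state condition. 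Orthogonality to each $\ket{\gamma_k}$ is essentially free: because $L_{h,i}$ contains \emph{all} left Schmidt vectors of each $\ket{\gamma_k}$, the projector fixes them, $(\Pi_{h,i}\otimes I_R)\ket{\gamma_k}=\ket{\gamma_k}$, and hence $\braket{\gamma_k}{v}=\braket{\gamma_k}{w}/\sqrt{1-D}=0$ by the orthogonality already enjoyed by $\ket{w}$.

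The core calculation is the overlap with a ground state. I would pick the natural candidate $\ket{\Gamma}:=G\ket{w}/\norm{G\ket{w}}$. The previous lemma gives $\braopket{w}{H}{w}\leq\epsilon_0+c'\epsilon/f$, and \cref{lem:interchange} then yields $\braket{\Gamma}{w}=\norm{G\ket{w}}\geq 1-c'/f$. Writing $\Pi_{h,i}\otimes I_R=I-(I-\Pi_{h,i})\otimes I_R$ and applying Cauchy--Schwarz to the second piece, with $\norm{((I-\Pi_{h,i})\otimes I_R)\ket{w}}=\sqrt{D}$ and $\norm{\ket{\Gamma}}=1$, gives
\[ |\braket{\Gamma}{v}|\geq\frac{|\braket{\Gamma}{w}|-\sqrt{D}}{\sqrt{1-D}}\geq\frac{1-c'/f-\sqrt{D}}{\sqrt{1-D}}=1-\delta_{\rm Low}, \]
which is exactly the advertised bound.

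All three requirements therefore fall out of a single construction, so I do not expect any genuine obstacle. The most delicate point is verifying $(\Pi_{h,i}\otimes I_R)\ket{\gamma_k}=\ket{\gamma_k}$, which hinges on $L_{h,i}$ being the \emph{full} collection of left Schmidt vectors of the previously constructed ground states rather than some approximation of them; everything else is direct manipulation of $\Pi+(I-\Pi)=I$ together with the overlap bound $\norm{G\ket{w}}\geq 1-c'/f$ already supplied by interchangeability. Note the bound degrades exactly as intuition demands: it is close to $1-c'/f$ for $D\to 0$ and becomes vacuous as $D$ grows, which is precisely why the high-$D$ regime will require a separate treatment.
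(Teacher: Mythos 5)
Your proposal is correct and follows essentially the same route as the paper's own proof: the same normalized projection $\Pi\ket{w}/\norm{\Pi\ket{w}}$, the same appeal to \cref{lem:interchange} to get $\norm{G\ket{w}}\ge 1-c'/f$, and the same decomposition $I=\Pi+(I-\Pi)$ with triangle and Cauchy--Schwarz inequalities to obtain $1-c'/f\le\sqrt{1-D}\,\abs{\braket{\Gamma}{v}}+\sqrt{D}$. The only difference is that you spell out the reduced-state support and orthogonality checks slightly more explicitly than the paper does.
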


\begin{proof}
	The idea here is to project $\ket{w}$ into $\Span\left(L_{h,i}\right)\otimes \mathcal{H}_{[i+1,n]}$ and use the distinguishability to bound the error associated with renormalisation. Specifically take 
	\[\ket{v}:=\frac{\Pi\ket{w}}{\norm{\Pi\ket{w}}}\]
	where we recall that $\braopket{w}{\Pi}{w}=1-D$. As the existing ground state approximations are all stable under $\Pi$ by construction, the orthogonality of $\ket{w}$ to these vectors carries over to $\ket{v}$. The energy of $\ket{w}$ is at most $\epsilon_0+c'\epsilon/f$, by \cref{lem:interchange} this means there is a ground state $\ket{\Gamma}$ with \[\abs{\braket{w}{\Gamma}}\geq 1-c'/f\,.\]
	Decomposing this with the projector $\Pi$, and using both the triangle inequality and Cauchy's ineqality, we get
	\begin{align*}
	1-c'/f
	&\leq \abs{\braket{\Gamma}{w}}\\
	&\leq \abs{\braopket{\Gamma}{\Pi}{w}}+\abs{\braopket{\Gamma}{(I-\Pi)}{w}}\\
	&\leq\sqrt{1-D}\cdot\abs{\braket{\Gamma}{v}}+\sqrt{D}\,.
	\end{align*}
	Rearranging this gives the desired ground state overlap stated above.
\end{proof}

\subsubsection[High D]{High $D$}

Take $\sigma$ to both be the solution to Program~\ref{prog:trim} for the $X$ closest to $\cont(w)$ and the smallest $Y$ which is greater than or equal to $\braopket{w}{H_L}{w}-\braopket{\gamma_1}{H_L}{\gamma_1}$, i.e.\ the tightest set of constraints satisfied by $\ketbra{w}{w}$. By the nature of the two nets, described in \cref{subsubsec:net}, we have that
\[ \norm{X-\cont(w)}_1\leq \xi\qquad\text{and}\qquad0\leq Y-\braopket{w}{H_L}{w}+\braopket{\gamma_1}{H_L}{\gamma_1}\leq 4c'\epsilon/f\,.  \]

Here we want to argue that if $D$ is not too low, then one of the $\sigma$ outputs of \cref{prog:trim} has an eigenvector which can be extended to a good witness. To extend the eigenvectors of $\sigma$ we take the `boundary-uncontracted' version $\sigma'=U_w \sigma U_w^\dagger$, eigendecomposing these matrices as:
\begin{align*}
\sigma&=\sum_k\lambda_k \ketbra{v_k}{v_k}  &  \sigma'&=\sum_k\lambda_k \ketbra{v_k'}{v_k'}
\end{align*}
where $\ket{v'_k}=U_w\ket{v_k}$.

In \textsc{Trim} we kept all eigenvectors whose value exceeded $10^{-9}/g$.
We next want to show that there can only exist a finite number of such eigenvectors, and that they represent the vast majority of the spectrum of $\sigma$.

\begin{lem}
	\label{lem:ortho energy}
	There exist at most $g$ eigenvectors of $\sigma'$ with energy at most $\epsilon_0+\num{30000}c'\epsilon/g$, and the sum of the corresponding eigenvalues is at least $\Lambda = 1-1/(\num{10000}g+\num{5000})$.
\end{lem}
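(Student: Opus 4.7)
The plan is to first upper-bound the total energy $\Tr(\sigma'H)$ by $\epsilon_0 + \mathcal{O}(c'\epsilon/f)$ via \cref{lem:boundary contraction energy}, and then combine a Markov-type argument with the dimensionality of the ground space to obtain both claims.

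First, I would translate the program constraints on $\sigma$ into near-equalities with the corresponding data for $\ket{w}$. By the choice of $Y$ as the smallest net point exceeding $\braopket{w}{H_L}{w}-\braopket{\gamma_1}{H_L}{\gamma_1}$, together with the left-energy constraint in \cref{prog:trim},
\[ \Tr(\sigma H_L) \le \braopket{\gamma_1}{H_L}{\gamma_1} + Y \le \braopket{w}{H_L}{w} + 4c'\epsilon/f. \]
By the triangle inequality combined with $\norm{X-\cont(w)}_1\le \xi/2$ from the boundary contraction net,
\[ \norm{\Tr_{[1,i-1]}\sigma - \cont(w)}_1 \le \xi. \]
The crucial physical input is that \cref{lem:lowenergy} placed $\ket{w}$ in $\Span\bigl\lbrace S_i^{(1)}\bigr\rbrace\otimes \mathcal{H}_{[i+1,n]}^{\le t}$, so every right Schmidt vector of $\ket{w}$ lies in the truncated right Hilbert space. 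This gives $\norm{H_R \Pi_w}\le \epsilon_R+t$, and hence $\norm{H_R\Pi_w}-\braopket{w}{H_R}{w}\le t$ since $\braopket{w}{H_R}{w}\ge\epsilon_R$. Inserting these into \cref{lem:boundary contraction energy}, using $\braopket{w}{H}{w}\le \epsilon_0+c'\epsilon/f$ from \cref{lem:lowenergy}, and invoking the choice of $\xi$ in Eq.~\eqref{eqn:xi} which gives $\xi(1+t)=c'\epsilon/f$, yields
\[ \Tr(\sigma'H) \le \epsilon_0 + 6c'\epsilon/f. \]

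Next, I would perform a Markov-type argument on the eigendecomposition $\sigma'=\sum_k\lambda_k\ketbra{v_k'}{v_k'}$. Letting $K$ denote the set of $k$ with $\braopket{v_k'}{H}{v_k'}\le \epsilon_0+30000c'\epsilon/g$, and using $\braopket{v_k'}{H}{v_k'}\ge \epsilon_0$ for all $k$, the total energy bound yields
\[ \frac{30000 c'\epsilon}{g}\sum_{k\notin K}\lambda_k \le \Tr(\sigma'H)-\epsilon_0 \le \frac{6c'\epsilon}{f}. \]
Since $f=g(2g+1)$, this rearranges to $\sum_{k\notin K}\lambda_k \le 1/(5000(2g+1))$, so that $\sum_{k\in K}\lambda_k \ge 1 - 1/(10000g+5000) = \Lambda$, which is the eigenvalue-mass claim.

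Finally, I would bound $|K|$ using the dimensionality of the ground space. For every $k\in K$, \cref{lem:interchange} gives $\norm{G\ket{v_k'}}\ge 1-30000c'/g$, which is extremely close to $1$ given $c'=10^{-11}$. Since the $\ket{v_k'}$ for $k\in K$ are mutually orthonormal (being eigenvectors of the Hermitian matrix $\sigma'$) and the ground space has dimension $g$, the fullness lemma \cref{applem:fullness}---the same mechanism invoked inside \cref{lem:demixing}---forces $|K|\le g$. The main obstacle here is the first step: without exploiting the truncation $\ket{w}\in\mathrm{im}(P_t)$, the last term of \cref{lem:boundary contraction energy} would scale as $\xi\cdot\norm{H_R}$, which could be as large as $\xi n$. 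The whole point of projecting $\ket w$ into the truncated right Hilbert space in \cref{lem:lowenergy}, combined with the choice $\xi=c'\epsilon/(f(1+t))$, is precisely to make the boundary-contraction error contribute at the $c'\epsilon/f$ scale rather than at a scale growing with $n$.
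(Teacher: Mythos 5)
Your proposal is correct and follows essentially the same route as the paper: bound $\Tr(H\sigma')\le\epsilon_0+6c'\epsilon/f$ via \cref{lem:boundary contraction energy} using the program constraints, the truncated right Hilbert space bound $\norm{U_w^\dagger H_R'U_w}\le t$, and $\xi(1+t)=c'\epsilon/f$, then run the \cref{lem:demixing}-style Markov argument with the extra factor of $5000(2g+1)$ together with \cref{applem:fullness} to get $\abs{K}\le g$ and $\sum_{k\in K}\lambda_k\ge\Lambda$. In fact you spell out the Markov and fullness steps that the paper only gestures at with ``an argument similar to \cref{lem:demixing}.''
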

\begin{proof}
	Defining $\sigma'=U_w\sigma U_w^\dagger$, the boundary contraction property \cref{lem:boundary contraction energy} gives an energy bound on $\sigma'$.
	\begin{align*}
	\Tr(H\sigma')
	&\leq \Tr(H_L\sigma)+\braopket{w}{(H_i+H_R)}{w}
	\\&\qquad
	+\norm{\Tr_{[1,i-1]}(\sigma)-\cont(w)}_1
	\left(1+\norm{U_w^\dag H_RU_w}-\braopket{w}{H_R}{w}\right)\\
	&\leq \Tr(H_L\sigma)+\braopket{w}{(H_i+H_R)}{w}
	+\norm{\Tr_{[1,i-1]}(\sigma)-\cont(w)}_1
	\left(1+\norm{U_w^\dagger H_R'U_w}\right)\\
	&\leq\braopket{w}{H}{w}+\bigl(Y-\braopket{w}{H_L}{w}+\braopket{\gamma_1}{H_L}{\gamma_1}\bigr)+\xi(1+t)\\
	&\leq \epsilon_0+6c'\epsilon/f\,.
	\end{align*}
	By applying an argument similar to \cref{lem:demixing}, if we increase the excitation not only by a factor of $(2g+1)$ but also by $5000$, we get that there are between $1$ and $g$ eigenvectors which have energy at most $\epsilon_0+\num{30000}c'\epsilon/g$. Moreover we also get that the sum of the corresponding eigenvalues is at least $1-1/(\num{10000}g+\num{5000}) = \Lambda$.
\end{proof}
Next we rearrange the order of the eigenvectors. Let $1,\dots,g$ refer to these $g$ vectors\footnote{Padding out these indices with zeros as necessary if there are less than $g$ such vectors.} described in \cref{lem:ortho energy}. In other words:
\begin{align*}
\braopket{v_k'}{H}{v_k'}&\leq \epsilon_0+\num{30000}c'\epsilon/g\quad\text{   for }k \leq g\\
\braopket{v_k'}{H}{v_k'}&\geq \epsilon_0+\num{30000}c'\epsilon/g\quad\text{   for }k> g
\end{align*}
Next we drop any such vectors with eigenvalues less than the $10^{-9}/g$ threshold\footnote{Once again padding out with zero vectors such that there are still $g$ such vectors for convenience.}. It should first be noted that by applying \cref{lem:demixing} to $\sigma$ we get that the dominant eigenvalue that is at least $2/(2g+1)$ has a corresponding eigenvector in this set, and so this set is non-empty. Next we note that as there are at most $g-1$ such vectors dropped, so decrease in the sum of eigenvalues is at most $10^{-9}\cdot\frac{g-1}{g}\leq 10^{-9}$. As such we can conclude that the sum of the corresponding eigenvalues is still bounded
\[ \sum_{k=1}^{g}\lambda_k\geq \Lambda-10^{-9}\,. \]

Without loss of generality, take $\ket{v_1'}$ to be the member of $\lbrace\ket{v_1'},\dots,\ket{v_g'}\rbrace$ for which the expectation value of $\Pi$ is the lowest. We now want to show that the expectation of $\Pi$ with respect to $\ket{v_1'}$ not significantly higher than it is for $\ket{w}$ (which is $1-D$ by definition). As such this will imply that in the high $D$ case $\ket{v_1'}$ has small overlap with $L$, meaning that we have succeeded in finding a state with low overlap with the existing ground states.

\begin{lem}
	\label{lem:highdist}
	\[ \braopket{v_1'}{\Pi}{v_1'}\leq \left[\frac{1}{\Lambda}+10^{-9}\right](1-D) \]
\end{lem}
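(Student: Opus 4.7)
The plan is to leverage the optimality of $\sigma$ in \cref{prog:trim} together with the eigendecomposition to produce the advertised bound. The key feasible point in the program will be $\proj{\ls(w)}$, the boundary-contracted version of the witness $\ket w$ from \cref{lem:lowenergy}.

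First I would verify that $\proj{\ls(w)}$ is feasible for \cref{prog:trim} with the chosen $X$ and $Y$. Three checks are needed. (i) The left-energy constraint: $\Tr(H_L \proj{\ls(w)}) = \braopket{w}{H_L}{w}$, since $H_L$ is supported on $[1,i]$ and the left reduced density operators of $\proj{w}$ and $\proj{\ls(w)}$ agree; the choice of $Y$ as the smallest net point at least $\braopket{w}{H_L}{w}-\braopket{\gamma_1}{H_L}{\gamma_1}$ gives $\Tr[H_L(\proj{\ls(w)}-\proj{\gamma_1})]\le Y$. (ii) The boundary-contraction constraint: by definition $\Tr_{[1,i-1]}\proj{\ls(w)} = \cont(w)$, and the boundary-contraction net yields an $X$ with $\lVert \cont(w)-X\rVert_1\le \xi/2$. (iii) Support: $\ket{\ls(w)}=U_w^\dagger \ket{w}$ has its left factor supported on the left Schmidt vectors of $\ket w$, which by \cref{lem:lowenergy} lie in $\Span(S_{h,i}^{(1)})\subseteq \Span(S_{h,i}^{(1)}\cup L_{h,i})$. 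By optimality, therefore,
\[ \Tr(\Pi\sigma)\;\le\;\Tr\bigl(\Pi\proj{\ls(w)}\bigr) \;=\; \braopket{w}{\Pi}{w} \;=\; 1-D\,, \]
where the middle equality holds because $\Pi\otimes \one$ sees only the left reduced density matrix.

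Next I would push the objective back to $\mathcal H$. Writing $U_w = \one_L \otimes U_R$ for the isometry $U_R\colon \mathbb{C}^B\to \mathcal H_R$, $\ket j\mapsto \ket{b_j}$, a direct calculation gives
\[ \braopket{v_k'}{\Pi}{v_k'} \;=\; \braopket{U_w v_k}{\Pi\otimes \one_R}{U_w v_k} \;=\; \braopket{v_k}{\Pi\otimes U_R^\dagger U_R}{v_k} \;=\; \braopket{v_k}{\Pi\otimes \one}{v_k}\,, \]
so that $\Tr(\Pi\sigma) = \sum_k \lambda_k \braopket{v_k'}{\Pi}{v_k'}$. Dropping the nonnegative contributions with $k>g$ and invoking the choice of $\ket{v_1'}$ as the minimiser of $\braopket{v_k'}{\Pi}{v_k'}$ over $k=1,\dots,g$ gives
\[ \braopket{v_1'}{\Pi}{v_1'}\sum_{k=1}^{g}\lambda_k \;\le\; \Tr(\Pi\sigma) \;\le\; 1-D\,. \]
Combining this with the bound $\sum_{k=1}^g \lambda_k \ge \Lambda-10^{-9}$ derived in the paragraphs just above the claim, and the elementary estimate $1/(\Lambda-10^{-9}) \le 1/\Lambda+10^{-9}$ (which holds up to the small slack absorbed into the additive $10^{-9}$, since $\Lambda=1-1/(\num{10000}g+\num{5000})$ is extremely close to $1$), gives the claim.

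The main obstacle is the feasibility step: one must check carefully that $\proj{\ls(w)}$, built from the hypothetical witness $\ket w$, satisfies \emph{every} constraint of the program despite the net rounding of $X$ and $Y$. Once that is in place, the rest is bookkeeping on the eigendecomposition of $\sigma$ and the isometric identification under $U_w$, plus the mentioned numerical estimate.
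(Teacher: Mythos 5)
Your proposal is correct and follows essentially the same route as the paper's proof: feasibility of the boundary-contracted witness for the chosen net points $X,Y$, optimality giving $\Tr(\Pi\sigma)\le 1-D$, commutation of $\Pi$ with $U_w$ to identify primed and unprimed expectations, dropping the $k>g$ terms, minimality of $\ket{v_1'}$, and the eigenvalue-sum bound $\sum_{k\le g}\lambda_k\ge\Lambda-10^{-9}$. The only difference is that you spell out the feasibility check in detail where the paper simply asserts it (and you correctly phrase the feasible point as $\proj{\ls(w)}$ rather than $\proj{w}$); the final numerical step $[\Lambda-10^{-9}]^{-1}\le \Lambda^{-1}+10^{-9}$ is treated with the same (harmless) looseness in both.
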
 
\begin{proof}
	The specific iteration of \cref{prog:gsa} we are referring to is that for which $\ket{w}$ satisfies all of the constraints of the optimisation. As such, by the optimality of the solution $\sigma$, we know that $\braopket{w}{\Pi}{w}\geq \Tr(\sigma \Pi)$. Decomposing this and using the definition of $D$:
	\begin{align*}
	\braopket{w}{\Pi}{w}&\geq \Tr(\sigma\Pi)\\
	1-D &\geq \sum_k \lambda_k\braopket{v_k}{\Pi}{v_k}\,.
	\intertext{As $\Pi$ acts strictly to the left of the cut and $U_w$ strictly to the right they comute, meaning the expectation value of $\Pi$ for primed and unprimed eigenvectors is identical.}
	1-D&\geq \sum_k\lambda_k\braopket{v_k'}{\Pi}{v_k'}\\
	&\geq \sum_{k=1}^{g}\lambda_k\braopket{v_k'}{\Pi}{v_k'}\,.\\
	\intertext{As we defined $\ket{v_1'}$ to have minimal expectation this can be further loosened to}
	1-D&\geq \left(\sum_{k=1}^{g}\lambda_k\right)\braopket{v_1'}{\Pi}{v_1'}\,.
	\end{align*}
	Rearranging this together with the bound for $\sum_{k=1}^{g}\lambda_k$, we get 
	\[ \braopket{v_1'}{\Pi}{v_1'}\leq \bigl[\Lambda-10^{-9}\bigr]^{-1}(1-D)\,. \]
	After a binomial expansion, followed by a loosening of the inequality, we arrive at our final bound.
\end{proof}

In general $\ket{v_1'}$ will have some non-trivial overlap with the existing ground state approximations. This low expectation value for $\Pi$ can however be used to bound this overlap when $D$ is not too small.

\begin{lem}
	\label{lem:trim overlap}
	For any vector $\ket{\gamma}\in\Span\lbrace\ket{\gamma_1},\dots,\ket{\gamma_{h-1}}\rbrace$ the overlap between $\ket{\gamma}$ and $\ket{v_1'}$ is upper bounded
	\[ \abs{\braket{\gamma}{v_1'}}^2\leq \left[\frac{1}{\Lambda}+10^{-9}\right](1-D)
	\,. \]
\end{lem}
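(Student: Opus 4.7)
The plan is to reduce this overlap bound directly to \cref{lem:highdist} by exploiting that $\Pi = \Pi_{h,i}$ acts as the identity on the span of the previously constructed ground states, in the sense that $(\Pi \otimes I_R)\ket{\gamma_j} = \ket{\gamma_j}$ for every $j < h$. This is immediate from the construction of $L_{h,i}$: its elements are precisely the left Schmidt vectors of each $\ket{\gamma_j}$ across the $i|i+1$ cut, so expanding $\ket{\gamma_j}$ in its Schmidt decomposition shows that every term is already stabilised by $\Pi$ on the left factor. By linearity this extends to any $\ket{\gamma}\in\Span\{\ket{\gamma_1},\dots,\ket{\gamma_{h-1}}\}$.

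With this identity in hand, the overlap calculation becomes a one-line application of Cauchy--Schwarz. Writing $\braket{\gamma}{v_1'} = \bra{\gamma}(\Pi\otimes I_R)\ket{v_1'}$ and applying the standard inequality yields
\[
\abs{\braket{\gamma}{v_1'}}^2 \leq \norm{\ket{\gamma}}^2 \cdot \norm{(\Pi\otimes I_R)\ket{v_1'}}^2 = \braopket{v_1'}{\Pi}{v_1'},
\]
where the last equality uses that $(\Pi\otimes I_R)$ is a projector (so its square is itself), and I have suppressed the tensor with $I_R$ in the final expression consistent with the notation used in \cref{lem:highdist}. Plugging in the bound on $\braopket{v_1'}{\Pi}{v_1'}$ established there delivers exactly the claimed inequality.

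There is essentially no obstacle: the entire content of this claim is the observation that projecting with $\Pi\otimes I_R$ is a ``free'' insertion when paired against any vector in the span of the previously found approximate ground states, which converts the inner product bound into the expectation value bound of the preceding lemma. The only bookkeeping point worth being careful about is the distinction between $\Pi$, which lives on $\mathcal{H}_L$, and its extension $\Pi\otimes I_R$ to the full Hilbert space, and the fact that $\ket{v_1'} = U_w \ket{v_1} \in \mathcal{H}$ while $\Pi$ commutes with $U_w$ (the latter acts only on the right factor), which is what allows $\braopket{v_1'}{\Pi}{v_1'} = \braopket{v_1}{\Pi}{v_1}$ to be identified with the quantity already bounded.
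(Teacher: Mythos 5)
Your proof is correct and follows essentially the same route as the paper: insert $\Pi$ for free using $\Pi\ket{\gamma}=\ket{\gamma}$, apply Cauchy--Schwarz and the projector property to reduce to $\braopket{v_1'}{\Pi}{v_1'}$, then invoke \cref{lem:highdist}. The extra bookkeeping you supply (why $\Pi$ stabilises the previous ground states, and the commutation of $\Pi$ with $U_w$) is accurate and merely makes explicit what the paper leaves implicit.
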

\begin{proof}
	The main property to use here is that $\Pi\ket{\gamma}=\ket{\gamma}$ by definition. Using this and Cauchy-Schwartz gives
	\begin{align*}
	\abs{\braket{\gamma}{v_1'}}^2 =\abs{\braopket{\gamma}{\Pi}{v_1'}}^2 \leq\norm{\Pi\ket{v_1'}}^2 =\braopket{v_1'}{\Pi}{v_1'} \leq \left[\frac{1}{\Lambda}+10^{-9}\right](1-D)\,.
	\end{align*}
\end{proof}

Take $\ket{u}$ to be $\ket{v_1'}$ projected orthogonally to $\ket{\gamma_1},\dots,\ket{\gamma_{h-1}}$, as per \cref{lem:ortho}. Then the ground space overlap of this vector can be bounded to be $\abs{\braket{\Gamma}{u}}\geq 1-\delta_{\rm High}$ for some ground state $\ket{\Gamma}$, where 
\begin{align}
\delta_{\text{High}}=\frac{\num{30000}c'}{g}\frac{1+\beta}{1-\beta}\quad\text{     and}\quad\beta^2\leq\left[\frac{1}{\Lambda}+10^{-9}\right](1-D)\,.\label{eqn:highd}
\end{align}

\subsubsection[All D]{All $D$}

Next we can combine the two above errors. We have shown that there exist bounds on the overlap error $\delta$ in the low- and high-$D$ cases. By combining these, we can bound $\delta$ independent of the unknown $D$.

\begin{lem}
	\label{lem:combined}
	The combined error $\delta=\min (\delta_{\rm Low},\delta_{\rm High})$ over the total range of distinguishabilities is bounded $\delta\leq 0.01$ for all $g\geq 2$ and $D\in\left[0,1\right]$.
\end{lem}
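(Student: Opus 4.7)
The plan is to split the analysis into two regimes of $D$ and use whichever of the two bounds is tighter. Equation~\eqref{eqn:lowd} defines a $\delta_{\rm Low}(D)$ that is monotonically increasing in $D$ (rising from $\approx c'/f \le 10^{-12}$ at $D=0$ towards $1$ as $D \to 1$), while \eqref{eqn:highd} defines a $\delta_{\rm High}(D)$ that is monotonically decreasing in $D$ (because $\beta^2 \propto 1-D$ and $(1+\beta)/(1-\beta)$ is monotone in $\beta$). There is therefore a unique crossover $D^\star$, and it suffices to exhibit a threshold at which both $\delta_{\rm Low}(D^\star) \le 0.01$ and $\delta_{\rm High}(D^\star) \le 0.01$; the two monotonicities then handle $D < D^\star$ and $D > D^\star$ respectively.

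For the low-$D$ side I would discard the negligible $c'/f$ correction and use the identity $(1-\sqrt{D})/\sqrt{1-D} = \sqrt{(1-\sqrt{D})/(1+\sqrt{D})}$ to recast $\delta_{\rm Low} \le 1/100$ as $(1-\sqrt{D})/(1+\sqrt{D}) \ge (99/100)^2$. Solving gives $\sqrt{D} \le 199/19801 \approx 0.01005$, suggesting $D^\star = 10^{-4}$, and a direct computation then yields $\delta_{\rm Low}(D^\star) \approx 0.00995 < 0.01$.

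For the high-$D$ side I would substitute $D = D^\star$ into \eqref{eqn:highd} (the worst case, since $\delta_{\rm High}$ is decreasing). For $g \ge 2$ the quantity $\Lambda = 1 - 1/(10000g+5000)$ satisfies $1/\Lambda - 1 \le 1/24999$, so $1/\Lambda + 10^{-9} \le 1 + 4.1 \cdot 10^{-5}$ and hence $\beta^2 \le (1 + 4.1 \cdot 10^{-5})(1-10^{-4}) \le 1 - 5.9 \cdot 10^{-5}$, giving $1 - \beta \ge 3 \cdot 10^{-5}$ and $(1+\beta)/(1-\beta) \le 2/(3 \cdot 10^{-5})$. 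Plugging this and $c' = 10^{-11}$ into $\delta_{\rm High} = (30000 c'/g)(1+\beta)/(1-\beta)$ yields $\delta_{\rm High}(D^\star) \le 0.02/g \le 0.01$ for $g \ge 2$.

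The main obstacle is that the bound is essentially saturated at the crossover in the worst case ($g = 2$, $D \approx D^\star$): the prefactor $30000c'/g$, the blow-up $1/(1-\beta) \sim 1/\sqrt{D^\star}$, and the estimate $1+\beta \le 2$ conspire to give an answer extremely close to $0.01$ with almost no slack. The absolute constants $c' = 10^{-11}$ and the cardinality threshold $10^{-9}/g$ used inside \textsc{Trim} were evidently calibrated precisely for this; consequently the proof needs to track constants faithfully. For instance, using the loose bound $1/(1-x) \le 1 + 2x$ in place of the tight $1 + x + O(x^2)$ for $x = 1/25000$ would already overshoot $0.01$ at $g = 2$. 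No conceptual difficulty is anticipated beyond this careful numerical bookkeeping.
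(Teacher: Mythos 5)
Your strategy coincides with the paper's own proof: both evaluate $\delta_{\rm Low}$ and $\delta_{\rm High}$ at the crossover point $g=2$, $D=10^{-4}$, check that each is below $0.01$ there, and then invoke the monotonicity of $\delta_{\rm Low}$ (increasing in $D$) and $\delta_{\rm High}$ (decreasing in $D$), together with their monotone decrease in $g$, to cover the whole range. The paper merely asserts the two numerical values ($\approx 0.00995050$ and $\approx 0.00999947$), so your attempt to actually derive them is welcome, and the low-$D$ half of your computation is correct.

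The high-$D$ half, however, commits exactly the rounding sin you warn against, and as written it does not close. From $\beta^2 \le 1 - 5.9\times 10^{-5}$ the correct conclusion is $1-\beta = (1-\beta^2)/(1+\beta) \ge 2.95\times 10^{-5}$, not $3\times 10^{-5}$; carrying the honest value through gives $\delta_{\rm High} \le (30000\, c'/g)\cdot 4/(1-\beta^2)$, which at $g=2$ evaluates to about $0.0102 > 0.01$. The loss originates in the upstream rounding $1/\Lambda + 10^{-9} \le 1+4.1\times 10^{-5}$: keeping $1/\Lambda - 1 = 1/24999 \approx 4.0002\times 10^{-5}$ to full precision instead gives $\beta^2 \le 1 - 6.0001\times 10^{-5}$, hence $1-\beta \ge 3.0000\times 10^{-5}$ and $\delta_{\rm High}\le 6\times 10^{-7}/(6.0001\times 10^{-5}) \approx 0.0099998 < 0.01$, with only a few parts in $10^{7}$ to spare. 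So the argument is repaired by a one-line change, but there is genuinely no room for the $4.0\to 4.1$ safety margin you inserted: the constants $c'$ and the $10^{-9}/g$ eigenvalue threshold were calibrated so that the value at the crossover is $0.01$ essentially exactly.
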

\begin{proof}
	Consider the case of $D=10^{-4}$ and $g=2$. The two error levels in this case can be explicitly calculated, and are 
	\begin{align*}
	\delta_{\text{Low}}(g=2,\,D=10^{-4})
	&\approx0.00995050
	<0.01\\
	\delta_{\text{High}}(g=2,\,D=10^{-4})
	&\approx0.00999947
	<0.01
	\end{align*}
	It can also be shown that for a fixed $g$ that $\delta_\text{Low}$ and $\delta_\text{High}$ are monotonically increasing and decreasing in $D$ respectively, and that for a fixed $D$ both decrease monotonically with $g$. As such we can conclude that $\delta\leq 0.01$ for all $D\in\left[0,1\right]$ and $g\geq 2$. 
\end{proof}

\begin{clmhand}[\ref{clm:step2}]
	\label{clm:step2proof}
	The set resulting from \textsc{Trim}, $S_i^{(2)}$, is ($i$,\,$p_1$,\,$dsb+q^2$,\,$\delta=1/100$)-viable, where 
	\[ p_1(n),\,q(n)=n^{a}\quad\text{where }a=\mathcal{O}(1+\epsilon^{-1}\sqrt{\log \eta^{-1}/\log n})\,. \]
\end{clmhand}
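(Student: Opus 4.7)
The plan is to verify the three viability parameters of $S_{h,i}^{(2)}$ in turn, with the existence of a good witness being the substantive content; the cardinality and bond-dimension bounds are direct bookkeeping from the construction.

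For the bond-dimension bound $dsb+q^2$: every element of $S_{h,i}^{(2)}$ is either a member of $L_{h,i}$ (a Schmidt vector of some $\ket{\gamma_j}$, hence of bond dimension at most $q$) or a left Schmidt vector $\ket{v_{k,j}}$ of an eigenvector $\ket{v_k}$ of $\sigma$. Since $\sigma$ is supported on $\Span(S_{h,i}^{(1)}\cup L_{h,i})\otimes\mathbb{C}^{B_\delta}$, each $\ket{v_{k,j}}$ lies in $\Span(S_{h,i}^{(1)}\cup L_{h,i})$, a span of $ds$ vectors of bond dimension $b$ together with at most $gq$ vectors of bond dimension $q$; the resulting bound $dsb+g q^2$ is absorbed into $dsb+q^2$ since $g = \mathcal{O}(1)$. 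For the cardinality $p_1$: \textsc{Trim} iterates over the $\abs{\mathcal{N}_{\xi/2}}\cdot\abs{\mathcal{M}_\eta}$ net pairs $(X,Y)$, each producing one $\sigma$; since $\Tr\sigma=1$, at most $10^{9}g$ eigenvalues exceed the $10^{-9}/g$ threshold, and each such eigenvector contributes at most $B_\delta$ Schmidt vectors. Adding $\abs{L_{h,i}}\leq gq$, the total cardinality is $\mathcal{O}\bigl(\abs{\mathcal{N}_{\xi/2}}\abs{\mathcal{M}_\eta}\, g B_\delta + gq\bigr)$, which, after inserting the scalings of $\xi$, $\eta$, and $B_\delta$, reduces to the claimed $p_1 = n^{\mathcal{O}(1+\epsilon^{-1}\sqrt{\log\eta^{-1}/\log n})}$.

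For the viability error itself, I would case-split on the distinguishability $D = 1-\braopket{w}{\Pi_{h,i}}{w}$ of the low-energy witness $\ket{w}$ produced by \cref{lem:lowenergy}. In the low-$D$ regime, \cref{lem:lowD} directly supplies a witness supported on $\Span(L_{h,i})\subset \Span(S_{h,i}^{(2)})$, orthogonal to the previous ground states, with ground-space overlap at least $1-\delta_{\rm Low}$. In the high-$D$ regime, I would select the net pair $(X,Y)$ closest to $\bigl(\cont(w),\,\braopket{w}{H_L}{w}-\braopket{\gamma_1}{H_L}{\gamma_1}\bigr)$, for which $\proj{w}$ is feasible for \cref{prog:trim}; the optimum $\sigma$ then inherits a small objective value. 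Lemmas~\ref{lem:ortho energy}, \ref{lem:highdist}, and \ref{lem:trim overlap} combine to produce an eigenvector $\ket{v_1'} = U_w\ket{v_1}$ of the boundary-extended $\sigma'$ which, after orthogonalising against $\ket{\gamma_1},\dots,\ket{\gamma_{h-1}}$ via \cref{lem:ortho}, yields a valid witness with overlap error $\delta_{\rm High}$. Its left Schmidt vectors on $[1,i]$ coincide with those of $\ket{v_1}$ (since $U_w$ acts only on $[i+1,n]$), which are precisely the $\ket{v_{1,j}}\in S_{h,i}^{(2)}$; the additional Schmidt components introduced by orthogonalisation are drawn from $L_{h,i}\subset S_{h,i}^{(2)}$, so the support condition holds. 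Finally, \cref{lem:combined} gives $\delta=\min(\delta_{\rm Low},\delta_{\rm High})\leq 1/100$ uniformly in $D$.

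The main obstacle is the high-$D$ case, where one must absorb the error from optimising over discrete nets rather than the true $\cont(w)$ and left-energy gap into the $1/100$ budget. This is exactly why $\xi$ is taken as in \eqref{eqn:xi} and $\eta=4c'\epsilon/f$, and it is where the chain of Lemmas~\ref{lem:ortho energy}--\ref{lem:trim overlap} does the essential work; the final assembly into a distinguishability-independent $1/100$ bound is precisely the content of \cref{lem:combined}.
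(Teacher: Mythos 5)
Your proposal is correct and follows essentially the same route as the paper: the error bound comes from the low-/high-distinguishability case split resolved by Lemma~\ref{lem:combined}, the cardinality is the (constant-in-$n$) product of the net sizes, eigenvalue threshold, and Schmidt rank plus the dominant recycled-Schmidt-vector term $q$, and the bond dimension is the $dsb+q^2$ bookkeeping from expressing the program's output in $\Span(S_{h,i}^{(1)}\cup L_{h,i})$. The only addition you make beyond the paper's own write-up is the explicit check that the left Schmidt vectors of $\ket{v_1'}$ and $\ket{v_1}$ coincide because $U_w$ acts to the right of the cut, which the paper leaves implicit.
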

\begin{proof}
	As shown above, marginalising over distinguishability for degeneracies at least two, the overlap error is upper bounded by $1/100$. The cardinality contribution comes in two parts: the results given by Program \ref{prog:trim}, and the previously obtained left-Schmidt vectors that are recycled. Let the cardinality of the Program~\ref{prog:trim} results be $r$ where
	\[ r=\underbrace{g10^9}_{\text{E-values}}
	\times \underbrace{\mathcal{O}(1/\epsilon)}_{\text{Energy net}}
	\times \underbrace{\vphantom{\mathcal{O}(1/\epsilon)}2^{{\mathcal{O}}(g/\epsilon)^{{\mathcal{O}}(1/\epsilon)}}}_{\text{BC net}}
	\times \underbrace{\vphantom{\mathcal{O}(1/\epsilon)}2^{\mathcal{O}(\epsilon^{-1}+\epsilon^{-1/4}\log^{3/4}(g/\epsilon))}}_{\text{Schmidt rank}}
	=2^{g^{\tilde{\mathcal{O}}(1/\epsilon)}}\,. \]
	The cardinality contribution from recycled Schmidt vectors is bounded by the bond dimension of each previous ground state approximation, as given in \cref{subsubsec:nondegen}. This contribution is $q(n)=n^a$ where $a=\mathcal{O}(1+\epsilon^{-1}\sqrt{\log \eta^{-1}/\log n})$, and dominates over $r$. The overall cardinality is the summation of these two contributions, and as such is given by $p_1(n)=n^a$ where $a$ has the scaling given above.
	
	The results of Program \ref{prog:trim} are linear combinations of vectors in $\Span(S_i^{(1)}\cup L_{h,i})$. The contributions from both spaces to the bond dimension of the worst-case result is the product of their cardinalities and bond dimensions respectively. As such these contributions are $dsb$ for $S_i^{(1)}$, and $q^2$ for $L_{h,i}$ respectively. 
\end{proof}

\subsection{Bond truncation}
\label{subsec:truncation analysis}

The idea here is to use the existence of low-rank approximate ground states to argue that truncation induces little error due to its optimal nature as a low-rank approximation~\cite{EckartYoung1936}. First we note an important result~\cite{LandauVaziraniVidick2013,Huang2014'} about truncation, that large overlap with a low-rank vector is maintained under truncation.

\begin{lem}[\!\!\cite{LandauVaziraniVidick2013,Huang2014'}]
	\label{lem:schmidt overlap}
	Given a vector $\ket{v}$ with Schmidt rank $R$ across $i|i+1$, for any $\ket{u}$ 
	\begin{align*}
	\abs{\braket{\Trunc^i_{R/\delta}(u)}{v}}\geq\abs{\braket{u}{v}}-\delta\,.
	\end{align*}
\end{lem}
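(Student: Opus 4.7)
The plan is to exploit the low Schmidt rank of $\ket{v}$ to control the tail of $\ket{u}$'s Schmidt decomposition across the $i|i{+}1$ cut. First I would write $\ket{u} = \sum_j \mu_j \ket{a_j}\ket{b_j}$ in Schmidt form with $\mu_j$ non-increasing and $\sum_j \mu_j^2 \leq 1$, and $\ket{v} = \sum_{k=1}^R \lambda_k \ket{c_k}\ket{d_k}$. Setting $D := R/\delta$ and splitting the inner product,
\[
\braket{u}{v} - \braket{\Trunc^i_{D}(u)}{v} \;=\; \sum_{j > D}\mu_j \braket{a_j b_j}{v} \;=:\; \braket{u_\text{tail}}{v},
\]
the claim reduces to showing $\abs{\braket{u_\text{tail}}{v}}\leq\delta$.

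The crucial observation is that since $\ket{v}$ has Schmidt rank $R$, it lies inside a product of rank-$R$ subspaces. Letting $\Pi_V$ and $\Pi_W$ be the rank-$R$ projectors onto $\Span\{\ket{c_k}\}$ and $\Span\{\ket{d_k}\}$, we have $\ket{v} = (\Pi_V \otimes \Pi_W)\ket{v}$, so Cauchy--Schwarz through this support yields
\[
\abs{\braket{u_\text{tail}}{v}}^2 \;\leq\; \bra{u_\text{tail}}(\Pi_V \otimes I)\ket{u_\text{tail}} \;=\; \sum_{j > D}\mu_j^2\,\alpha_j,\qquad \alpha_j := \norm{\Pi_V\ket{a_j}}^2.
\]

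Two elementary bounds then close the argument. The non-increasing Schmidt coefficients with $\sum_j \mu_j^2 \leq 1$ force $\mu_j^2 \leq 1/(D+1)$ for $j > D$ (since $(D+1)\mu_{D+1}^2 \leq \sum_{k\leq D+1}\mu_k^2 \leq 1$), while the rank of $\Pi_V$ gives $\sum_j \alpha_j = \Tr\Pi_V = R$. Combining, $\sum_{j > D}\mu_j^2 \alpha_j \leq R/(D+1)$, and plugging in $D = R/\delta$ delivers the stated bound.

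The main subtlety is arranging the Cauchy--Schwarz so that the tail is forced through the rank-$R$ Schmidt support of $\ket{v}$, rather than being estimated naively against all of $\mathcal{H}$: the projection onto $\Pi_V\otimes\Pi_W$ before the inequality is what lets the finite trace $\Tr\Pi_V = R$ enter and combine cleanly with the Schmidt-coefficient decay to yield an estimate that scales with $R/D$.
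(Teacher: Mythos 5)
Your argument is the standard one for this lemma (the paper itself gives no proof, deferring to Refs.~\cite{LandauVaziraniVidick2013,Huang2014'}, where essentially your projector computation appears in the guise of the von Neumann/H\"older trace inequality), and every step up to the last sentence is correct: you do establish $\abs{\braket{u_{\mathrm{tail}}}{v}}^2\le\sum_{j>D}\mu_j^2\alpha_j\le R/(D+1)$ (with $\le\Tr\Pi_V$ rather than equality in $\sum_j\alpha_j$, since the $\ket{a_j}$ need not be complete, but that only helps you). The gap is the final step. You need $\abs{\braket{u_{\mathrm{tail}}}{v}}\le\delta$, but what you have is a bound of $\delta$ on the \emph{square} of that overlap; plugging $D=R/\delta$ into your inequality gives only $\abs{\braket{u_{\mathrm{tail}}}{v}}\le\sqrt{R/(D+1)}\le\sqrt{\delta}$. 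This square root cannot be argued away: take $R=1$, $D=1/\delta$, $\ket{u}=(D+1)^{-1/2}\sum_{j=1}^{D+1}\ket{a_j}\ket{b_j}$ and $\ket{v}=\ket{a_{D+1}}\ket{b_{D+1}}$. Then $\abs{\braket{u}{v}}=(D+1)^{-1/2}=\sqrt{\delta/(1+\delta)}$, while a legitimate tie-breaking of the truncation (degeneracies are resolved arbitrarily in \cref{defn:truncation}) drops the $(D+1)$-st term and gives $\braket{\Trunc^i_D(u)}{v}=0$; the loss is genuinely of order $\sqrt{R/D}$, and the claimed inequality $0\ge\sqrt{\delta/(1+\delta)}-\delta$ fails for all $\delta<(\sqrt5-1)/2$.

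The upshot is that the statement as printed, with cutoff $R/\delta$, is in fact false, and the correct version --- which is exactly what your computation proves once the square root is taken honestly --- reads $\abs{\braket{\Trunc^i_{R/\delta^2}(u)}{v}}\ge\abs{\braket{u}{v}}-\delta$, equivalently $\abs{\braket{\Trunc^i_{D}(u)}{v}}\ge\abs{\braket{u}{v}}-\sqrt{R/D}$. So you have the right proof of the right lemma and declared victory one line early by matching a bound of $\delta$ on $\abs{\braket{u_{\mathrm{tail}}}{v}}^2$ against a required bound of $\delta$ on $\abs{\braket{u_{\mathrm{tail}}}{v}}$. (In the paper's one application, in \cref{subsec:truncation analysis}, this means the per-bond truncation level should be $B_{1/800n}$ times $(800n)^2$ rather than $800n$ to make the stated error bookkeeping go through; this leaves $P=n^{2+o(1)}$ and does not affect any asymptotic claim.)
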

Let $\ket{u}$ denote a witness of $S_{i}^{(2)}$ such that there exists a ground state $\ket{\Gamma}$ such that 
\[ \abs{\braket{\Gamma}{u}}\geq 1-\frac{1}{100}\,. \]
Fix $\ket{\Gamma}$ to denote the ground state with which $\ket{u}$ has maximal overlap. Next we invoke \cref{lem:area law}, taking  $r(n):=B_{1/800n}$ such that there exists an MPS $\ket{\gamma}$ with bond dimension bounded by $r(n)$ such that
\[ \abs{\braket{\Gamma}{\gamma}}\geq 1-\frac{1}{800}\,. \]
Applying \cref{lem:omnibus} we can combine these, showing $\ket{\gamma}$ and $\ket{u}$ have large overlap
\[ \abs{\braket{\gamma}{u}}\geq 1-2\left(\frac{1}{800}+\frac{1}{100}\right)= 1-\frac{18}{800}\,.\]
Next take $\ket{w}$ to be the renormalised truncation of $\ket{u}$ across every bond
\[ \ket{w}:=\frac{\Trunc^1_{P}\dots\Trunc^{n-1}_{P}\ket{u}}{\norm{\Trunc^1_{P}\dots\Trunc^{n-1}_{P}\ket{u}}}\,, \]
where we have taken $P(n):=800n\,r(n)$.

Applying \cref{lem:schmidt overlap} across every bond we get that the overlap-error induced by truncation grows additively, giving a final overlap of
\[ \abs{\braket{\gamma}{w}}\geq \abs{\braket{\gamma}{u}}-n/(800n)\geq 1-\frac{19}{800}\,. \]
Once again applying \cref{lem:omnibus} we get therefore that the ground space overlap of $\ket{w}$, which is lower bounded by the overlap with $\ket{\Gamma}$, can in turn be lower bounded
\[ \abs{\braket{\Gamma}{w}}\geq 1-2\left(\frac{1}{800}+\frac{19}{800}\right) =1-\frac{1}{20} \,.\]

Once again we must turn our attention now to orthogonalising $\ket{w}$ with respect to the existing ground states to satisfy the relevant orthogonality condition. The procedure for bounding this once again is to note that because $\ket{u}$ is orthogonal to $\Span\lbrace\ket{\gamma_1},\ldots,\ket{\gamma_{h-1}}\rbrace$, any overlap $\ket{w}$ has with this space is upper bounded by the component of $\ket{w}$ orthogonal to $\ket{u}$. Applying \cref{lem:omnibus} once again to their mutual overlap with $\ket{\Gamma}$, we get that 
\[ \abs{\braket{w}{u}}\geq 1-2\left(\frac{1}{100}+\frac{1}{20}\right) = 1-\frac{6}{50} \]
and thus that the orthogonal component is bounded \[ \sqrt{1-\abs{\braket{u}{w}}^2}\leq \sqrt{1-44^2/50^2}\leq 1/2\,. \]
By \cref{lem:ortho} then we can orthogonalise $\ket{w}$ with respect to the existing ground state approximations, at most tripling the overlap-error. Thus, after orthogonalisation, there must exist a ground state $\ket{\Gamma'}$
\[ \abs{\braket{\Gamma'}{w}}\geq 1-\frac{3}{20}\geq 1-\frac{1}{5}\,. \]

\begin{clmhand}[\ref{clm:step3}]
	\label{clm:step3proof}
	The set resulting from \textsc{Truncate}, $S_i^{(3)}$, is ($i$,\,$p_1+q$,\,$p_2$,\,$\delta=1/5$)-viable, where 
	\[ p_2(n)=n^{a}\quad\text{where}\quad a=\mathcal{O}(1+\epsilon^{-1}\sqrt{\log \eta^{-1}/\log n})\,. \]
\end{clmhand}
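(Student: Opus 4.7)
The plan is to assemble three pieces: (a) produce a witness state after truncation that still has large ground-space overlap, using the area law and Eckart-Young optimality; (b) orthogonalise that state to the previously constructed $\ket{\gamma_1},\ldots,\ket{\gamma_{h-1}}$ without blowing up the error; and (c) do the bookkeeping on cardinality, bond dimension, and support. Most of the analytic heavy lifting has already been staged in the paragraphs preceding the claim, so this proof is mainly a matter of packaging.

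First I take $\ket{u}$ to be a witness of the $(i,p_1,dsb+q^2,\delta=1/100)$-viable set $S_{i}^{(2)}$ granted by \cref{clm:step2}, and fix a ground state $\ket{\Gamma}$ with $|\langle\Gamma|u\rangle|\ge 1-1/100$. \cref{lem:area law} furnishes an MPS $\ket{\gamma}$ of bond dimension at most $r(n)=B_{1/800n}$ across every cut with $|\langle\Gamma|\gamma\rangle|\ge 1-1/800$, and combining these via \cref{lem:omnibus} gives $|\langle\gamma|u\rangle|\ge 1-18/800$. The truncation procedure produces $\ket{w}$, the renormalised truncation of $\ket{u}$ to bond dimension $P=800n\,r(n)$ across every internal bond of $[1,i]$. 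Since $\ket{\gamma}$ has Schmidt rank at most $r(n)=P/(800n)$ at each of these cuts, \cref{lem:schmidt overlap} applied bond-by-bond yields an additive error of at most $(i-1)/(800n)\le 1/800$, so $|\langle\gamma|w\rangle|\ge 1-19/800$, and a final application of \cref{lem:omnibus} gives $|\langle\Gamma|w\rangle|\ge 1-1/20$.

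Next I orthogonalise $\ket{w}$ to $\Span\{\ket{\gamma_1},\ldots,\ket{\gamma_{h-1}}\}$. Because $\ket{u}$ was already orthogonal to this span and $|\langle u|w\rangle|\ge 1-6/50$ (by combining the overlaps of $\ket{u}$ and $\ket{w}$ with $\ket{\Gamma}$ through \cref{lem:omnibus}), the component of $\ket{w}$ inside the span is bounded by $\sqrt{1-(44/50)^2}\le 1/2$. Applying (the overlap-error analog of) \cref{lem:ortho} with $\beta=1/2$ rescales the error by $(1+\beta)/(1-\beta)=3$, producing the final witness $\ket{w'}$ with $|\langle\Gamma'|w'\rangle|\ge 1-3/20\ge 1-1/5$, which is the required overlap bound.

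The remaining task is to confirm the structural constraints and count the parameters. For support: the left Schmidt vectors of $\ket{w}$ on the $i|i+1$ cut are the truncations of those of $\ket{u}$, which by viability of $S_{i}^{(2)}$ lie in $\Span(S_{i}^{(2)})$; these truncations are exactly the non-$L_{h,i}$ part of $S_{i}^{(3)}$. The subsequent orthogonalisation admixes components from the $\ket{\gamma_j}$, whose left Schmidt vectors are precisely $L_{h,i}\subseteq S_{i}^{(3)}$, so the reduced density matrix of $\ket{w'}$ is supported on $\Span(S_{i}^{(3)})$. Cardinality is bounded by $|S_{i}^{(2)}|+|L_{h,i}|\le p_1+q$. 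For bond dimension, each truncated vector has internal bond dimension at most $P=800nB_{1/800n}=n^{1+o(1)}$, while each element of $L_{h,i}$ inherits bond dimension at most $q=n^{\mathcal{O}(1+\epsilon^{-1}\sqrt{\log\eta^{-1}/\log n})}$ from the previous MPS approximations, so the maximum bond dimension across $S_{i}^{(3)}$ is $p_2=\max(P,q)=n^a$ with the stated exponent. The only mildly delicate step is the orthogonalisation estimate: one must be careful that $\ket{u}$ is \emph{exactly} orthogonal to the previous $\ket{\gamma_j}$ (which it is, by the viability definition), so that the full weight of $\ket{w}$ outside $\Span\{\ket{\gamma_j}\}$ is captured by the $|\langle u|w\rangle|$ bound, and one does not accumulate an additional error term during orthogonalisation.
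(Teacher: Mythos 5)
Your proposal is correct and follows essentially the same route as the paper's own proof: the same witness construction via \cref{lem:area law} and \cref{lem:schmidt overlap}, the same chain of overlap bounds ($1-18/800 \to 1-19/800 \to 1-1/20 \to 1-1/5$ after orthogonalising with $\beta=1/2$ via \cref{lem:ortho}), and the same parameter bookkeeping. The only (immaterial) difference is that you bound the set's bond dimension by $\max(P,q)$ where the paper accounts for the orthogonalisation additively as $P+q$; both give $p_2=n^a$ with the stated exponent.
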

\begin{proof}
	The error level $\delta=1/5$ is proven above. As the truncation procedure is performed element-wise, the cardinality is changed only by the reintroduction of recycled Schmidt vectors, adding $q$.
	
	After truncation, the bond dimension is bounded by the trimming level
	\[ P=800nB_{1/800n}=n^{1+o(1)}\,. \]
	The orthogonalisation procedure then will additively increase this bond dimension by $q$, the bond dimension of the previous approximate ground states. As such the final bond dimension can be bounded by a polynomial $p_2$ of the same scaling.
\end{proof}

\subsection{Error reduction}
\label{subsec:reduction analysis}

Next we need to address the \textsc{Reduce} and \textsc{FinalReduce} procedures. We will do this by considering the construction of an AGSP $A$, and an approximation thereof $K$. Both operators are subject to a free parameter $\zeta$, which controls the magnitude of the error reduction they can perform. The only difference between \textsc{Reduce} and \textsc{FinalReduce} will be the final level of $\zeta$ utilised. As such we will analyse the procedure for an arbitrary $\zeta$, stating the \textsc{Reduce} and \textsc{FinalReduce} as special cases of a more general analysis.

\subsubsection{Exact AGSP}
First we consider the exact AGSP construction
\[A:=\exp\left[-\frac{x(H-\epsilon_0')^2}{2\epsilon^2}\right]\,,\]
where $\epsilon_0':=\braopket{\gamma_1}{H}{\gamma_1}$ is an inverse polynomial approximation to the ground state energy, and $x$ is a $\zeta$-dependent parameter given in \cref{app:aagsp}. As well as specifying the precise parameters of the AGSP/AAGSP construction, in \cref{app:aagsp} we also show that for each ground state $\ket{\Gamma}$ and excited state $\ket{\Gamma^\perp}$:
\begin{align*}
\norm{A\ket{\Gamma}}&\geq\frac{19}{20} &  
\norm{A\ket{\Gamma^\perp}}&\leq \frac{\zeta}{2}
\end{align*}
It is in this sense that our AGSP approximates a ground state projector. 

Take $\ket{u}$ to be a witness of $S_{i}^{(3)}$, such that there exists a ground state $\ket{\Gamma}$ with
\[ \abs{\braket{\Gamma}{u}}\geq 1-1/5\,. \]
We can decompose this witness in the energy eigenbasis \[\ket{u}=\sum_{j= 0}^{d^n-1}\sqrt{p_j}\ket{E_j}\,,\] where $\ket{E_0}$ is a ground state, $\lbrace\ket{E_j}\rbrace_{j> 0}$ are all excited states\footnote{As we only require that the vectors $\ket{E_j}$ be energy eigenstates up to rephasing, all vectors appearing in this decomposition from the same energy-eigenspace can be merged into a single vector, allowing for this decomposition to be considered non-degenerate without loss of generality.} with energy $\epsilon_j\geq \epsilon_0+\epsilon$. Due to the role of $\ket{u}$ as a witness of $S_i^{(3)}$ we have $p_0\geq 16/25$, and due to normalisation $\sum_{j}p_j=1$. Lastly take $\ket{w}$ and to be the result of applying the AGSP to the witness and renormalising
\[ \ket{w}:=\frac{A\ket{u}}{\norm{A\ket{u}}}\,. \]

We can bound the renormalising denominator term by applying $A$ to $\ket{u}$ in this decomposed form
\begin{align*}
\norm{A\ket{u}}&=\norm{\sum_{j=0}^{d^n-1}\sqrt{p_j}A\ket{E_j}}\\
&\geq \sqrt{p_0}\norm{A\ket{E_0}}-\norm{\sum_{j\geq1}\sqrt{p_j}A\ket{E_j}}\\
&\geq\frac{4}{5}\cdot\frac{19}{20}-\frac{3}{10}\zeta\,.
\end{align*}
If we assume $\zeta\leq 1$ then $\norm{A\ket{u}}\geq 1/5$.
Using this bound we can argue that because $A$ shrinks the high energy terms, yet does not shrink the magnitude all that much, the energy of $\ket{w}$ should be a drastic improvement over that of $\ket{u}$.

\begin{lem}
	\[ \braopket{w}{H}{w}\leq \epsilon_0+\zeta\epsilon/\num{80000} \]
\end{lem}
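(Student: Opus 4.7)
The plan is to evaluate $\braopket{w}{H}{w}$ directly in the energy eigenbasis and exploit the Gaussian suppression of excited states provided by $A$, while using the already-established lower bound $\lVert A\ket{u}\rVert \geq 1/5$ to control the renormalisation denominator.

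First I would write
\[
\braopket{w}{H}{w} = \frac{\braopket{u}{AHA}{u}}{\lVert A\ket{u}\rVert^{2}} \,,
\]
and use the spectral decomposition $\ket{u}=\sum_{j}\sqrt{p_j}\ket{E_j}$. Since $A = \exp[-x(H-\epsilon_0')^{2}/(2\epsilon^{2})]$ is a function of $H$, each $\ket{E_j}$ is an eigenvector of $A$ with eigenvalue $\alpha_j := \exp[-x(\epsilon_j-\epsilon_0')^{2}/(2\epsilon^{2})]$. Therefore
\[
\braopket{w}{H}{w}-\epsilon_0 \;=\; \frac{\sum_{j\geq 1}p_j(\epsilon_j-\epsilon_0)\,\alpha_j^{2}}{\sum_{j\geq 0}p_j\,\alpha_j^{2}} \,,
\]
since the $j=0$ term in the numerator vanishes.

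Next I would bound each excited-state term. The approximation $\epsilon_0'$ of the ground energy satisfies $\lvert\epsilon_0'-\epsilon_0\rvert\leq \eta^{2}\epsilon/4f\leq \epsilon/2$, so for any $j\geq 1$ with $\epsilon_j-\epsilon_0\geq\epsilon$ we have $\epsilon_j-\epsilon_0'\geq (\epsilon_j-\epsilon_0)/2$, hence
\[
(\epsilon_j-\epsilon_0)\,\alpha_j^{2} \;\leq\; (\epsilon_j-\epsilon_0)\exp\!\left[-\frac{x(\epsilon_j-\epsilon_0)^{2}}{4\epsilon^{2}}\right] .
\]
Setting $y=\epsilon_j-\epsilon_0$, the function $y\mapsto y\exp[-xy^{2}/(4\epsilon^{2})]$ attains its maximum on $y\geq\epsilon$ at the boundary $y=\epsilon$ whenever $x$ is sufficiently large (which it will be, since $x=\Theta(\log \zeta^{-1})$ from \cref{app:aagsp}). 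This gives the uniform bound $(\epsilon_j-\epsilon_0)\alpha_j^{2}\leq \epsilon\exp(-x/4)$, and summing against $p_j$ with $\sum_{j\geq 1}p_j\leq 1$ yields
\[
\sum_{j\geq 1}p_j(\epsilon_j-\epsilon_0)\alpha_j^{2} \;\leq\; \epsilon\exp(-x/4) \,.
\]

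Finally I would combine with the denominator. The preceding paragraph of the text has already shown $\lVert A\ket{u}\rVert \geq 1/5$, so the denominator is at least $1/25$. Putting everything together,
\[
\braopket{w}{H}{w}-\epsilon_0 \;\leq\; 25\,\epsilon\exp(-x/4) \,,
\]
and the claim follows provided the AGSP parameter $x$ satisfies $\exp(-x/4) \leq \zeta/(2\cdot 10^{6})$, i.e.\ $x\geq 4\log(2\cdot 10^{6}/\zeta)$, which is consistent with the scaling $x=\mathcal{O}(\log\zeta^{-1})$ asserted in \cref{eq:K_AGSP}. The main subtlety is simply the bookkeeping around $\epsilon_0'$ versus $\epsilon_0$: once one verifies that the approximation error in the ground energy is small compared to the gap, the factor-of-two slack in the exponent is enough to absorb it, and the rest is a clean calculation.
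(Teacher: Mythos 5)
Your proposal is correct and follows essentially the same route as the paper: expand in the energy eigenbasis, use that $A$ is diagonal there, bound the denominator by $\norm{A\ket{u}}^2\geq 1/25$, handle the $\epsilon_0$ versus $\epsilon_0'$ discrepancy via $\abs{\epsilon_0-\epsilon_0'}\leq\epsilon/2$, and maximise the resulting Gaussian-damped linear function at the boundary to land on $25\,\epsilon\,e^{-x/4}$. The only cosmetic difference is that you absorb the energy-shift error into the exponent (a factor of $4$ there) while the paper absorbs it into the linear prefactor (a factor of $2$ and a boundary at $\epsilon/2$), and you leave the final numerical constant as a condition on $x$ rather than substituting the explicit choice $x=33-8\log\zeta$; both are fine.
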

\begin{proof}
	We can bound the energy by simply expanding the expectation of $H':=(H-\epsilon_0)$, and seeing how it changes under the application of $A$ using the above decomposition in the energy basis.
	\begin{align*}
	\braopket{w}{H'}{w}
	&=\frac{\braopket{u}{AH'A}{u}}{\norm{A\ket{u}}^2}\\
	&\leq 25\sum_{j,k\geq 0}p_j\braopket{E_j}{AH'A}{E_k}\,.\\
\intertext{Because $A$ commutes with $H$, the double sum collapses and we have}
	&\leq 25\sum_{j\geq 0}p_j\braopket{E_j}{AH'A}{E_j}\\
	&\leq 25\sum_{j\geq1}p_j(\epsilon_j-\epsilon_0)\exp(-x(\epsilon_j-\epsilon_0')^2/\epsilon)\,.\\
\intertext{Using $\epsilon_0' - \epsilon_0 \le \epsilon/2$, and then using the H\"{o}lder inequality, we have}
	&\leq 50\sum_{j\geq1}p_j(\epsilon_j-\epsilon_0')\exp(-x(\epsilon_j-\epsilon_0')^2/\epsilon)\\
	&\leq 50\left[\max_{E\geq \epsilon/2}E\exp(-xE^2/\epsilon^2)\right]\left[\sum_{j\geq1}{p_j}\right]\,.\\
	\intertext{The maximum is on the boundary because $x$ is sufficiently large, so we find}
	&\leq 25 \epsilon\, \mathrm{e}^{-x/4}
	\leq \zeta\epsilon/\num{3200}
	\end{align*}
	Where above we have used the chosen value of $x:=33-8\log(\zeta)$ from \cref{app:aagsp}.
\end{proof}

\subsubsection{Approximate AGSP}
\label{subsubsec:AAGSP}

As well as the bounds on the power of $A$, \cref{app:aagsp} also gives that we can approximate $A$ by a low bond dimension operator $K$ with
\begin{align}\label{eq:zetaprime}
	\norm{A-K}\leq\zeta':=\frac{\zeta\epsilon}{\num{240000}n}\,.
\end{align}
For simplicity we can express $K=A+\zeta'O$, where $\norm{O}\leq 1$. As with $K$ we can bound the renormalising term for $\ket{u}$,
\begin{align*}
\norm{K\ket{u}}
&=\norm{(A+\zeta'O)\ket{u}}\\
&\geq \norm{A\ket{u}}-\zeta'\norm{O\ket{u}}\\
&\geq \frac{1}{5}-\zeta'\,.
\end{align*}
Now if we take $\zeta'\leq 1/10$ then we get $\norm{K\ket{u}}\geq 1/10$. As such we can replace the state $\ket{w}$ that had an exact AGSP with an approximate version using the AAGSP,
\[ \ket{w'}:=\frac{K\ket{u}}{\norm{K\ket{u}}} \]
and similarly bound the energy.
\begin{lem}
	\[\braopket{w'}{H}{w'}\leq \epsilon_0+\zeta\epsilon/400\]
\end{lem}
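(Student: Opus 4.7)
The plan is to mimic the energy calculation from the exact-AGSP lemma, but tracking an $O(\zeta')$ correction throughout and verifying that the choice $\zeta' = \zeta\epsilon/(240000n)$ is precisely what is needed to keep the excess excitation bounded by $\zeta \epsilon / 400$. Throughout I will write $H' := H - \epsilon_0$, noting that $0 \le H' \le (n-1)\,\one$ so $\norm{H'} \le n$, and I will use $\norm{A} \le 1$ (since every eigenvalue of $A$ is of the form $\exp[-x(E-\epsilon_0')^2/(2\epsilon^2)] \le 1$) together with $\norm{O}\le 1$ (by the definition of $O$ via $K = A + \zeta' O$).

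First I would rewrite
\[
  \braopket{w'}{H'}{w'} = \frac{\braopket{u}{KH'K}{u}}{\norm{K\ket{u}}^2}
\]
and expand the numerator by substituting $K = A + \zeta' O$, yielding
\[
  \braopket{u}{KH'K}{u} = \braopket{u}{AH'A}{u} + \zeta'\bigl(\braopket{u}{AH'O}{u}+\braopket{u}{OH'A}{u}\bigr) + (\zeta')^2 \braopket{u}{OH'O}{u}.
\]
The first term is the quantity already controlled: since $\braopket{w}{H'}{w}\le \zeta\epsilon/\num{80000}$ and $\norm{A\ket{u}}^2 \le 1$, one has $\braopket{u}{AH'A}{u} \le \zeta\epsilon/\num{80000}$. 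Each of the two cross terms is bounded in absolute value by $\zeta' \norm{A}\cdot\norm{H'}\cdot\norm{O} \le \zeta' n$ via Cauchy–Schwarz (or H\"older), and the last term is bounded by $(\zeta')^2 n$. Since $\zeta' \le 1$ by construction, the combined correction is at most $3\zeta' n$.

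Substituting the chosen value $\zeta' = \zeta\epsilon/(\num{240000} n)$ from Eq.~\eqref{eq:zetaprime} gives $3\zeta' n = \zeta\epsilon/\num{80000}$, so
\[
  \braopket{u}{KH'K}{u} \le \frac{\zeta\epsilon}{\num{80000}} + \frac{\zeta\epsilon}{\num{80000}} = \frac{\zeta\epsilon}{\num{40000}}.
\]
Combining this with the normalisation bound $\norm{K\ket{u}}^2 \ge 1/100$ established just above the lemma statement, I obtain
\[
  \braopket{w'}{H'}{w'} \le \frac{\zeta\epsilon/\num{40000}}{1/100} = \frac{\zeta\epsilon}{400},
\]
which, after reinstating the shift $\epsilon_0$, is exactly the claimed bound. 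The only real obstacle is keeping the constants aligned: one must verify that the choice of $\zeta'$ in Eq.~\eqref{eq:zetaprime} absorbs the factor of $n$ coming from $\norm{H'}$, and that the constant $1/100$ from the normalisation bound (versus $1/25$ in the exact-AGSP case) only incurs a factor of $4$ between the two energy bounds ($\zeta\epsilon/\num{80000}$ versus $\zeta\epsilon/400$), accounted for by the $2\zeta'n$ correction. Nothing more delicate is needed — the proof is simply a careful perturbative version of the exact AGSP calculation.
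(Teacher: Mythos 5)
Your proof is correct and follows essentially the same route as the paper's: expand $K=A+\zeta'O$, bound the cross and quadratic terms by $3\zeta' n$ using $\norm{H'}\le n$, and divide by $\norm{K\ket{u}}^2\ge 1/100$. Your constant-tracking is in fact slightly cleaner than the paper's displayed chain of inequalities, but the argument is the same.
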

\begin{proof}
	Once again we consider the expectation value of $H'$
	\begin{align*}
	\braopket{w'}{H'}{w'}
	=&\braopket{w'}{H'}{w'}\\
	=&\frac{\braopket{u}{KH'K}{u}}{\norm{K\ket{u}}^2}\\
	\leq&100\braopket{u}{(A+\zeta'O)H'(A+\zeta'O)}{u}\\
	\leq&100\braopket{u}{AH'A}{u}+100\zeta'^2\braopket{u}{OH'O}{u}\\
	&+100\zeta'\braopket{u}{OH'A}{u}+100\zeta'\braopket{u}{AH'O}{u}\\
	\leq &4\braopket{w}{H}{w}+300n\zeta'
	\\\leq&
	\zeta\epsilon/800+
	\zeta\epsilon/800=
	\zeta\epsilon/400			
	\end{align*}
	Reintroducing $\epsilon_0$ gives the desired energy bound.
\end{proof}
Finally we must orthogonalise to negate any overlap with the existing vectors this has caused. 
\begin{lem}
For any $\ket{\gamma} \in \Span\{\ket{\gamma_1},\ldots,\ket{\gamma_h-1}\}$, $\abs{\braket{w'}{\gamma}}\le \frac{400-1}{400+1}$.
\end{lem}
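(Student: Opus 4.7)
The structural observation I would exploit is that $\ket{u}$, being a witness of $S_i^{(3)}$, is orthogonal to each $\ket{\gamma_j}$ for $j<h$, and hence $\braket{\gamma}{u}=0$ for every unit vector $\ket{\gamma}\in\Span\lbrace \ket{\gamma_1},\ldots,\ket{\gamma_{h-1}}\rbrace$. Writing $\ket{w'}=K\ket{u}/\norm{K\ket{u}}$, and combining the lower bound $\norm{A\ket{u}}\geq (4/5)(19/20)=19/25$ (which follows from $p_0\geq 16/25$ and the ground-space eigenvalue of $A$) with $\norm{K\ket u}\geq \norm{A\ket u}-\zeta'$, we have $\norm{K\ket u}\geq 1/2$ for the small-parameter regime at hand. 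The task therefore reduces to giving a non-trivial upper bound on $\abs{\braopket{\gamma}{K}{u}}$.

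My plan is to split $K$ into a piece proportional to the true ground-space projector $G$ plus two small error pieces. With $\alpha_0:=\exp[-x(\epsilon_0-\epsilon_0')^2/(2\epsilon^2)]\in[19/20,1]$, the functional-calculus decomposition of the exact AGSP gives $A=\alpha_0 G+AG^\perp$, where $\norm{AG^\perp}\leq \zeta/2$ by the excited-state AGSP bound. Writing $K=A+\zeta'O$ with $\norm{O}\leq 1$, we then expand
\begin{equation*}
	\braopket{\gamma}{K}{u}=\alpha_0\braopket{\gamma}{G}{u}+\braopket{\gamma}{AG^\perp}{u}+\zeta'\braopket{\gamma}{O}{u},
\end{equation*}
whose last two terms are immediately bounded in modulus by $\zeta/2$ and $\zeta'$ respectively.

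The essential cancellation comes from the orthogonality $\braket{\gamma}{u}=0$: substituting $I=G+G^\perp$ on both sides yields $\braket{G\gamma}{Gu}=-\braket{G^\perp\gamma}{G^\perp u}$, and hence $\abs{\braopket{\gamma}{G}{u}}\leq\norm{G^\perp\gamma}\cdot\norm{G^\perp u}$. The viability of $S_i^{(3)}$ gives $\norm{G^\perp u}\leq 3/5$, while each $\ket{\gamma_j}$ came out of a prior \textsc{FinalReduce}/\textsc{ApproxGroundState} pair with $\norm{G^\perp\ket{\gamma_j}}\leq\eta/\sqrt{2g}$. Expanding $\ket{\gamma}=\sum_j c_j\ket{\gamma_j}$ with $\sum_j\abs{c_j}^2=1$ and applying Cauchy--Schwarz twice yields $\norm{G^\perp\gamma}\leq\eta/\sqrt{2}$.

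Assembling these estimates gives $\abs{\braket{\gamma}{w'}}\leq 2\bigl[\alpha_0\cdot 3\eta/(5\sqrt{2})+\zeta/2+\zeta'\bigr]$, which for the small parameters of interest ($\eta,\zeta,\zeta'\ll 1$) is comfortably below $(400-1)/(400+1)=399/401$. The main obstacle is really just bookkeeping the small constants and verifying the final numerical inequality; the crucial structural point is the identity $\braopket{\gamma}{G}{u}=-\braopket{\gamma}{G^\perp}{u}$, without which the bound $\abs{\braopket{\gamma}{G}{u}}$ would only be $\leq 1$ and the argument could not separate $\abs{\braket{\gamma}{w'}}$ from $1$.
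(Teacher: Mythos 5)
Your proof is correct, but it takes a genuinely different route from the paper's. Both arguments hinge on the orthogonality $\braket{\gamma}{u}=0$ of the pre-AGSP witness to the previously found states, but they exploit it differently. The paper never decomposes $K$: it shows $\abs{\braket{E_0}{w'}}\ge \tfrac45\cdot\tfrac{19}{20}-\zeta'\ge \tfrac34$, combines this with $\abs{\braket{E_0}{u}}\ge 4/5$ via \cref{lem:omnibus} to get $\abs{\braket{u}{w'}}\ge 1/10$, and then bounds $\abs{\braket{\gamma}{w'}}$ by the component of $\ket{w'}$ perpendicular to $\ket{u}$, namely $\sqrt{1-\abs{\braket{u}{w'}}^2}=3\sqrt{11}/10$, which squeaks in just under $399/401$. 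You instead split $K=\alpha_0 G+AG^{\perp}+\zeta'O$ and use the cancellation $\braopket{\gamma}{G}{u}=-\braket{G^{\perp}\gamma}{G^{\perp}u}$ to reduce everything to $\norm{G^{\perp}\ket{u}}\le 3/5$ and $\norm{G^{\perp}\ket{\gamma_j}}\le \eta/\sqrt{2g}$, yielding a bound of order $\eta+\zeta$ that is far stronger than required. The trade-off is that your argument imports an extra hypothesis the paper's proof does not use --- that the prior $\ket{\gamma_j}$ are already $\eta$-close to the true ground space (established at the end of each earlier iteration via \cref{lem:demixing}) --- whereas the paper's proof needs only their orthogonality to $\ket{u}$ and is therefore self-contained within the current \textsc{Reduce} step. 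Both inputs are legitimately available at this stage of the algorithm, so your argument is valid; a minor bookkeeping note is that your lower bound $\norm{A\ket{u}}\ge 19/25$ should be justified by the orthogonality of the $A\ket{E_j}$ (so that $\norm{A\ket{u}}^2\ge p_0\norm{A\ket{E_0}}^2$) rather than read off termwise, but this is easily supplied and the conclusion $\norm{K\ket{u}}\ge 1/2$ stands.
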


\begin{proof}
The original witness $\ket{u}$ has a large ground state overlap, which in terms of the previous decomposition means 
\[ \sqrt{p_0}=\braket{E_0}{u}\geq 4/5\,. \]
Using the fact that $\norm{K\ket{u}}\leq 1$ and that $A$ is diagonal in the energy basis, we can bound the overlap that $\ket{w'}$ in turn has with $\ket{E_0}$. Using $\zeta' \le 1/100$, 
\begin{align*}
\abs{\braket{E_0}{w'}}
&=\frac{\abs{\braopket{E_0}{K}{u}}}{\norm{K\ket{u}}}\\
&\geq \abs{\braopket{E_0}{K}{u}}\\
&\geq \biggl\vert\sum_j\sqrt{p_j}\braopket{E_0}{A}{E_j}\biggr\vert-\zeta'\\
&\geq \sqrt{p_0}\braopket{E_0}{A}{E_0}-\zeta'\\
&\geq \frac{4}{5}\frac{19}{20}
-\frac{1}{100}
= \frac{3}{4}\,.
\end{align*}
Applying \cref{lem:omnibus} we therefore get that 
\[ \abs{\braket{u}{w'}}\geq 1-2\left(\frac{1}{5}+\frac{1}{4}\right)=\frac{1}{10}\,. \]
As has been argued before, the component of $\ket{w'}$ overlapping with $\Span\lbrace\ket{\gamma_1},\dots,\ket{\gamma_{h-1}}\rbrace$ can be bounded by the component perpendicular to $\ket{u}$. This component is bounded
\[ \sqrt{1-\abs{\braket{u}{w'}}^2}=\frac{3\sqrt{11}}{10}\leq \frac{400-1}{400+1}\,. \]
\end{proof}

By \cref{lem:ortho} and the previous lemma, we can thus orthogonalise $\ket{w'}$ to a vector $\ket{w''}$ with a $400$-fold increase in the excitation, resulting in an energy of 
\[ \braopket{w''}{H}{w''}\leq\epsilon_0+\zeta\epsilon\,. \]
As such setting $\zeta$ to the desired final energy error level and applying the local Schmidt operators of $K$, the energy error can be reduced to $\zeta$.

\begin{clmhand}[\ref{clm:step4},\ref{clm:step4final}]
	\label{clm:step4proof}
	
	The set resulting from \textsc{Reduce}, $S_{i}^{(4)}$, is ($i$,~$pp_1+pq+q$,~$pp_2$,~$\Delta=c\epsilon^6/f^4$)-viable, where 
	\[ p(n)=n^{\mathcal{O}(1)}\,. \]
	The set resulting from \textsc{FinalReduce}, $S_n^{(4)}$, is ($n$,~$p_1+q+g$,~$p_0p_2$,~$\Delta=\eta^2/4f$)-viable, where 
	\[ p_0(n)=n^{a}\quad\text{where}\quad a=\mathcal{O}(1+\epsilon^{-1}\sqrt{\log \eta^{-1}/\log n})\,. \]
\end{clmhand}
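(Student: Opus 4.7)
The plan is to combine the preceding analysis of the approximate AGSP $K$ with a direct accounting of cardinality, bond dimension, and support for the sets produced by \textsc{Reduce} and \textsc{FinalReduce}. The crucial input is already established above: for any witness $\ket{u}$ of the input set $S_i^{(3)}$, the state $\ket{w''}$ obtained by applying $K$, renormalising, and orthogonalising against $\ket{\gamma_1},\ldots,\ket{\gamma_{h-1}}$ via \cref{lem:ortho} satisfies $\braopket{w''}{H}{w''}\leq \epsilon_0+\zeta\epsilon$, and is orthogonal to the earlier ground state approximations by construction. Substituting $\zeta=c\epsilon^6/f^4$ yields the $\Delta$ parameter for \cref{clm:step4}, and $\zeta=\eta^2/4f$ does likewise for \cref{clm:step4final}. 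It remains to verify the structural conditions (support, cardinality, bond dimension).

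Next I would verify the support condition. Writing $K=\sum_j A_j\otimes B_j$ as a Schmidt decomposition across the $i|i{+}1$ cut, the left Schmidt vectors of $K\ket{u}$ lie in $\Span\set{A_j\ket{s}}{\ket{s}\in S_i^{(3)}}$, because $\ket{u}$ already has reduced density supported on $\Span(S_i^{(3)})$. The orthogonalisation step produces a vector in $\Span\{K\ket{u},\ket{\gamma_1},\ldots,\ket{\gamma_{h-1}}\}$, whose additional left Schmidt vectors are exactly captured by adjoining the set $L_{h,i}$ of Schmidt vectors of the previous ground state approximations. This matches the definition of $S_i^{(4)}$. For cardinality and bond dimension, let $p$ denote the Schmidt rank of $K$ across the cut and let $q$ denote the MPS bond dimension of each $\ket{\gamma_j}$, as in \cref{eqn:q}. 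Then $|S_i^{(4)}|\leq p\cdot|S_i^{(3)}|+|L_{h,i}|\leq p(p_1+q)+q=pp_1+pq+q$, while each $A_j\ket{s}$ has bond dimension bounded by the MPO bond dimension of $A_j$ (at most $p$) times $p_2$, yielding $pp_2$. Substituting the parameters of the AAGSP given in the text following \cref{eq:K_AGSP} with $\zeta=c\epsilon^6/f^4$ (a constant in $n$) gives $p=n^{\mathcal{O}(1)}$, completing \cref{clm:step4}.

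For \textsc{FinalReduce} the structural modification is that the cut sits at the right end of the chain, so $K$ need not be Schmidt decomposed across a right-neighbour cut; it is applied as a single MPO to each element of $S_{n-1}^{(3)}$, producing one new vector per input. Hence the cardinality grows only additively by the adjoined $\{\ket{\gamma_1},\ldots,\ket{\gamma_{h-1}}\}$, giving $|S_n^{(4)}|\leq(p_1+q)+(h-1)\leq p_1+q+g$, while the bond dimension is bounded by the MPO bond dimension of $K$ times $p_2$. The main obstacle is purely bookkeeping: substituting $\zeta=\eta^2/4f$ into the $\zeta$-dependent parameters $x,T,\tau,B$ of \cref{eq:K_AGSP} and extracting a clean scaling. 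Carrying this through shows the MPO bond dimension is $p_0=n^a$ with $a=\mathcal{O}(1+\epsilon^{-1}\sqrt{\log\eta^{-1}/\log n})$, matching the desired $p_0 p_2$ bound and yielding \cref{clm:step4final}. I would also need to verify that the auxiliary bound $\zeta'\leq1/100$ used in the preceding lemmas (see Eq.~\eqref{eq:zetaprime}) continues to hold once $\zeta$ is taken this small, which follows from the choice of $\tau$ but should be checked explicitly to ensure no logarithmic factor is dropped.
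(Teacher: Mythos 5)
Your proposal is correct and follows essentially the same route as the paper: the energy bound $\Delta=\zeta$ is inherited from the preceding AGSP/AAGSP lemmas and orthogonalisation step, the multiplicative growth in cardinality and bond dimension for \textsc{Reduce} is charged to the Schmidt rank of $K$ across the cut (with the recycled Schmidt vectors contributing the additive $q$), and \textsc{FinalReduce} avoids the Schmidt decomposition so only the $g$ previous ground states are adjoined, with the $\zeta=\eta^2/4f$ substitution giving $p_0$. Your extra care with the support condition and the $\zeta'$ check is sound bookkeeping the paper leaves implicit (modulo the harmless index slip $S_{n-1}^{(3)}$ for $S_{h,n}^{(3)}$).
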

\begin{proof}
The construction described above gives a $K$ with a bond dimension that scales (see Ref.~\cite{Osborne2006}) as $n^{a'}$ where $a'=\mathcal{O}(1+\epsilon^{-1}\sqrt{\log \zeta^{-1}/\log n})$.
	
	For \textsc{Reduce} the cardinality growth is caused by the fact that $K$ must be applied locally, i.e.\ each Schmidt operator must be individually applied. The number of these operators is bounded by the bond dimension of $K$. As such \textsc{Reduce}, for which $\zeta=c\epsilon^6/f^4=\mathcal{O}(1)$, has a multiplicative growth in cardinality of $p(n)=n^{\mathcal{O}(1)}$. On top of this the recycled Schmidt vectors additively increase the cardinality once more by $q$.
	
	For \textsc{FinalReduce} however the whole operator can be applied without Schmidt decomposition, thus leaving cardinality unchanged. The recycled previous ground states grow the cardinality additively by $g$.
	
	The bond dimension increase is once again multiplicative, and once again bounded by the bond dimension of $K$ as an MPO. For \textsc{Reduce} this means a growth by a factor of $p(n)$. For \textsc{FinalReduce} however $\zeta=\eta^2/4f$ is not necessarily constant, as such the growth in bond dimension has the more general scaling of 
	\[ p_0(n)=n^{a}\quad\text{where}\quad a=\mathcal{O}(1+\epsilon^{-1}\sqrt{\log \eta^{-1}/\log n})\,. \]
\end{proof}

Here we briefly comment on the improved analysis of Ref.~\cite{Huang2014'''}. Suppose we perform the following for our \textsc{FinalReduce}: instead applying of a single powerful AGSP, we interleave a mild AGSP with bond trimming multiple times. With sufficient parameters (the details are given in Sec.5.5 of \cite{Huang2014'''}), this can be done such that the bond dimension remains at most $n^{\mathcal{O}(1)}$, and the error level is halved. Performing this procedure the requisite number of times, the scaling of final bond dimension can be improved to $p_0=n^{\mathcal{O}(1)}$ for $\eta^{-1}=n^{\mathcal{O}(1)}$. Carrying this analysis through, this also reduces the scaling of $p_1$ and $q$ to both be $n^{\mathcal{O}(1)}$, and therefore a final run-time also of $T=n^{\mathcal{O}(1)}$.

\section{Conclusion}\label{sec:conclusion}

We have shown the ground state approximation of Refs.~\cite{LandauVaziraniVidick2013,Huang2014'} can be extended to degenerate systems with an approximation of the ground space projector that is up to inverse polynomial error in any Schatten norm. 

AGSPs have proven powerful tools for relating the structural properties of gapped systems to various computational complexity bounds regarding them. Most AGSP constructions are either functions of the Hamiltonians~\cite{AradKitaevLandauVazirani2013,Huang2014,Huang2014'} or parts of the Hamiltonian terms~\cite{AharonovAradLandauVazirani2008,AharonovAradLandauVazirani2010,AradLandauVazirani2012,AharonovAradLandauVazirani2011,LandauVaziraniVidick2013}. While we utilise such AGSPs in intermediate steps, our algorithm outputs a state-based AGSP, and more thoroughly exploits the underlying one-dimensional structure, allowing for a more powerful overall construction. While previous AGSP constructions have typically been full-rank, our AGSP takes the form of a minimal-rank projector, that moreover has a polynomially bounded bond dimension as a matrix product operator (MPO). It remains an open question for which systems there exist efficiently constructible AGSPs that are also \emph{low-rank} AGSPs. Note that this is important for applications because while AGSPs that are close to the ground space in operator norm are easy to construct even with small bond dimension in MPO form, they do not generally allow for low-error estimation of expectation values of other MPO observables. 

While running in polynomial-time, the specific runtime scaling of the above algorithm is highly suboptimal, leaving wide room for further optimisation. One potential avenue for optimisation is the level of redundancy within the viable sets that are generated. While there are rigorous upper bounds on the cardinality of these sets, no care has been taken to explore to possibility of high levels of linear dependence, opening the door to far better practical runtimes than our analysis would suggest. 

\emph{Note added in preparation:} This work began as the undergraduate Honours thesis of the first author as a project to extend LVV to the degenerate case. During that time, the results of Huang in Ref.~\cite{Huang2014'} (version 1) appeared, and were incorporated into our independent analysis. Near the completion of this manuscript, Huang also extended the results in Ref.~\cite{Huang2014'} to the degenerate case and published this as Ref.~\cite{Huang2014''} (version 3), an updated version of the original manuscript. While our techniques rely on the methods in version 1 (Ref.~\cite{Huang2014'}), our treatment of the degenerate case was made completely independently from version 3 (Ref.~\cite{Huang2014''}). 

Our results and the new results in Ref.~\cite{Huang2014''} are essentially the same, but with the following differences. The most notable is that the algorithm of Ref.~\cite{Huang2014''} is single-pass, finding a full set of ground states with a single sweep of the system. In principle our multi-pass algorithm will generate viable sets with far higher levels of redundancy, causing worse scaling in the parameters of \cref{tab:params}. In the final runtime however this slow-down only influences the various constants hidden under big-O terms, giving an overall scaling that is identical to that of Ref.~\cite{Huang2014''}. The second major difference regards the size-trimming, the most complicated step to generalise. Our algorithm works by applying an energy constraint and minimising the overlap between witnesses, while Ref.~\cite{Huang2014''} constrains the overlap and minimises the energy. This allows one to use boundary contraction to circumvent the issue of local distinguishability, simplifying the analysis of this step. The final difference is that our algorithm does not require knowing a specific bound on the ground state degeneracy $g$, only that a bound of the form $g = \mathcal{O}(1)$ exists, whereas it is not clear how to perform the algorithm in Ref.~\cite{Huang2014''} without an explicit upper bound. 

Subsequent to the journal submission of this manuscript, but prior to its publication, Arad et.al.\ released a yet further improved algorithm in Ref.~\cite{AradLandauVaziraniVidick2016}. While our algorithm works by iterating along the spin chain, constructing matrix product states, the main new technique in Ref.~\cite{AradLandauVaziraniVidick2016} is to instead apply divide-and-conquer approach, generating states in a tree-like fashion. This method allows for much tighter runtime analysis (e.g.\ runtime reduction from $\poly(n)$ to $\mathcal{O}(n^{3.38})$ for frustration-free systems), as well as extensions to much larger classes of systems (e.g.\ a quasi-polynomial-time algorithm for certain gapless models).

We thank Thomas Vidick for comments on the previous version of this manuscript. We acknowledge support from the Australian Research Council via EQuS project number CE11001013, the US Army Research Office via grant numbers W911NF-14-1-0098 and W911NF-14-1-0103, and by iARPA via the MQCO program. S.T.F.\ also acknowledges support from an ARC Future Fellowship FT130101744.


\appendix


\section{Overlap lemma}
\label{app:overlap}

In this appendix we want to relax the notion of a basis, and show that key properties thereof are robust to a small amount of allowed error. 

Suppose we have some $g$-dimensional space, represented by a $g$-rank projector $G$ onto this space. Take $\ket{v_i}$ for $1\leq i\leq g$ to be an orthonormal set of vectors, each with $\braopket{v_i}{G}{v_i}\geq 1-\delta/g$. First we will prove that all members of $S:=\Span\lbrace\ket{v_i}\rbrace$ have non-trivial overlap with $G$. We will then use this to show that the vectors $G\ket{v_i}$ span the space. Finally we will show that any vector orthogonal to $\Span\lbrace\ket{v_i}\rbrace$ has low overlap with the space.
\begin{applem}
	\label{applem:overlap}
	Any vector $\ket{v}\in S$ has $\braopket{v}{G}{v}\geq1-\delta$.
\end{applem}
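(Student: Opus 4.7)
The plan is to decompose each $\ket{v_i}$ into its projection onto the image of $G$ and its complement, and then bound the error term by a Cauchy--Schwarz argument over the $g$-dimensional span. Concretely, write $\ket{v_i} = \ket{a_i} + \ket{b_i}$ where $\ket{a_i} := G\ket{v_i}$ and $\ket{b_i} := (I-G)\ket{v_i}$. The hypothesis $\braopket{v_i}{G}{v_i} \ge 1 - \delta/g$ gives $\|\ket{a_i}\|^2 \ge 1 - \delta/g$ and hence $\|\ket{b_i}\|^2 \le \delta/g$ for each $i$.

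Next I would take an arbitrary $\ket{v} \in S$ and expand it as $\ket{v} = \sum_{i=1}^g c_i \ket{v_i}$ with $\sum_i |c_i|^2 = 1$ (using orthonormality of the $\ket{v_i}$). Since $G$ is a projector,
\[
\braopket{v}{G}{v} = \|G\ket{v}\|^2 = \Bigl\| \sum_{i=1}^g c_i \ket{a_i} \Bigr\|^2.
\]
I would then evaluate this using the complementary relation $\langle a_i | a_j\rangle = \langle v_i | v_j \rangle - \langle b_i | b_j\rangle = \delta_{ij} - \langle b_i | b_j\rangle$, which follows because $\ket{a_i}$ and $\ket{b_j}$ live in orthogonal subspaces. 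Substituting gives
\[
\Bigl\| \sum_i c_i \ket{a_i} \Bigr\|^2 = \sum_{i,j} \overline{c_i} c_j \bigl(\delta_{ij} - \langle b_i | b_j \rangle \bigr) = 1 - \Bigl\| \sum_i c_i \ket{b_i} \Bigr\|^2.
\]

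The only remaining step is to bound the error term, and this is where the loss of a factor of $g$ comes in, exactly matching the $\delta/g$ in the hypothesis. By the triangle inequality and Cauchy--Schwarz,
\[
\Bigl\| \sum_i c_i \ket{b_i} \Bigr\| \le \sum_i |c_i|\, \|\ket{b_i}\| \le \Bigl(\sum_i |c_i|^2\Bigr)^{1/2} \Bigl(\sum_i \|\ket{b_i}\|^2\Bigr)^{1/2} \le \sqrt{1} \cdot \sqrt{g \cdot \delta/g} = \sqrt{\delta}.
\]
Combining the two displays yields $\braopket{v}{G}{v} \ge 1 - \delta$, as required. The argument is essentially routine once the decomposition is in place; the only subtlety is being careful about the off-diagonal terms $\langle b_i | b_j\rangle$, which are handled automatically by recognising the full sum as a squared norm rather than trying to bound each $\langle b_i | b_j\rangle$ individually.
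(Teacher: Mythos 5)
Your proof is correct and follows essentially the same route as the paper: expand $\ket{v}=\sum_i c_i\ket{v_i}$ and bound $\braopket{v}{(I-G)}{v}=\bigl\lVert\sum_i c_i(I-G)\ket{v_i}\bigr\rVert^2\leq\bigl[\sum_i\abs{c_i}\,\lVert(I-G)\ket{v_i}\rVert\bigr]^2\leq\delta$, the only cosmetic difference being that you apply Cauchy--Schwarz to the sum $\sum_i\abs{c_i}\,\lVert b_i\rVert$ while the paper uses $\sum_i\abs{c_i}\leq\sqrt{g}$ together with the uniform bound $\lVert(I-G)\ket{v_i}\rVert\leq\sqrt{\delta/g}$. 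Both yield the identical final estimate, so there is nothing to add.
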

\begin{proof}
	%
	As $\ket{v}$ lies in $S$, we can write it as a linear combination of the form
	\[\ket{v}:=\sum_{i=1}^{g}c_i\ket{v_i}\,.\]
	Next we can bound the expectation value of the complementary projector $I-G$ by using the triangle and Cauchy-Schwartz inequalities. 
	\begin{align*}
	\braopket{v}{(I-G)}{v}
	&=\sum_{i,j}c_i^*c_j\braopket{v_i}{(I-G)}{v_j}\\
	&\leq\sum_{i,j}\abs{c_i}\abs{c_j}\abs{\braopket{v_i}{(I-G)}{v_j}}\\
	&\leq\left[\sum_{i=1}^g\abs{c_i}\norm{(I-G)\ket{v_i}}\right]^2\\
	&\leq \delta
	\end{align*}
	where we have used the normalisation $\sum_{i}\abs{c_i}^2=1$ and the resulting bound $\sum_{i}\abs{c_i}\leq \sqrt{g}$. This in turn gives $\braopket{v}{G}{v}\geq1-\delta$.
\end{proof}

\begin{applem}[Basis under projection]
	\label{applem:basis}
	The projections $G\ket{v_i}$ form a basis of $\mathrm{im}(G)$. 
\end{applem}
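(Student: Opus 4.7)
The plan is to show linear independence of the $g$ vectors $\{G\ket{v_i}\}_{i=1}^g$; since each lies in $\mathrm{im}(G)$ and that image has dimension $g$, linear independence immediately gives spanning and hence a basis. The key tool is the previous \cref{applem:overlap}, which already guarantees that every nonzero vector in $S = \Span\{\ket{v_i}\}$ has nontrivial projection onto $\mathrm{im}(G)$.

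First I would proceed by contradiction: suppose $\{G\ket{v_i}\}_{i=1}^g$ were linearly dependent. Then there exist coefficients $c_1,\dots,c_g$, not all zero, such that
\[
\sum_{i=1}^{g} c_i\, G\ket{v_i} \;=\; 0 \,.
\]
Since $G$ is linear, this rearranges to $G\ket{v} = 0$ where $\ket{v} := \sum_i c_i \ket{v_i}$. Because the $\{\ket{v_i}\}$ are orthonormal and the $c_i$ are not all zero, $\ket{v} \neq 0$ and we may rescale to assume $\sum_i |c_i|^2 = 1$, so $\ket{v}$ is a unit vector in $S$. Then $\braopket{v}{G}{v} = \norm{G\ket{v}}^2 = 0$.

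This contradicts \cref{applem:overlap}, which for any unit $\ket{v}\in S$ gives $\braopket{v}{G}{v}\geq 1-\delta$, provided we assume $\delta < 1$ (which is implicit in the nontriviality of the setup, and is certainly ensured by the parameter choices elsewhere in the paper). Hence the vectors $G\ket{v_i}$ are linearly independent. Each $G\ket{v_i}$ is automatically in $\mathrm{im}(G)$, and we now have $g$ linearly independent vectors in a $g$-dimensional space; therefore they form a basis of $\mathrm{im}(G)$, completing the proof. There is no real obstacle here: the whole content of the lemma is packaged into \cref{applem:overlap}, and the remaining argument is a one-line dimension count combined with the contradiction above.
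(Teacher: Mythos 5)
Your proof is correct and follows essentially the same route as the paper's: assume linear dependence of the $G\ket{v_i}$, produce a nonzero $\ket{v}\in S$ with $G\ket{v}=0$, and contradict \cref{applem:overlap}, with the dimension count finishing the argument. Your explicit remark that one needs $\delta<1$ is in fact slightly more careful than the paper's own statement of the contradiction.
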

\begin{proof}
	As the number of vectors matches the dimension of the target space, proving linear independence is sufficient to prove the formation of a basis. If the projected vectors were linearly dependent then this would imply there exist a non-zero $\ket{v}\in S$ such that $G\ket{v}=0$. \cref{applem:overlap} however gives $\braopket{v}{G}{v}\geq 1-\delta\geq 0$ for all such vectors, thus proving linear independence by contradiction.
\end{proof}

Next we want to generalise the concept of a \emph{full} basis, specifically the fact that any vector perpendicular to a full basis is necessarily perpendicular to the entire space, meaning that there is a natural maximum to the size of any given basis.

\begin{applem}[Fullness]
	\label{applem:fullness}
	Any vector $\ket{v'}$ which is orthogonal to $S$ has $\braopket{v'}{G}{v'}\leq\delta$.
\end{applem}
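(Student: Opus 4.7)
The plan is to argue by a trace identity rather than by expanding $\ket{v'}$ in any explicit basis. Let $P_S = \sum_{i=1}^{g}\ketbra{v_i}{v_i}$ denote the orthogonal projector onto $S$. The hypothesis $\braopket{v_i}{G}{v_i} \ge 1 - \delta/g$, summed over $i$, immediately gives
\[
\Tr(GP_S) \;=\; \sum_{i=1}^{g}\braopket{v_i}{G}{v_i} \;\ge\; g-\delta\,.
\]
Since $G$ is a rank-$g$ projector we have $\Tr(G)=g$, so subtracting yields $\Tr\bigl(G(I-P_S)\bigr) \le \delta$, where $I-P_S$ is the projector onto $S^\perp$.

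Next I would exploit the fact that both $G$ and $I-P_S$ are positive semidefinite. Given any unit vector $\ket{v'}\in S^\perp$, extend it to an orthonormal basis $\{\ket{w_j}\}$ of $S^\perp$ with $\ket{w_1}=\ket{v'}$. Then
\[
\Tr\bigl(G(I-P_S)\bigr) \;=\; \sum_j \braopket{w_j}{G}{w_j} \;\ge\; \braopket{v'}{G}{v'}\,,
\]
where the inequality uses $G \succeq 0$ to drop the nonnegative terms with $j>1$. Combining the two displays gives $\braopket{v'}{G}{v'}\le\delta$, which is the claim.

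There is essentially no obstacle here: the argument is a two-line trace computation followed by an appeal to positivity, and notably it does not need \cref{applem:overlap} or \cref{applem:basis}. The only subtlety worth checking is that one really may extend $\ket{v'}$ to an orthonormal basis of $S^\perp$ (which is automatic since $\ket{v'}\perp S$ is a unit vector inside the finite-dimensional subspace $S^\perp$), and that the positivity step only throws away nonnegative contributions, which is immediate from $G\ge 0$. Should one wish to avoid the basis-extension step entirely, one can instead write $\braopket{v'}{G}{v'}=\Tr\bigl(G(I-P_S)\ketbra{v'}{v'}(I-P_S)\bigr)$ and use $\ketbra{v'}{v'}\le I-P_S$ together with the operator monotonicity of the trace against $G\ge 0$ to reach the same bound.
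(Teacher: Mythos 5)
Your proof is correct, and it takes a genuinely different route from the paper's. The paper first establishes \cref{applem:basis} (that the projections $G\ket{v_i}$ span $\mathrm{im}(G)$), uses it to pick a companion vector $\ket{v}\in S$ with $G\ket{v}\propto G\ket{v'}$, and then runs a two-dimensional geometric argument on the decompositions of $\ket{v}$ and $\ket{v'}$ into components in $\mathrm{im}(G)$ and $\ker(G)$, invoking \cref{applem:overlap} to get $\lambda\geq 1-\delta$ and orthogonality to get $\lambda+\lambda'\leq 1$. Your argument bypasses both auxiliary lemmas: summing the hypothesis gives $\Tr(GP_S)\geq g-\delta$, the rank-$g$ condition gives $\Tr(G)=g$, and positivity of $G$ lets you discard all but the $\ket{v'}$ term from $\Tr\bigl(G(I-P_S)\bigr)\leq\delta$. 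Where the paper uses the rank of $G$ only implicitly (through the spanning argument), you use it explicitly and up front, and you avoid the Cauchy--Schwarz bookkeeping of \cref{applem:overlap} entirely. Your version also yields a slightly stronger conclusion for free: the overlaps $\braopket{w_j}{G}{w_j}$ summed over an \emph{entire} orthonormal basis of $S^\perp$ total at most $\delta$, not merely each one individually. Both arguments are valid; yours is shorter and more self-contained for this particular lemma, though \cref{applem:overlap} remains independently useful elsewhere in the appendix, so it is not made redundant.
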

\begin{proof}
	As the projection of this space is spanning, we can pick $\ket{v}\in S$ such that it projects into the same vector as $\ket{v'}$, i.e. $G\ket{v}\propto G\ket{v'}$. As such we can decompose both states as
	\begin{align*}
	\ket{v}&=\sqrt{\lambda}\ket{g}+\sqrt{1-\lambda}\ket{h}\\
	\ket{v'}&=\sqrt{\lambda'}\ket{g}+\sqrt{1-\lambda'}\ket{h'}
	\end{align*}
	where $\ket{g}\in\mathrm{im}(G)$ and $\ket{h},\ket{h'}\in\mathrm{ker}(G)$. The orthogonality of $\ket{v}$ and $\ket{v'}$ gives 
	\[ 0=\braket{v}{v'}=\sqrt{\lambda}\sqrt{\lambda'}+\sqrt{1-\lambda}\sqrt{1-\lambda'}\braket{h}{h'}\,. \]
	Using $\abs{\braket{h}{h'}}\leq 1$ in turn gives $\lambda+\lambda'\leq 1$. \cref{applem:overlap} gives that $\lambda\geq 1-\delta$, leading us to conclude that $\lambda'\leq \delta$, and thus that $\braopket{v'}{G}{v'}\leq \delta$ generically for any vector orthogonal to $S$.
\end{proof}


\section{Approximate AGSP construction}
\label{app:aagsp}
In this section we explicitly lay out the various errors associated with approximating the AGSP defined in Sections~\ref{subsec:reduction} and \ref{subsec:reduction analysis}. The scaling of these errors was stated in less detail in Ref.~\cite{Huang2014'}. First we consider the AGSP itself
\begin{align*}
A:=\exp\left[-\frac{x(H-\epsilon_0')^2}{2\epsilon^2}\right]=\frac{\epsilon}{\sqrt{2\pi x}}\int_{-\infty}^{+\infty}\exp\left[-\frac{\epsilon^2t^2}{2x}-i(H-\epsilon_0')t\right]\dif t
\end{align*}
where $\epsilon_0'$ is the ground energy approximation given by $\ket{\gamma_1}$. If we take $\eta,\zeta \leq 1/2$ and assume that $\zeta\geq \eta^2/4f$ then
\[\abs{\epsilon_0'-\epsilon_0}\leq \frac{\eta^2\epsilon}{4f}\leq\epsilon\zeta
\leq\frac{\epsilon}{2}\,.\]
The magnitude of this AGSP on the excited subspace, known as the \emph{shrinking factor}, can be bounded 
\[\norm{A\ket{\Gamma^\perp}}\leq \exp\left[-\frac{x(\epsilon/2)^2}{2\epsilon^2}\right]=e^{-x/8}\,.\]
If we define ${x:=33-8\log \zeta}$, then this shrinking factor is at most $\zeta/2$. Further assuming that $\zeta \le 1/25$, the magnitude of $A$ on the ground space can be lower bounded 
\begin{align*}
\norm{A\ket{\Gamma}}
&=\exp\left[-\frac{x(\epsilon_0-\epsilon_0')^2}{2\epsilon^2}\right]\\
&\geq \frac{19}{20}\,.
\end{align*}

Next we move on to the approximate AGSP, defined by
\begin{align*}
K:=\frac{2\epsilon}{\sqrt{2\pi x}}\sum_{j=0}^{\lceil T/\tau\rceil}\exp\left[i\epsilon_0'\tau j-\frac{\epsilon^2\tau^2j^2}{2x}\right]U_D(\tau j)\,.
\end{align*}
where $T$, $\tau$ and $D$ are parameters and $U_D$ is some unitary operator all of which we shall define below. This definition relies on taking the Fourier decomposition of $A$, and approximating this by truncating and discretising the integral and approximating the propagators by a known algorithm~\cite{Osborne2006}.

\subsection{Time Truncation Error}
The truncation error is due to the finite range of the integral, controlled by the parameter $T$. If we let $\delta_T$ denote the truncation error (in the operator norm) then it can be bounded
\begin{align*}
\delta_T&=2\norm{\frac{\epsilon}{\sqrt{2\pi x}}\int_{T}^{\infty}\exp\left[-\frac{\epsilon^2t^2}{2x}-iHt\right]\dif t}\\
&\leq\frac{2\epsilon}{\sqrt{2\pi x}}\int_{T}^{\infty}\exp\left[-\frac{\epsilon^2t^2}{2x}\right]\dif t\\
&=\frac{\epsilon}{\sqrt{2x}}\text{erfc}\left(\frac{\epsilon T}{\sqrt{2x}}\right)\\
&\leq \exp\left[-\frac{\epsilon^2T^2}{2x}\right]\,,
\end{align*}
where the last inequality holds for sufficiently large $T$, e.g.\ $T \ge 1$. For the time-scale
\begin{align*}
T:= 
\frac{\sqrt{2x}}{\epsilon}\sqrt{\log3/\zeta'}=\mathcal{O}\bigl(\epsilon^{-1}\log(1/\zeta)\sqrt{\log (n/\zeta)}\bigr)
\end{align*} 
we can in turn upper bound $\delta_T$ by $\delta_T \leq \zeta'/3=\Omega(\zeta/n)$, where $\zeta'$ is defined in \cref{eq:zetaprime}.

\subsection{Discretisation Error}

Next we consider the discretisation error associated with approximating this integral with a Riemann sum. The term being approximately integrated is a Gaussian centred on the origin and so its absolute value is even as a function of $t$, allowing us to consider only the $t>0$ error and doubling it. On this range the absolute value is not only even, but monotonically decreasing. As such the concepts of left and right Riemann sums correspond to upper and lower Riemann sums. If we let the discretisation error be $\delta_D$ then
\begin{align*}
\delta_D&=2\frac{\epsilon}{\sqrt{2\pi x}}\norm{\sum_{j=0}^{\lceil T/\tau\rceil }\tau\exp\left[-\frac{\epsilon^2\tau^2j^2}{2x}-iH\tau j\right]-
	\int_{0}^{T}
	\exp\left[-\frac{\epsilon^2t^2}{2x}-iHt\right]\dif t
}\,.
\intertext{This error can be upper bounded in turn by the difference between upper and lower Riemann sums which is equal to the left and right Riemann sums owing to the monotonicity of the integrand.}
\delta_D&\leq2\frac{\epsilon\tau}{\sqrt{2\pi x}}\norm{\sum_{j=0}^{\lceil T/\tau\rceil}\exp\left[-\frac{\epsilon^2\tau^2j^2}{2x}-iH\tau j\right]-
	\sum_{j=1}^{\lceil T/\tau\rceil+1}\exp\left[-\frac{\epsilon^2\tau^2j^2}{2x}-iH\tau j\right]
}\,.
\end{align*}
The difference between these two summations is telescopic. As such it equals the difference between the highest and lowest terms.
\begin{align*}
\delta_D&\leq2\frac{\epsilon\tau}{\sqrt{2\pi x}}\norm{1-\exp\left[-\frac{\epsilon^2T^2}{2x}-iHT\right]}\,.
\end{align*}
Using the trivial bound on the real exponential term, we get
\begin{align*}
\delta_D&\leq2\frac{\epsilon\tau}{\sqrt{2\pi x}}\norm{1-\exp\big[-iHT\big]\Big.}\\
&\leq4\frac{\epsilon\tau}{\sqrt{2\pi x}}\,.
\end{align*}
Again this error can also be bounded by $\delta_D\leq \zeta'/3$ if we take the time-step
\begin{align*}
\tau^{-1}:= \frac{12}{\zeta'}\sqrt{\frac{\log(3/\zeta')}{\pi}}=
\frac{12\epsilon}{\zeta'\sqrt{2\pi x}}= 
\mathcal{O}\left(\frac{n}{\zeta \sqrt{\log 1/\zeta}}\right)\,.
\end{align*}

\subsection{Bond Truncation Error}

To upper bound the total truncation error for the entire approximation by $\zeta'/3$, it is sufficient to bound the truncation error for each propagator $\exp(-iH\tau j)$ by $\zeta'/3$ since the total sum is normalized. Using the construction of Ref.~\cite{Osborne2006} this error can be achieved if the bond dimension of each propagator is
\begin{align*}
D&=\exp{\left[\mathcal{O}\left(\tau j\right)+\mathcal{O}\left(\log(n^2/\zeta')\right)\right]}\\
&\leq\exp\left[\mathcal{O}\left(T\right)+\mathcal{O}\left(\log n/\zeta\right)\right]\\
&=2^{\mathcal{O}(T)}\poly(n/\zeta)\\
&=2^{\mathcal{O}\bigl(\epsilon^{-1}\log(1/\zeta)\sqrt{\log(n/\zeta)}\bigr)}\poly(n/\zeta)\,.
\end{align*}
As bond dimension is additive, and there are $2T/(\tau+1)$ terms, the total bond dimension for the AGSP $B$ is
\begin{align*}
B&\leq2 T\tau^{-1}D\\
&=\mathcal{O}\bigl(\epsilon^{-1}\log(1/\zeta)\sqrt{\log (n/\zeta)}\bigr)
\mathcal{O}\left(\frac{n}{\zeta \sqrt{\log 1/\zeta}}\right)
2^{\mathcal{O}\bigl(\epsilon^{-1}\log(1/\zeta)\sqrt{\log(n/\zeta)}\bigr)}\poly(n/\zeta)\\
&=
2^{\mathcal{O}\bigl(\epsilon^{-1}\log(1/\zeta)\sqrt{\log(n/\zeta)}\bigr)}\poly(n/\zeta)\\
&=n^{\mathcal{O}(1+\epsilon^{-1}\sqrt{\log(1/\zeta)/\log(n)})}
\end{align*}
where we have assumed $\zeta^{-1}=n^{\mathcal{O}(1)}$.


\section{Frustration bound}
\label{app:frust}

Consider a 1D system with nearest neighbour interactions, were each local Hamiltonian term $H_i$ is bounded $0\leq H_i\leq \one$. For a cut $i|i+1$ we split the Hamiltonian into a left-Hamiltonian $H_L=\sum_{j=1}^{i-1}H_j$, middle-Hamiltonian $H_M=H_i$, and right-Hamiltonian $H_R=\sum_{j=i+1}^{n}H_j$; we refer to their expectation values as the left/middle/right-energy.
\begin{applem}
	For a Hamiltonian normalised as above, the left-energies relative to any cut for any two states with an energy less than $\Delta E$ over the ground, differ by no more than $1+\Delta E$.
\end{applem}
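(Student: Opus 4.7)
The plan is to use the locality of the bipartition $H=H_L+H_M+H_R$ (with $H_M = H_i$, $\|H_M\|\le 1$, and $H_L,H_R \ge 0$) to pin the left-energy of \emph{any} state of total energy at most $\epsilon_0+\Delta E$ into a narrow interval whose width is at most $1+\Delta E$. Once every low-energy state's left-energy lies in such an interval, the triangle inequality immediately yields the claim.

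First I would obtain a lower bound on the left-energy: trivially $\braopket{\psi}{H_L}{\psi}\ge \epsilon_L$ for every state, where $\epsilon_L$ denotes the smallest eigenvalue of $H_L$. Next I would obtain an upper bound by writing $\braopket{\psi}{H_L}{\psi}= \braopket{\psi}{H}{\psi}-\braopket{\psi}{H_M}{\psi}-\braopket{\psi}{H_R}{\psi}$ and using $H_M\succeq 0$ and $H_R\succeq \epsilon_R \one$ (since $H_R$ is a sum of positive semidefinite operators and has smallest eigenvalue $\epsilon_R$). For any $\ket{\psi}$ with $\braopket{\psi}{H}{\psi}\le\epsilon_0+\Delta E$ this gives
\[
\epsilon_L \;\le\; \braopket{\psi}{H_L}{\psi}\;\le\;\epsilon_0+\Delta E-\epsilon_R\,.
\]

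The remaining step is to control $\epsilon_0-\epsilon_L-\epsilon_R$. Here I would reuse exactly the product-state argument already employed earlier in the paper: taking $\ket{l}$ and $\ket{r}$ to be ground states of $H_L$ and $H_R$ respectively, the product state $\ket{l}\otimes\ket{r}$ has total energy $\epsilon_L+\epsilon_R+\braopket{l,r}{H_M}{l,r}$, which is at most $\epsilon_L+\epsilon_R+1$ by the operator-norm bound $\|H_M\|\le 1$. Since this product state is an ansatz for the true ground energy, we conclude $\epsilon_0\le \epsilon_L+\epsilon_R+1$, i.e.\ $\epsilon_0-\epsilon_L-\epsilon_R\le 1$. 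Substituting, every admissible left-energy lies in an interval of length at most $1+\Delta E$, so two such left-energies differ by no more than $1+\Delta E$, completing the proof.

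I do not anticipate a genuine obstacle: all ingredients (positivity of $H_L,H_R$, the normalisation $0\le H_i\le\one$, and the product-state ansatz) are already in hand, and the bound is essentially tight because the only slack introduced is the unit-norm bound on the single crossing term $H_M$. The only mildly subtle point worth flagging explicitly is that the argument is symmetric in the two states, so it is cleanest to prove the stronger statement that the left-energy lies in a fixed interval of width $1+\Delta E$ and then deduce the pairwise bound as an immediate corollary.
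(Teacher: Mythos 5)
Your proof is correct, but it takes a genuinely different route from the paper's. You pin the left-energy of \emph{every} admissible state into the fixed interval $[\epsilon_L,\,\epsilon_0+\Delta E-\epsilon_R]$, whose width is controlled by the single inequality $\epsilon_0\le\epsilon_L+\epsilon_R+1$ coming from the product ansatz $\ket{l}\otimes\ket{r}$ built from the half-chain \emph{ground states}; the pairwise bound then follows by the triangle inequality. The paper instead compares the two states directly: it extracts a left Schmidt vector of the state with the smaller left-energy and a right Schmidt vector of the other state (each such vector exists with energy at most the corresponding marginal average, since the marginal energy is a convex combination over Schmidt vectors), glues them into a hybrid product state, and lower-bounds its energy by $\epsilon_0$. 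Your argument is more elementary --- it avoids the Schmidt-vector extraction step entirely and yields the cleaner intermediate statement that each low-energy state's left-energy lies in a universal interval of width $1+\Delta E$ --- while the paper's cross-construction is tailored to the specific pair of states and makes the mechanism behind the tightness example (the perturbed Ising chain) more transparent. Both arrive at the same bound, and your uniform-interval formulation is arguably the more reusable of the two. One cosmetic remark: your justification that $H_R\succeq\epsilon_R\one$ ``since $H_R$ is a sum of positive semidefinite operators'' is redundant --- the operator inequality holds simply because $\epsilon_R$ is by definition the least eigenvalue of $H_R$.
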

\begin{proof}
	Suppose we have two states $\ket{v_1}$ and $\ket{v_2}$ with local-energies given in \cref{apptab:energies}.
	\bgroup
	\renewcommand\tabcolsep{9pt}
	\begin{table}[th!]
		\centering
		\begin{tabular}{ccccc}
			\toprule
			State & $\left<H\right>$ & $\left<H_L\right>$ & $\left<H_M\right>$ & $\left<H_R\right>$\\ \midrule
			$\ket{v_1}$ & $E$ & $L$ & $E-L-R-\Delta R$ & $R+\Delta R$\\
			$\ket{v_2}$ & $E$ & $L+\Delta L$ & $E-L-R-\Delta L $ & $R$\\
			\bottomrule
		\end{tabular}
		\caption{Local energies of the two ground states across a bipartition.}
		\label{apptab:energies}
	\end{table}
	\egroup
	where $E=E_0+\Delta E$, and $E_0$ is the ground energy, and we take $\Delta L\geq0$ without loss of generality. Using the positivity of $H_M$ on the expectations for $\ket{v_2}$ gives
	\[L+R\leq E-\Delta L\,.\]
	As $\ket{v_1}$ has a left-energy $L$, it must possess at least one left Schmidt vector $\ket{l}$ with a left-energy at most $L$, similarly $\ket{v_2}$ has a right Schmidt vector $\ket{r}$ with right-energy at most $R$. As such we can construct a state 
	$\ket{v}=\ket{l}\otimes\ket{r}$
	with energy upper bounded
	\begin{align*}
	\braopket{v}{H}{v}
	&=\braopket{v}{H_L}{v}+\braopket{v}{H_M}{v}+\braopket{v}{H_R}{v}\\
	&=\braopket{l}{H_L}{l}+\braopket{v}{H_M}{v}+\braopket{r}{H_R}{r}\\
	&\leq L+1+R\\
	&\leq E+1-\Delta L\,.
	\end{align*}
	As $\braopket{v}{H}{v}\geq E_0$, we get that in general the difference between the left-energies is bounded
	\[\Delta L\leq 1+\Delta E\,.\]
A simple example of a perturbed Ising model shows that this bound is tight.
\end{proof}

\end{document}